\newif\iflongversion
\newcommand{\ud}{\mathrm{d}}
\newcommand{\Vc}{\mathcal{V}}
\newcommand{\Sc}{\mathcal{S}}
\newcommand{\Pc}{\mathcal{P}}
\newcommand{\Rc}{\mathcal{R}}
\newcommand{\Rb}{\mathbb{R}}
\newcommand{\Nb}{\mathbb{N}}
\newcommand{\Rf}{\mathfrak{R}}
\newcommand{\Lp}{\left (}
\newcommand{\Rp}{\right )}
	\newcommand*{\envelope}[1]{\lenvelope{#1}\renvelope}
\newcommand{\rref}[2][]{\prettyref{#2}}
\newcommand{\mutex}{\texorpdfstring{\ensuremath{\mu}\xspace}{Mu\xspace}}
\author{Noah Abou El Wafa}
\email{nabouelw@andrew.cmu.edu}
\affiliation{
  \institution{Carnegie Mellon University}
  \streetaddress{5000 Forbes Avenue}
  \city{Pittsburgh}
  \state{Pennsylvania}
  \country{USA}
  \postcode{15213}
}
\author{Andr\'e Platzer}
\email{aplatzer@cs.cmu.edu}
\affiliation{
  \institution{Carnegie Mellon University}
  \streetaddress{5000 Forbes Avenue}
  \city{Pittsburgh}
  \state{Pennsylvania}
  \country{USA}
  \postcode{15213}
}
\title{First-Order Game Logic and Modal \mutex-Calculus}
\newcommand{\dLmu}{\ensuremath{\ud\mathsf{L}\mu{}}\xspace}
\newcommand{\dLmusubst}{{\ud\mathsf{L}\mu}}
\renewcommand{\domain}[1]{|#1|}
\renewcommand{\subst}[3][]{#1\tfrac{#3}{#2}}
\renewcommand{\modif}[3]{#1\tfrac{#3}{#2}}
\renewcommand{\models}{\vDash}
\renewcommand{\ltrue}{\top}%
\renewcommand{\lfalse}{\bot}%
\renewcommand*{\infers}[1][]{\vdash_{\scriptscriptstyle#1}}
\newcommand{\transposepv}[3]{{#1}^{#3}_{#2}}
\newcommand{\transposestate}[3]{{#1}^{#3}_{#2}}
\newcommand{\transposeterm}[3]{{#1}^{#3}_{#2}}
\newcommand{\transposefml}[3]{{#1}^{#3}_{#2}}
\newcommand{\transposeact}[3]{{#1}^{#3}_{#2}}
\newcommand{\Act}{\mathrm{Act}}
\newcommand{\structA}{\mathfrak{A}}
\newcommand{\Lmu}[1][\Lc]{\ensuremath{#1\mu{}}\xspace}
\newcommand{\Lmucalc}[1][\Lc]{\ensuremath{#1\mu}\xspace}
\newcommand{\LmuV}[1][\Lc]{\ensuremath{#1\mu{[\Vc]}}\xspace}
\newcommand{\GL}[1][\Lc]{\ensuremath{\mathrm{G}#1{}}\xspace}
\newcommand{\GLV}[1][\Lc]{\ensuremath{\mathrm{G}#1[\Vc]}\xspace}
\newcommand{\GLcalc}[1][\Lc]{\ensuremath{\mathrm{G}#1}\xspace}
\newcommand{\dem}{\mathrm{d}}
\newcommand{\demrep}{\times}
\renewcommand*{\pdual}[1]{#1^\dem}
\newcommand*{\envelopemu}[1]{\envelope{#1}_{\scriptscriptstyle\mu}}
\newcommand*{\envelopegl}[1]{\envelope{#1}_{\scriptscriptstyle\mathrm{GL}}}
\newcommand{\substgl}[1][\Lc]{_{_{\mathrm{G}#1}}}
\DeclareMathOperator{\FOL}{FO}
\DeclareMathOperator{\FOLL}{Lit}
\newcommand{\Xc}{\mathcal{X}}
\newcommand{\Lc}{\mathrm{L}}
\newcommand{\Lcf}{\mathcal{L}}
\newcommand{\stdlit}{p}
\newcommand{\stdtr}{a}
\newcommand{\stdterm}{\theta}
\newcommand{\stdasst}{\Omega}
\newcommand{\stdstate}{\omega}%
\newcommand{\newstate}{\nu}%
\newif\ifonectrlvar
\newcommand{\ctrlvar}[1]{\mathfrak{c}}
\newcommand{\ctrlval}[1]{\mathfrak{c}_{#1}}
\newcommand{\ctrlvar}[1]{x_{#1}}
\newcommand{\ctrlval}[1]{1}
\newcommand{\setcval}[1]{\ctrlvar{#1}:=\ctrlval{#1}}
\newcommand{\testcval}[1]{\ptest{\ctrlvar{#1}=\ctrlval{#1}}}
\newcommand{\neqcval}[1]{{\ctrlvar{#1}\neq\ctrlval{#1}}}
\newcommand{\eqcval}[1]{{\ctrlvar{#1}=\ctrlval{#1}}}
\newcommand{\testcvalnot}[1]{\ptest{\ctrlvar{#1}\neq\ctrlval{#1}}}
\newcommand{\rank}{\mathrm{rk}}
\definecolor{darkishgray}{rgb}{.35,.35,.35}
\definecolor{vvblue}{rgb}{.14,.21,.868}%
\renewcommand{\linferenceRuleNameSeparation}{~~~~}
\newcommand{\ovariable}{object variable\xspace} %
\newcommand{\ovariables}{object variables\xspace}
\newcommand{\sovariable}{variable\xspace} %
\newcommand{\Ovariables}{Object variables\xspace}
\newcommand{\OVariables}{Object Variables\xspace}
\newcommand{\pvariable}{propositional variable\xspace} %
\newcommand{\pvariables}{propositional variables\xspace}
\newcommand{\Pvariable}{Propositional variable\xspace}
\newcommand{\Pvariables}{Propositional variables\xspace}
\newcommand{\PVariables}{Propositional Variables\xspace}
\newcommand{\assignment}{valuation\xspace}
\newcommand{\assignments}{valuations\xspace}
\begin{document}

\begin{abstract}
This paper investigates first-order game logic and first-order modal $\mu$-calculus, which extend their propositional modal logic counterparts with first-order modalities of interpreted effects such as variable assignments.
Unlike in the propositional case, both logics are shown to have the same expressive power and their proof calculi to have the same deductive power.
Both calculi are also mutually relatively complete.

In the presence of differential equations, corollaries obtain usable and complete translations between differential game logic, a logic for the deductive verification of hybrid games, and the differential $\mu$-calculus, the modal $\mu$-calculus for hybrid systems.
The differential $\mu$-calculus is complete with respect to first-order fixpoint logic
and differential game logic is complete with respect to its ODE-free fragment.
\end{abstract}

\begin{CCSXML}
<ccs2012>
   <concept>
       <concept_id>10003752.10003790.10003793</concept_id>
       <concept_desc>Theory of computation~Modal and temporal logics</concept_desc>
       <concept_significance>500</concept_significance>
       </concept>
   <concept>
       <concept_id>10003752.10003790.10003792</concept_id>
       <concept_desc>Theory of computation~Proof theory</concept_desc>
       <concept_significance>500</concept_significance>
       </concept>
   <concept>
       <concept_id>10003752.10003753.10003765</concept_id>
       <concept_desc>Theory of computation~Timed and hybrid models</concept_desc>
       <concept_significance>500</concept_significance>
       </concept>
   <concept>
       <concept_id>10003752.10003790.10003806</concept_id>
       <concept_desc>Theory of computation~Programming logic</concept_desc>
       <concept_significance>500</concept_significance>
       </concept>
 </ccs2012>
\end{CCSXML}

\ccsdesc[500]{Theory of computation~Modal and temporal logics}
\ccsdesc[500]{Theory of computation~Proof theory}
\ccsdesc[500]{Theory of computation~Timed and hybrid models}
\ccsdesc[500]{Theory of computation~Programming logic}
\keywords{%
game logic, \mutex-calculus, proof theory, completeness, expressiveness, hybrid games, differential equations
}

\maketitle

\section{Introduction}

Modal $\mu$-calculus \cite{ScottBakker69,DBLP:journals/tcs/Kozen83} adds to propositional modal logic
 (with modalities as in the formula $\ddiamond{\stdtr}{\varphi}$)
the least fixpoint operator $\mu$, where $\lfp{X}{\varphi}$ is true in the smallest $X$ such that $X=\varphi(X)$.
Dually, the greatest fixpoint $\gfp{X}{\varphi}$ is the largest such $X$.
Modal $\mu$-calculus is notoriously hard to read but gives powerful engines, e.g., for CTL and CTL$^*$ \cite{DBLP:journals/tcs/Dam94}, because many computations correspond to alternating fixpoints.

Game logic \cite{DBLP:conf/focs/Parikh83} augments propositional dynamic logic PDL \cite{DBLP:journals/jcss/FischerL79,Segerberg77} with a duality operator $\pdual{}$ that switches between the players of a two player game, where formula \(\ddiamond{\gamma}{\varphi}\) means that the Angel player has a winning strategy in game $\gamma$ to make formula $\varphi$ true.
Game logic separates existence of winning strategies for games from strategy constructions as justifications in proofs.
Game logic formulas are easy to read because of their direct operational intuition of game play.

Three decades later, Parikh's problem has been solved: Game logic is less expressive than the modal $\mu$-calculus \cite{DBLP:journals/mst/BerwangerGL07}, because it embeds into the two-variable fragment of $\mu$-calculus whose variable hierarchy is strict.
Completeness of the axiomatization for game logic was shown recently \cite{DBLP:conf/lics/EnqvistHKMV19} based on cut-free completeness for the modal $\mu$-calculus~\cite{DBLP:conf/lics/AfshariL17}.

While these results about the propositional modal logic setting are exciting, this paper goes beyond the propositional case of abstract actions $a, b, c$ of unknown effect and considers first-order modalities with interpreted effects (such as assignments $\pupdate{\pumod{x}{\stdterm}}$ to object variables).
First-order modalities like in first-order dynamic logic are crucial for representing programs \cite{DBLP:conf/focs/Pratt76,Harel_1979,Harel_et_al_2000} and dynamical systems \cite{Platzer18}.

This paper shows that Parikh's problem has the opposite answer in the first-order case: first-order game logic and first-order modal $\mu$-calculus have the same expressive power, their calculi have the same deductive power and are mutually relatively complete.
Consequently the variable hierarchy of the modal $\mu$-calculus over first-order structures collapses at the second stage.
Beyond theoretical appeal, these results show that it is possible to have the best of both worlds, the readability and clear intuition of game logic
and the syntactical simplicity of the modal $\mu$-calculus.

The difficulty when proving properties about game logic are its more complicated game modalities.
First-order modal $\mu$-calculus in contrast
only has atomic modalities, which can be dealt with more easily.
An instance of this phenomenon is 
proving the completeness of a theory of the modal $\mu$-calculus,
relative to a fragment with fewer kinds of atomic modalities.
For the modal $\mu$-calculus this can be done schematically.
In view of the equi-expressivity, relative completeness for extensions
of first-order game logic can also be established schematically
by eliminating one atomic modality at a time, without dealing with games of complicated structure.

When extended with modalities for differential equations, the resulting differential $\mu$-calculus for hybrid systems is compared to differential game logic \dGL for hybrid games \cite{DBLP:journals/tocl/Platzer15} and shown to have the same expressive power and (their proof calculi to have) the same deductive power and are mutually relatively complete.
Using the expressive power of the first-order modal $\mu$-calculus,
an axiom is presented that equivalently characterizes differential equation modalities as a greatest fixpoint.
The schematic relative completeness result yields the equi-expressiveness and relative completeness of the differential $\mu$-calculus 
and its differential-equation-free fragment.
This carries over to differential game logic.

The contributions of this paper are threefold.
Firstly, a proof shows equi-expressivity of the modal $\mu$-calculus and game logic in the first-order case.
This translation is natural and useful 
for meta\-logical investigations of game logic on first-order structures and interpreted variants.
Secondly, translations between sound proof calculi for the first-order modal $\mu$-calculus and first-order game logic are shown to respect provability.
Thirdly, two interpreted variants of first-order game logic and first-order modal $\mu$-calculus for modeling and reasoning about hybrid dynamics with differential equations are considered: differential game logic and differential $\mu$-calculus.
Via the differential $\mu$-calculus and a characterization of continuous reachability
as a greatest fixpoint, the discrete relative completeness theorem for differential dynamic logic \cite{DBLP:conf/lics/Platzer12b} is extended to differential game logic.

\iflongversion
\rref{sec:preliminaries} recalls some basic facts about fixpoints of
functions between sets.
In \rref{sec:structures} the syntax of the first-order modal $\mu$-calculus
and first-order game logic are introduced and their interpretations
in the first-order sense are described.
Equi-expressivity is proved in \rref{sec:secequiexpressive}.
Subsequently in \rref{sec:proofcalculi} proof calculi for both logics are introduced and proven to be equivalent.
A schematic relative completeness result is also presented.
Finally in \rref{sec:diffeq} the differential $\mu$-calculus is introduced,
related to differential game logic and both are proved to be complete relative to their differential-equation-free fragments.
\fi

\section{Preliminaries}
\label{sec:preliminaries}

This section recalls basic facts about fixpoints of monotone functions \cite{Harel_et_al_2000}.
A function $\Gamma:\Pc(X)\to\Pc(X)$ on the power set $\Pc(X)$ of $X$ is \emph{monotone}
if $\Gamma(A)\subseteq \Gamma(B)$ for sets $A\subseteq B$.
A set $A$ is called a \emph{pre-fixpoint} of $\Gamma$ iff $\Gamma(A)\subseteq A$
and a \emph{post-fixpoint} of $\Gamma$ iff $A\subseteq \Gamma(A)$.
If $\Gamma(A) = A$, then $A$ is a \emph{fixpoint} of $\Gamma$.
A pre-fixpoint $A$ is the \emph{least pre-fixpoint} of $\Gamma$ iff there is no proper subset of
$A$ which is a pre-fixpoint of $\Gamma$.
Dually a post-fixpoint $A$ is the \emph{greatest post-fixpoint} of $\Gamma$ iff there is no proper superset of $A$ which is a post-fixpoint of $\Gamma$.

If $\Gamma:\Pc(X) \to\Pc(X)$ is monotone, then
$\lfp{X}{\Gamma} = \capfold\{D\subseteq X : \Gamma(D)\subseteq D\}$
is the \emph{unique} least pre-fixpoint
and $\gfp{X}{\Gamma} =\cupfold\{D\subseteq X : D\subseteq \Gamma(D)\}$
the \emph{unique} greatest post-fixpoint.
Least pre-fixpoints and greatest post-fixpoint are fixpoints.

\section{Structures, Syntax and Semantics}
\label{sec:structures}

This paper is concerned with the modal $\mu$-calculus
and game logic, when interpreted on the usual first-order domains of quantification.
As usual a first-order signature refers to a sequence
of constant symbols, function symbols and predicate symbols.
Transition symbols $\Act$ are used for modalities. 

\begin{definition}
	A \emph{signature} $\Lc$ is a pair $(\Lcf,\Act)$ consisting of a first-order signature $\Lcf$ and
	a set of transition symbols~$\Act$.
\end{definition}

The logics in this paper involve two kinds of variables.
\emph{\Ovariables}, typeset in lowercase, denote objects in the domain as in first-order logic. Their values depend on the state, which determines the values of the \ovariables.
Once and for all fix an infinite set $\Xc$ of \ovariables.

\emph{\Pvariables}, typeset in uppercase, denote truth values. Their truth values depend on the state.
Thus, their truth value may change as the values of \ovariables changes.
For example, $x=1\lbisubjunct X$ may hold in
some state and some \assignment of the \pvariables.
\Pvariables will be used to form fixpoint constructs.
Once and for all fix a collection $\Vc$ of \pvariables.
An \ovariable in $\Xc$ is also just called \sovariable while elements of $\Vc$ are always referred to as \pvariables.

Write $\FOL_{\Lcf}$
for the collection of first-order formulas with equality, in the signature $\Lcf$ with \ovariables in the fixed set $\Xc$.
The set of literals $\FOLL_{\Lcf}$ consists of all atomic first-order formulas in $\FOL_{\Lcf}$ and their negations.

Given a first-order $\Lcf$-structure $\structA$, a \emph{state} $\stdstate$ is a function $\stdstate: \Xc\to\domain{\structA}$ where $\domain{\structA}$ is the domain of quantification of $\structA$.
The set of states is denoted $\Sc$ when the structure is clear from context.
Given a structure $\structA$ and a state $\stdstate\in \Sc$,
the semantics of an $\Lcf$-term $\stdterm$ is defined as their value $\stdstate\envelope{\stdterm}\in\domain{\structA}$.
Write $\structA,\stdstate\models \varphi$ when the $\Lcf$-formula $\varphi$ is true in $\structA$ 
and the state $\stdstate$.

It is frequently convenient to have a notation for the state obtained from another
state by modifying the value of a single \ovariable of that state.
The state $\modif{\stdstate}{x}{e}$ is the \emph{modification} of state $\stdstate$ at \ovariable $x\in\Xc$ to $e \in \domain{\structA}$ and coincides with state $\stdstate$ on $\Xc\setminus\{x\}$ except that \(\modif{\stdstate}{x}{e}(x)=e\).

\begin{definition}
	Let $\Lc= (\Lcf, \Act)$ be a signature.
	An \emph{$\Lc$-structure} $\structA$ is a first-order $\Lcf$-structure with domain $\domain{\structA}=A$ (interpreting constant symbols as objects, function symbols as functions, predicate symbols as relations)
	with an accessibility relation $\structA\envelope{\stdtr}\subseteq \Sc\times \Sc$ on states $\Sc$ for each $\stdtr\in \Act$.
\end{definition}
The relation $\structA\envelope{\stdtr}$ describes which final state $\newstate$ is reachable by $\stdtr$ from which initial state $\stdstate$, written \((\stdstate,\newstate)\in\structA\envelope{\stdtr}\). 

\subsection{Syntax and Semantics of First-Order Modal \mutex-Calculus}

Let $\Lc = (\Lcf,\Act)$ be a signature.
To avoid technicalities with polarities in $\mu$-calculus, every \pvariable $X\in\Vc$ is assumed to have a fresh \pvariable $\overline{X}\notin \Vc$ that will be used for logical complements.
Further $\overline{\overline{X}}$ is identified with $X$.
This identification is frequently used implicitly.
Let $\overline{\Vc} = \Vc\cup\{\overline{X} : X\in \Vc\}$.

\begin{definition} \label{def:Lmu-syntax}
The formulas of (the logic called) \emph{first-order modal $\mu$-calculus \LmuV} are defined by the grammar:
	\begin{IEEEeqnarray*}{rClr}
		\varphi &::=& \stdlit \mid X \mid \varphi_1\lor\varphi_2 \mid \varphi_1\land \varphi_2 \mid \ddiamond{\stdtr}{\varphi} \mid \dbox{\stdtr}{\varphi} \mid \lfp{X}{\psi} \mid \gfp{X}{\psi}
	\end{IEEEeqnarray*}
where\footnote{Only first-order \emph{literals} are allowed for $\stdlit$ to avoid ambiguous parsing.}
$\stdlit \in \FOLL_{\Lcf}$,
$X\in \overline\Vc$, $\stdtr\in \Act_\Lc$
and $\psi$ is an \Lmu-formula not mentioning $\overline{X}$ (to ensure the existence of fixpoints).
\end{definition}

The formula \(\dbox{\stdtr}{\varphi}\) expresses that $\varphi$ holds in all states reachable by $\stdtr$ from the current state.
Dually, \(\ddiamond{\stdtr}{\varphi}\) expresses that $\varphi$ holds in at least one such state.
The formula \(\lfp{X}{\varphi}\) is true in the states forming the least fixpoint $X$ of $X=\varphi(X)$.
Dually, \(\gfp{X}{\varphi}\) is true in the greatest fixpoint $X$ of $X=\varphi(X)$.

An occurrence of $X\in\overline\Vc$ is \emph{free} in $\varphi$ iff it does not occur in 
the scope of a least or greatest fixpoint operator binding~$X$.
Write \Lmu for the set of \LmuV{}-formulas without free \pvariables.

All modalities and fixpoint operators bind short.
For example, \(\ddiamond{\stdtr}{\varphi} \lor \psi\) is \((\ddiamond{\stdtr}{\varphi}) \lor \psi\) and \(\lfp{X}{\varphi} \lor \psi\) is \((\lfp{X}{\varphi}) \lor \psi\).
Note that barred \pvariables can be bound as well.
For example one may write $\lfp{\overline{X}}{(\stdlit\lor\overline{X})}$.
However  $\lfp{\overline{X}}{(\stdlit \lor {X})}$ is not a well-formed formula, 
since $X$ may not be mentioned in the scope of a fixpoint operator binding $\overline{X}$.

The semantics of an \LmuV-formula is the set of all states in $\Sc$ in which the formula is true.
This depends on the truth value assigned to the \pvariables,
which in turn depends on the state and changes in the scope of fixpoint operators.
To make this dependency explicit, the semantics 
uses a \emph{\assignment}
i.e.\ a function $\stdasst: \Vc \to \Pc(\Sc)$.
Given a \assignment $\stdasst:\Vc\to \Pc(\Sc)$, an 
\ovariable $X\in \overline{\Vc}$ and a set $E\subseteq \Sc$, the \emph{modified \assignment} $\modif{\stdasst}{X}{E}:\Vc\to\Pc(\Sc)$ 
is
	\[\modif{\stdasst}{X}{E}(Z) = \begin{cases}
		E&\text{if $Z = X, X\in \Vc$}\\
		\Sc\setminus\stdasst(E) &\text{if $Z = \overline{X}$, $X\in\overline\Vc\setminus\Vc$}\\
		\stdasst(Z) &\text{otherwise.}
	\end{cases}\]
Although $\stdasst$ is only defined for variables in $\Vc$, it is extended to a map on $\overline{\Vc}$ 
by ${\stdasst}(\overline{X})=\Sc\setminus\stdasst(X)$.

\begin{definition} \label{def:Lmu-semantics}
For an $\Lc$-structure $\structA$ and a \assignment $\stdasst:\Vc\to \Sc$
the \emph{denotational semantics} for \LmuV-formula $\varphi$ is recursively defined
as a subset $\stdasst\envelopemu{\varphi}\subseteq \Sc$ as follows:
\begin{enumerate}
	\item $\stdasst\envelopemu{\stdlit} =\{\stdstate \in \Sc : \structA,\stdstate \models \stdlit\}$ (for $\stdlit \in \FOLL_{\Lcf}$)
	\item $\stdasst\envelopemu{X} = \stdasst(X)$ (for $X\in \overline\Vc$)
	 \item $\stdasst\envelopemu{\varphi_1\lor\varphi_2} = \stdasst\envelopemu{\varphi_1} \cup \stdasst\envelopemu{\varphi_2}$
	 \item $\stdasst\envelopemu{\varphi_1\land\varphi_2} = \stdasst\envelopemu{\varphi_1} \cap \stdasst\envelopemu{\varphi_2}$
	 \item $\stdasst\envelopemu{\ddiamond{\stdtr}{\varphi}} =\{\stdstate\in \Sc {\with} \mexists{(\stdstate,\newstate)\in \structA\envelope{\stdtr}} \newstate\in \stdasst\envelopemu{\varphi}\}$
	 \item $\stdasst\envelopemu{\dbox{\stdtr}{\varphi}} = \{\stdstate\in \Sc {\with}\mforall{(\stdstate,\newstate)\in \structA\envelope{\stdtr}} \newstate\in \stdasst\envelopemu{\varphi}\}$
	 \item $\stdasst\envelopemu{\lfp{X}{\varphi}} = \capfold\{D\subseteq \Sc : \stdasst\tfrac{D}{X}\envelopemu{\varphi} \subseteq D\}$
	 \item $\stdasst\envelopemu{\gfp{X}{\varphi}}= \cupfold\{D\subseteq \Sc : D\subseteq \stdasst\tfrac{D}{X}\envelopemu{\varphi} \}$
\end{enumerate}
\end{definition}

\noindent
The semantics depends not only on \assignments $\stdasst$,
but also on the $\Lc$-structure $\structA$.
The synonymous notation $(\structA,\stdasst)\envelopemu{\varphi}$ is used when that dependency is important.
An \LmuV-formula is \emph{valid} iff $(\structA,\stdasst)\envelopemu{\varphi}=\Sc$ for all $\Lc$-structures $\structA$ and all \assignments $\stdasst$.
If $\varphi$ is an \Lmu-formula, i.e. it does not contain free \pvariables,
the semantics is independent of $\stdasst$ and written $\structA\envelopemu{\varphi}$.

Let $\varphi,\psi$ be \Lmu-formulas and $X\in\overline\Vc$ a \pvariable.
The formula $\subst[\varphi]{X}{\psi}$ obtained 
from $\varphi$ by replacing all free occurrences of $X$ with $\psi$ is defined as usual (\rref{app:substitutionpvars}).

Negation is \emph{not} part of the grammar of \Lmu, but negation $\overline{\varphi}$ is definable for any \Lmu formula $\varphi$ as an abbreviation:
\begin{IEEEeqnarray*}{rCllrCll}
	\overline{\stdlit} &\equiv& 
	\lnot \stdlit &(\stdlit\in\FOLL_{\Lcf})~
	& 
	\overline{\overline{X}} &\equiv& {X} \quad&(X\in\overline{\Vc})
	\\
	\overline{\varphi_1\lor\varphi_2} &\equiv& 
	\overline{\varphi_1}\land\overline{\varphi_2}
	&&
	\overline{\varphi_1\land\varphi_2} &\equiv& 
	\overline{\varphi_1}\lor\overline{\varphi_2}
	\\
	\overline{\ddiamond{\stdtr}{\varphi}} &\equiv& \dbox{\stdtr}{\overline{\varphi}}
	&&
	\overline{\dbox{\stdtr}{\varphi}} &\equiv& \ddiamond{\stdtr}{\overline{\varphi}}
	\\
	\overline{\lfp{X}{\varphi}} &\equiv& 
	\gfp{\overline X}{\overline{\varphi}}
	&&
	\overline{\gfp{X}{\varphi}} &\equiv& 
	\lfp{\overline X}{\overline{\varphi}}
\end{IEEEeqnarray*}
Note $\gfp{\overline X}{\overline{\varphi}}$ is still a well-formed \Lmu-formula: $X$ does not occur in $\overline{\varphi}$, because $\overline{X}$ does not occur in $\varphi$ since $X$ is bound in $\lfp{X}{\varphi}$.

Indeed, the semantics of $\overline{\varphi}$ corresponds to negation:%

\begin{theoremEnd}{proposition}\label{prop:negation}
	For any \LmuV-formula $\varphi$ and \assignment~$\stdasst$
	$$\stdasst\envelopemu{\overline{\varphi}}=\Sc\setminus {\stdasst}\envelopemu{{\varphi}}.$$
\end{theoremEnd}

\begin{proofEnd}
	The proof is by structural induction on the formula.

\begin{inparaitem}[\noindent- \emph{Case:}]
\item $\varphi \in \FOLL_\Lcf$: the statement is immediate by the definition of the semantics of first-order formulas.

\item $\varphi$ is a \pvariable $X\in \overline{\Vc}$ then $\stdasst\envelopemu{\overline{X}}=\stdasst(\overline{X}) = \Sc\setminus \stdasst(X)
	= \Sc\setminus\stdasst\envelopemu{X}$.

\item disjunction and conjunction are immediate by the induction hypothesis and De Morgan's laws for sets.

\item \(\ddiamond{\stdtr}{\varphi}\) of diamond modalities:
	\begin{IEEEeqnarray*}{rCl}
		\stdasst\envelopemu{\overline{\ddiamond{\stdtr}{\varphi}}}
		&=&\stdasst\envelopemu{{\dbox{\stdtr}{\overline{\varphi}}}}\\
		&=&\{\stdstate\in \Sc : \mforall{(\stdstate,\newstate)\in \structA\envelope{\stdtr}} \newstate\in \stdasst\envelopemu{\overline{\varphi}}\}\\
		&=&\{\stdstate\in \Sc : \mforall{(\stdstate,\newstate)\in \structA\envelope{\stdtr}} \newstate\in \Sc\setminus\stdasst\envelopemu{{\varphi}}\}\\
		&=&\Sc \setminus\{\stdstate\in \Sc : \mexists{(\stdstate,\newstate)\in \structA\envelope{\stdtr}} \newstate\in \stdasst\envelopemu{{\varphi}}\}\\
		&=& \Sc\setminus\stdasst\envelopemu{{\ddiamond{\stdtr}{\varphi}}}
	\end{IEEEeqnarray*}

\item analogously for box modalities.

\item \(\lfp{X}{\varphi}\) of the least fixpoint operator:
	\begin{IEEEeqnarray*}{rCl}
		\stdasst\envelopemu{\overline{\lfp{X}{ \varphi}}}
		&=&\stdasst\envelopemu{\gfp{\overline{X}}{\overline{\varphi}}}\\
		&=&\bigcup\{D\subseteq \Sc : D\subseteq\stdasst\tfrac{D}{\overline{X}}\envelopemu{\overline{\varphi}} \}\\
		&=&\bigcup\{D\subseteq \Sc : D\subseteq\Sc\setminus\stdasst\tfrac{\Sc \setminus D}{{X}}\envelopemu{{\varphi}} \}\\
		&=&\Sc\setminus\bigcap\{D\subseteq \Sc : \stdasst\tfrac{ D}{{X}}\envelopemu{{\varphi}} \subseteq D\}\\
		&=& \Sc\setminus\stdasst\envelopemu{{\lfp{X}{\varphi}}}
	\end{IEEEeqnarray*}
	The third equality is by definition of the modified \assignment and the induction hypothesis.

\item analogously for the $\nu$-operator.
\end{inparaitem}
\end{proofEnd}

As usual $\varphi_1 \limply \varphi_2$ is short for $\overline{\varphi_1}\lor\varphi_2$
and $\varphi_1 \lbisubjunct \varphi_2$ short for $(\varphi_1 \limply \varphi_2) \land (\varphi_2 \limply \varphi_1)$.
Also $\ltrue$ is \(p\lor\overline{p}\) and $\lfalse$ is \(p\land\overline{p}\).

Knaster-Tarski's fixpoint theorem \cite{DBLP:journals/pjm/Tarski55} guarantees the existence of fixpoints because any formula $\varphi$ that occurs in \(\lfp{X}{\varphi}\) or \(\gfp{X}{\varphi}\) has a semantics that is monotone in $X$.

\begin{theoremEnd}{lemma}[\LmuV monotonicity] \label{lem:monomu}
	The map $D\mapsto \modif{\stdasst}{X}{D}\envelopemu{\varphi}$ is monotone for all
	$X\in \overline{\Vc}$, all \assignments $\stdasst$
	and all \LmuV-formulas~$\varphi$ not mentioning $\overline{X}$.
\end{theoremEnd}
\begin{proofEnd}
	The proof is by structural induction on the formula $\varphi$.
	
\begin{inparaitem}[\noindent- \emph{Case:}]
\item if $\varphi$ is in $\FOLL_\Lcf$ or a \pvariable
	other than $X$ and $\overline{X}$,
	then the map is constant and monotonicity is trivial.

\item
	if $\varphi$ is $X$, the map is $D\mapsto \stdasst\tfrac{D}{X}\envelopemu{\varphi} = D$.

\item
	For disjunctions consider $D_1\subseteq D_2$ and note
	\begin{IEEEeqnarray*}{rCl}
		\stdasst\tfrac{D_1}{X}\envelopemu{\varphi_1\lor\varphi_2}
		&=& \stdasst\tfrac{D_1}{X}\envelopemu{\varphi_1}
		\cup \stdasst\tfrac{D_1}{X}\envelopemu{\varphi_2}\\
		&\subseteq&
		\stdasst\tfrac{D_2}{X}\envelopemu{\varphi_1}
		\cup \stdasst\tfrac{D_2}{X}\envelopemu{\varphi_2}\\
		&=& \stdasst\tfrac{D_2}{X}\envelopemu{\varphi_1\lor\varphi_2}.
	\end{IEEEeqnarray*}

\item Similarly for conjunctions.

\item
	For diamond modalities \(\ddiamond{\stdtr}{\varphi}\) consider $D_1\subseteq D_2$. Then
	\begin{IEEEeqnarray*}{rCl}
		\stdasst\tfrac{D_1}{X}\envelopemu{\ddiamond{\stdtr}{\varphi}}
		&=& \{\stdstate\in \Sc : \mexists{(\stdstate,\newstate)\in \structA\envelope{\stdtr}} \newstate\in \stdasst\tfrac{D_1}{X}\envelopemu{\varphi}\}\\
		&\subseteq& \{\stdstate\in \Sc : \mexists {(\stdstate,\newstate)\in \structA\envelope{\stdtr}} \newstate\in \stdasst\tfrac{D_2}{X}\envelopemu{\varphi}\}\\
		&=& \stdasst\tfrac{D_2}{X}\envelopemu{\ddiamond{\stdtr}{\varphi}}
	\end{IEEEeqnarray*}

\item Similarly for box modalities.

\item
	For least fixpoint formulas $\lfp{Y}{\varphi}$ there are
	two cases. If $Y\in\{X,\overline{X}\}$ the map is constant so there is nothing to show.

	If $Y\notin\{X,\overline{X}\}$,
	consider $D_1\subseteq D_2$ and note
	\begin{IEEEeqnarray*}{rCl}
		\stdasst\tfrac{D_1}{X}\envelopemu{\lfp{Y}{\varphi}}
		&=& \bigcap\{E\subseteq \Sc : \stdasst\tfrac{D_1}{X}\tfrac{E}{Y}\envelopemu{\varphi} \subseteq E\}\\
		&\subseteq& \bigcap\{E\subseteq \Sc : \stdasst\tfrac{D_2}{X}\tfrac{E}{Y}\envelopemu{\varphi} \subseteq E\}\\
		&=& \stdasst\tfrac{D_2}{X}\envelopemu{\lfp{Y}{\varphi}}
	\end{IEEEeqnarray*}
\item Analogously for greatest fixpoint formulas.
\end{inparaitem}
\end{proofEnd}

\rref{lem:monomu} crucially needed that
$\overline{X}$ does not occur in $\varphi$.
For example
$D\mapsto \modif{\stdasst}{X}{D}\envelopemu{\overline X}=\Sc\setminus D$
is not monotone.

\subsection{Syntax and Semantics of First-Order Game Logic}

\begin{definition} \label{def:GL-syntax}
Let $\Lc = (\Lcf,\Act)$ be a signature.
The formulas of \emph{first-order game logic} \GLV with \pvariables in $\Vc$ are defined by the grammar:
	\begin{IEEEeqnarray*}{rCl}
		\varphi &::=&  \stdlit \mid X \mid  \lnot \varphi \mid \varphi_1\lor \varphi_2 \mid \ddiamond{\gamma}{\varphi}\\
 		\gamma &::=& \stdtr \mid{} \ptest{\varphi} \mid \pchoice{\gamma_1}{\gamma_2} \mid \gamma_1;\gamma_2 \mid \prepeat{\gamma}\mid \pdual{\gamma}
	\end{IEEEeqnarray*}
where $\stdlit\in \FOLL_{\Lcf}$, $X\in \Vc$, and $\stdtr\in \Act$.
\(\GL\) is the fragment of first-order game logic
without \pvariables.
\end{definition}

In game logic negation is written $\lnot$.
To align with $\mu$-calculus $\overline{X}\in\overline\Vc$ is also used but syntactically identified with $\lnot X$.
Usually game logic is introduced without \pvariables.
The modification with \pvariables was introduced to facilitate the inductive translation between the first-order modal $\mu$-calculus
into first-order game logic.

The formula \(\ddiamond{\gamma}{\varphi}\) expresses that player Angel has a winning strategy in the game $\gamma$ to reach one of the states in which $\varphi$ is true.
The test game $\ptest{\varphi}$ is lost prematurely by Angel unless formula $\varphi$ is true in the current state.
The choice game $\pchoice{\gamma_1}{\gamma_2}$ allows Angel to choose between playing $\gamma_1$ or $\gamma_2$.
The sequential game $\gamma_1;\gamma_2$ plays $\gamma_2$ after $\gamma_1$ unless a player lost prematurely during the game $\gamma_1$.
The repetition game $\prepeat{\gamma}$ allows Angel to decide after each round of $\gamma$ whether she wants to stop or repeat.
The dual game $\pdual{\gamma}$ flips the roles of the players Angel and Demon by passing control with all choices and responsibilities of passing tests to the opponent.

Demonic choice \(\dchoice{\gamma_1}{\gamma_2}\) is definable by \(\pdual{(\dchoice{\pdual{\gamma_1}}{\pdual{\gamma_2}})}\).
Demonic repetition \(\drepeat{\gamma}\) is definable by
\(\pdual{(\prepeat{(\pdual{\gamma})})}\).
The formula \(\dbox{\gamma}{\varphi}\) expressing that player Demon has a winning strategy in the game $\gamma$ to reach one of the states satisfying $\varphi$ is definable as \(\ddiamond{\pdual{\gamma}}{\varphi}\).
Propositional connectives $\land,\limply,\lbisubjunct$ are definable.

\begin{definition} \label{def:GL-semantics}
For an $\Lc$-structure $\structA$ and \assignment $\stdasst: \Vc\to \Pc(\Sc)$ the \emph{denotation} of a \GLV-formula $\varphi$ is defined as a
subset \(\stdasst\envelopegl{\varphi}\subseteq \Sc\)
and the \emph{denotational semantics} of game $\gamma$ is defined (by simultaneous induction) as a function \(\stdasst\envelope{\gamma}:\Pc(\Sc)\to \Pc(\Sc)\).
Here \(\stdasst\envelopegl{\varphi}\) is the set of states in which $\varphi$ is true and $\stdasst\envelope{\gamma}(D)\subseteq\Sc$ is the set of states from which Angel has a winning strategy in game $\gamma$ to reach $D$.
The notation $\stdasst\envelope{\gamma}^D$ is synonymous with $\stdasst\envelope{\gamma}(D)$.

\begin{enumerate}
	\item $\stdasst\envelopegl{\stdlit} =\{\stdstate \in \Sc : \structA,\stdstate \models \stdlit\}$ (for $\stdlit \in \FOLL_{\Lcf}$)
	\item $\stdasst\envelopegl{X} = \stdasst(X)$ (for $X\in \Vc$)
	\item $\stdasst\envelopegl{\lnot \varphi} = \Sc\setminus{\stdasst}\envelopegl{\varphi}$ 
	\item $\stdasst\envelopegl{\varphi_1\lor\varphi_2} = \stdasst\envelopegl{\varphi_1} \cup \stdasst\envelopegl{\varphi_2}$
	\item $\stdasst\envelopegl{\ddiamond{\gamma}{\varphi}} =
	\stdasst\envelope{\gamma}(\stdasst\envelopegl{ \varphi})$
\end{enumerate}
For games $\gamma$ and $D\subseteq \Sc$ define:
\begin{enumerate}
	\item $\stdasst\envelope{\stdtr}^D = \{\stdstate\in \Sc : \mexists{(\stdstate,\newstate)\in \structA\envelope{\stdtr}} \newstate\in D\}$
	($\stdtr{\in}\Act$)
	\item $\stdasst\envelope{\ptest{\varphi}}^D = \stdasst\envelopegl{\varphi}\cap D$
	\item $\stdasst\envelope{\pchoice{\gamma_1}{\gamma_2}}^D =\stdasst\envelope{\gamma_1}^D \cup \stdasst\envelope{\gamma_2}^D$
	\item $\stdasst\envelope{\gamma_1;\gamma_2}^D = \stdasst\envelope{\gamma_1}(\stdasst\envelope{\gamma_2}^D)$
	\item $\stdasst\envelope{\prepeat{\gamma}}^D =\capfold\{Z\subseteq\Sc : D\cup \stdasst\envelope{\gamma}^Z \subseteq Z\}$
	\item $\stdasst\envelope{\pdual{\gamma}}^D = \Sc\setminus \stdasst\envelope{\gamma}^{\Sc\setminus D}.$
\end{enumerate}
\end{definition}

The synonymous notation $(\structA,\stdasst)\envelopegl{\varphi}$ is used when the dependency on the structure is important.
If $\varphi$ is a \GL-formula, %
the semantics is independent of $\stdasst$ and written $\structA\envelope{\varphi}\substgl$.
A \GLV-formula $\varphi$ is \emph{valid} iff $(\structA,\stdasst)\envelopegl{\varphi}=\Sc$ for all $\Lc$-structures $\structA$ and \assignments~$\stdasst$.

The demonic choice operator corresponds to intersection semantically: $\stdasst\envelope{\dchoice{\gamma_1}{\gamma_2}}^D = \stdasst\envelope{\gamma_1}^D
\cap\stdasst\envelope{\gamma_2}^D$.
The semantics of repetition \(\stdasst\envelope{\prepeat{\gamma}}^D\)
is the least fixpoint of the map \(Z\mapsto D\cup \stdasst\envelope{\gamma}^Z\).
The semantics of the definable \(\stdasst\envelope{{\gamma}^\demrep}^D\)
is the greatest fixpoint of the map 
\(Z\mapsto D \cap \stdasst\envelope{\gamma}^Z\).

The \GL counterpart to the monotonicity \rref{lem:monomu} is:

\begin{theoremEnd}{lemma}[\GL{} monotonicity]\label{lem:monotone}
		For any \GLV game $\gamma$ and 
		any \assignment $\stdasst$ the map $D\mapsto\stdasst\envelope{\gamma}^D$ is monotone.
\end{theoremEnd}
\begin{proofEnd}
	The proof is by structural induction on the game.

\begin{inparaitem}[\noindent- \emph{Case:}]
\item When $\gamma$ is an atomic transition $\stdtr$ or a test $?\varphi$, monotonicity is immediate.

\item For $\gamma_1\cup \gamma_2$ and $D_1\subseteq D_2$ by induction hypothesis
	\begin{IEEEeqnarray*}{rCl}
		\stdasst\envelope{\gamma_1\cup\gamma_2}^{D_1} &=&\stdasst\envelope{\gamma_1}^{D_1} \cup \stdasst\envelope{\gamma_2}^{D_1}\\
		&\subseteq&\stdasst\envelope{\gamma_1}^{D_2} \cup \stdasst\envelope{\gamma_2}^{D_2}\\
		&=&\stdasst\envelope{\gamma_1\cup \gamma_2}^{D_2}
	\end{IEEEeqnarray*}

\item
	For a sequential composition $\gamma_1;\gamma_2$
	consider
	$D_1\subseteq D_2$.
	The inductive hypothesis for $\gamma_2$ implies $\stdasst\envelope{\gamma_2}^{D_1}\subseteq\stdasst\envelope{\gamma_2}^{D_2}$.
	The inductive hypothesis for $\gamma_1$ implies
	\begin{IEEEeqnarray*}{rCl}
		\stdasst\envelope{\gamma_1;\gamma_2}^{D_1} &=&\stdasst\envelope{\gamma_1}(\stdasst\envelope{\gamma_2}^{D_1})
		\subseteq\stdasst\envelope{\gamma_1}(\stdasst\envelope{\gamma_2}^{D_2})\\
		&=&\stdasst\envelope{\gamma_1;\gamma_2}^{D_2}
	\end{IEEEeqnarray*}

\item
	For a game of the form $\gamma^*$ and $D_1\subseteq D_2$
	simply
	\begin{IEEEeqnarray*}{rCl}
		\stdasst\envelope{\gamma^*}^{D_1} &=&
		\bigcap\{Z\subseteq\Sc : D_1\cup \stdasst\envelope{\gamma}^Z \subseteq Z\}\\
		&\subseteq&
		\bigcap\{Z\subseteq\Sc : D_2\cup \stdasst\envelope{\gamma}^Z \subseteq Z\}\\&=&
		\stdasst\envelope{\gamma^*}^{D_2}
	\end{IEEEeqnarray*}

\item
	For the duality consider a game $\gamma^\dem$ and
	sets
	$D_1\subseteq D_2$.
	Then $\Sc\setminus D_2\subseteq \Sc\setminus D_1$ and
	by the inductive hypothesis
	$\stdasst\envelope{\gamma}^{\Sc\setminus D_2}\subseteq\stdasst\envelope{\gamma}^{\Sc\setminus D_1}$
	Hence
	$$\stdasst\envelope{\gamma^\dem}^{D_1} =\Sc\setminus
	\stdasst\envelope{\gamma}^{\Sc\setminus D_1} \subseteq
	\Sc\setminus
	\stdasst\envelope{\gamma}^{\Sc\setminus D_2}
	=\stdasst\envelope{\gamma^\dem}^{D_2}$$
	This completes the proof.
\end{inparaitem}
\end{proofEnd}

Originally \cite{DBLP:conf/focs/Parikh83} game logic was interpreted over neighborhood structures, which do not restrict the interpretation of atomic transitions $\stdtr\in \Act$ to transition relations but allow atomic games $\Pc(\Sc)\to \Pc(\Sc)$. 
For compatibility with \Lmu, \GL is interpreted here over
structures with atomic transitions.

\subsection{Deterministic Assignment and Quantifiers}
\label{sec:assignmentstructures}

This section shows two fundamental kinds of modalities differentiating
the first-order setting from the propositional one.
The first are deterministic assignments, which assign the value 
of a term to an \ovariable.
The second are nondeterministic assignments, which correspond to quantification
over \ovariables.
Those are introduced in the signature $\Lc$ instead of the logic for increased generality.

\emph{Deterministic assignment:}
A modality of the form \m{x := \stdterm} for a variable $x$ and a term $\stdterm$
is a \emph{deterministic assignment} modality.
A signature $\Lc=(\Lcf,\Act)$ \emph{has deterministic assignments}
iff $\Lcf$ contains at least two distinct
constant symbols $0,1$
and the modality $(x := \stdterm)\in \Act$
for every variable $x\in \Xc$ and every $\Lcf$-term $\stdterm$.
Given a signature $\Lc$ with deterministic assignments
an $\Lc$-structure $\structA$
is called an \emph{assignment structure} iff
the interpretations of these constant symbols are distinct $0^\structA \neq 1^\structA$
and deterministic assignments are interpreted as:
\[(x:=\stdterm)^\structA = \{(\stdstate,\stdstate\tfrac{\stdstate\envelope{\stdterm}}{x}) : \stdstate \in \Sc\}.\]

\emph{Nondeterministic assignment:}
A modality of the form \m{\prandom{x}} for a variable $x$ 
is a \emph{nondeterministic assignment} modality.
A signature $\Lc=(\Lcf,\Act)$ \emph{has nondeterministic assignments}
iff $(\prandom{x})\in \Act$
for every variable $x\in \Xc$.
Given a signature~$\Lc$ with nondeterministic assignments,
an $\Lc$-structure $\structA$ is a \emph{quantifier structure} 
iff nondeterministic assignments mean:
\[\structA\envelope{\prandom{x}} = \{(\stdstate,\stdstate\tfrac{a}{x}) : \stdstate \in \Sc , a\in A\}.\]

The usual first-order quantifiers naturally correspond to nondeterministic assignment modalities.
The quantifier $\lexists{x}{\varphi}$ corresponds to
$\ddiamond{\prandom{x}}{\varphi}$ and
$\lforall{x}{\varphi}$ to $\dbox{\prandom{x}}{\varphi}$.

For signatures containing deterministic and nondeterministic
assignments, the notion of validity is defined relative to 
assignment and quantifier structures respectively.
If $\Act$ contains no modalities apart from nondeterministic assignments,
the logic \Lmu is \emph{least fixpoint logic}  \cite{DBLP:journals/bsl/DawarG02}.

\emph{Renaming:}
For any state $\stdstate\in \Sc$ and \ovariables $x,y\in \Xc$ write $\transposestate{\stdstate}{x}{y}$
for the state which agrees with $\omega$ except that $\transposestate{\stdstate}{x}{y}(x) =\stdstate(y)$
and $\transposestate{\stdstate}{x}{y}(y) =\stdstate(x)$.
For renaming \ovariables in a formula
corresponding versions $\transposepv{X}{x}{y}$ of the \pvariables are added to the syntax of \GLV and \LmuV, which are always interpreted as
$\stdasst(\transposepv{X}{x}{y}) = \{\transposestate{\stdstate}{x}{y} : \stdstate\in \stdasst(X)\}$.
Renamed \pvariables can never be bound or substituted.
Similarly $\Act$ is assumed to be closed under renaming, i.e. for every $\stdtr\in\Act$ there is some $\transposeact{\stdtr}{x}{y}\in\Act$
such that
\[\structA\envelope{\transposeact{\stdtr}{x}{y}}=
\{(\transposestate{\stdstate}{x}{y},\transposestate{\newstate}{x}{y}) : (\stdstate,\newstate) \in \structA\envelope{\stdtr}\}.\]
By \rref{lem:transposesubst}{} the renaming extends to formulas
\(\transposefml{\varphi}{x}{y}\).

\section{Equi-Expressivity}
\label{sec:secequiexpressive}

\subsection{Embedding Game Logic Into Modal \mutex-Calculus}

Game logic embeds easily into the modal $\mu$-calculus based on the fact that the semantics of repetition games already is a least fixpoint.
Let $\Lc = (\Lcf, \Act)$ be any signature.
Define by structural induction on the \GLV-formula an embedding $\cdot^\sharp$ from \GLV-formulas to \LmuV[\Vc]-formulas as follows:
	\begin{IEEEeqnarray*}{rClCrClCrCl}
		\stdlit^\sharp &\equiv& \stdlit ~\text{for}~\stdlit\in\FOLL_{\Lcf}
		&\quad&
		X^\sharp &\equiv& X ~\text{for}~X\in\overline\Vc
		\\
		(\lnot \varphi)^\sharp  &\equiv& \overline{\varphi^\sharp}
		&\quad&
		(\varphi_1\lor \varphi_2)^\sharp  &\equiv& {\varphi_1^\sharp\lor \varphi_1^\sharp}
	\end{IEEEeqnarray*}
And for games:
{\allowdisplaybreaks
	\begin{IEEEeqnarray*}{rClCrCl}
		(\ddiamond{\stdtr}{\varphi})^\sharp  &\equiv& {\ddiamond{\stdtr}{\varphi}}^\sharp  & \text{$\stdtr\in \Act$}
		\\
		(\ddiamond{\ptest{\varphi_1}}{\varphi_2})^\sharp  &\equiv& \varphi_1^\sharp \land \varphi_2^\sharp
		\\
		(\ddiamond{\pchoice{\gamma_1}{\gamma_2}}{\varphi})^\sharp  &\equiv& (\ddiamond{\gamma_1}{\varphi})^\sharp \lor (\ddiamond{\gamma_2}{\varphi})^\sharp
		\\
		(\ddiamond{\gamma_1;\gamma_2}{\varphi})^\sharp  &\equiv&
		(\ddiamond{\gamma_1}{\ddiamond{\gamma_2}{\varphi}})^\sharp
		\\
		(\ddiamond{\pdual{\gamma}}{\varphi})^\sharp  &\equiv& \overline{(\ddiamond{\gamma}{\lnot \varphi})^\sharp}\\
		(\ddiamond{\prepeat{\gamma}}{\varphi})^\sharp  &\equiv&  \lfp{X}{(\varphi\lor \ddiamond{\gamma}{X})^\sharp} & \text{$X$ fresh.}
	\end{IEEEeqnarray*}
}%
This recursion is on a well-founded
order (see proof of \rref{prop:embeddingsharp}).
By induction it is easy to see that:
\begin{theoremEnd}{proposition}[Embedding]\label{prop:embeddingsharp}
	Suppose $\stdasst$ is a \assignment in $\structA$ and $\varphi$ is a \GLV-formula, then
		$$\stdasst \envelopegl{\varphi} = \stdasst\envelopemu{\varphi^\sharp}.$$
\end{theoremEnd}
\begin{proofEnd}
	First observe that the translation is 
	well-defined.
	Define a rank map $\rank$ from \GL-formulas
	and \GL-games to the natural numbers by induction
	on the recursive definition.
	\begin{IEEEeqnarray*}{rClCl}
		\rank(\stdlit) &=& 0 & \quad&\text{for $\stdlit\in \FOLL_{\Lcf}$} 
		\\
		\rank(X) &=& 0 &&\text{for $X\in \overline{\Vc}$} \\
		\rank(\lnot\varphi) &=& \rank(\varphi)+1
		\\
		\rank(\varphi_1\lor\varphi_2) &=&
		\rank(\varphi_1) + \rank(\varphi_2) +1
		\\
		\rank(\ddiamond{\gamma}{\varphi}) &=&
		\rank(\gamma) + \rank(\varphi)+1
	\end{IEEEeqnarray*}
	For \GL-games
	\begin{IEEEeqnarray*}{rClCl}
		\rank(\stdtr) &=& 0 & \quad&\text{for $\stdtr\in \Act$} 
		\\
		\rank(\ptest{\psi})
		&=&
		\rank(\psi)
		\\
		\rank(\pchoice{\gamma_1}{\gamma_2})
		&=&
		\max\{\rank(\gamma_1),\rank(\gamma_2)\}+1
		\\
		\rank({\gamma_1};{\gamma_2})
		&=&
		\rank(\gamma_1)+\rank(\gamma_2)+2
		\\
		\rank(\pdual{\gamma})
		&=&
		\rank(\gamma)+2
		\\
		\rank(\prepeat{\gamma})
		&=&
		\rank(\gamma)+2
	\end{IEEEeqnarray*}
	The recursive definition of ${}^\sharp$ is well-founded because 
	${}^\sharp$ is applied in the defining formula 
	only
	to formulas of rank strictly less than the defined formula.

	The claim proves by induction on rank
	$\rank$ and mostly by unraveling the definitions.
	For example for the case $\ddiamond{\prepeat{\gamma}}{\varphi}$:
	\begin{IEEEeqnarray*}{rCl}
		\stdasst\envelopemu{(\ddiamond{\prepeat{\gamma}}{\varphi})^\sharp}
		&=&
		\stdasst\envelopemu{\lfp{X}{(\varphi
		\lor\ddiamond{\gamma}{X})^\sharp}}
		\\
		&=&\capfold\{Z\subseteq\Sc : \stdasst\tfrac{Z}{X}\envelopemu{
			(\varphi
		\lor\ddiamond{\gamma}{X})^\sharp
		}\subseteq Z\}
		\\
		&=&\capfold\{Z\subseteq\Sc : \stdasst\tfrac{Z}{X}\envelopegl{
			\varphi
		\lor\ddiamond{\gamma}{X}
		}\subseteq Z\}
		\\
		&=&\capfold\{Z\subseteq\Sc : \stdasst\envelopegl{
			\varphi}\cup\stdasst\envelopegl{
		\gamma
		}^Z\subseteq Z\}
		\\
		&=&\stdasst\envelopegl{\ddiamond{\prepeat{\gamma}}{\varphi}}
	\end{IEEEeqnarray*}
	The cases $\neg \varphi_1$ and $\ddiamond{\pdual{\gamma}}{\varphi_1}$
	follow with \rref{prop:negation}.
\end{proofEnd}

When $\varphi$ is a \GL-formula, i.e.  has no \pvariables, then $\varphi^\sharp$ is an \Lmu-formula without \emph{free} \pvariables.
In fact, two distinct \pvariables suffice to construct $\varphi^\sharp$, by reusing the same \pvariables~\cite{pauly01}.

\subsection{Embedding Modal \mutex-Calculus Into Game Logic}
\newcommand{\dictsubst}[2][\vartheta]{#1[{#2}]}%

For this section fix an $\Lc$-assignment structure $\structA$.

In this section a converse translation $\varphi^\flat$ is defined from \Lmu-formulas into \GL, i.e. from \Lmu without free \pvariables
to game logic without \pvariables.
For the inductive proof of the 
correctness of the translation, a more general translation needs to be defined for arbitrary \LmuV-formulas
to \GLV-formulas.
The difficulty in translating from the $\mu$-calculus is dealing with bounded \pvariables inductively.
The construction needs to retain the information that a \pvariable should be thought of as bound in some larger (inaccessible) formula.

This is the job of a \emph{dictionary}, a map $\vartheta:\Vc\to \{0,1\}$.
If $\vartheta(X)=1$, occurrences of $X$ are translated as if they were bound, because their handling is controlled by the translation.
If $\vartheta(X)=0$, occurrences of $X$ are translated as if they were a free variable, regardless of whether $X$ is free or bound in $\varphi$.

If $\vartheta$ is a dictionary and $X\in \overline{\Vc}$, then $\dictsubst{X}$ is the dictionary similar to $\vartheta$ but considering $X$ (and $\overline{X}$) bound:
		\[\dictsubst{X}(Z) = \begin{cases}
			1 &\text{if $Z=X$ or $Z=\overline{X}$}\\
			\vartheta(Z) &\text{otherwise.}
		\end{cases}\]
Write
$\vartheta(\overline{X}) = \vartheta({X})$ for all $X\in {\Vc}$.

The bound \pvariables of $\varphi$ require careful attention
when translating from $\mu$-calculus into game logic.
Because initially those have not \emph{yet} been bound,
they are treated like free \pvariables to begin with.
A dictionary $\vartheta$ is called \emph{compatible} with an \LmuV-formula
$\varphi$ iff \(\vartheta(X)= 0\) for all $X\in \overline\Vc$ that are bound in $\varphi$.

First some assumptions are made that do not restrict the generality.
By renaming bound \pvariables,
it suffices to express in \GLV only those
\LmuV-formulas, in which every \pvariable is bound at most once.
Moreover one may fix finite sets $\Xc_0\subseteq \Xc$, $\Vc_0\subseteq\overline{\Vc}$
and $\Lambda\subseteq \Act$ and
consider %
only \LmuV-formulas with \ovariables from $\Xc_0$, \pvariables from $\Vc_0$
and atomic transition symbols from $\Lambda$.
For readability, assume\footnote{%
This is not essential as distinct \ovariables $x_X$
for each $x\in \Xc_0$ can be used to simulate those constant symbols with only the two constants $0,1$ in $\Lc$.
For example one may view $\setcval{X_i}$ as an abbreviation for the \GL-game:
\(
	x_{X_1} := 0; x_{X_2} := 0;\ldots; x_{X_n}:=0;~ x_{X_i}:=1
\)
and think of $\eqcval{X_i}$ as the \GL-formula
\(
	x_{X_1} = 0 \land \ldots\land x_{X_{i-1}} = 0
	\land x_{X_i}=1\land x_{X_{i+1}}=0\land\ldots \land x_{X_n}=0
\).
}
that there are distinct constant symbols $\ctrlval{X}$ in $\Lcf$ for each
\pvariable $X\in\Vc_0$.
Also pick some \ovariable $\ctrlvar{X}\in \Xc\setminus\Xc_0$ 
that is independent of every transition in $\Lambda$. 
Where a variable $y\in \Xc$ is \emph{independent} from a transition $\stdtr$
iff
$(\stdstate\tfrac{b}{y},\newstate\tfrac{b}{y})\in \structA\envelope{\stdtr}$
whenever $(\stdstate,\newstate)\in \structA\envelope{\stdtr}$.

Now for \Lmu-formulas $\varphi$ (as above) and a compatible dictionary~$\vartheta$, define
a \GLV-game $\varphi^\vartheta$ by recursion on the definition of $\varphi$ as follows:
If $\varphi$ is in $\FOLL_{\Lcf}$ or a \pvariable in~$\overline{\Vc}$ define
{\renewcommand{\flat}{\vartheta}%
	\begin{IEEEeqnarray*}{rClCrCl}
		\stdlit^\flat &\equiv& (\ptest{\stdlit});\pdual{(\ptest{\lfalse})} ~\text{for}~\stdlit \in \FOLL_{\Lcf}
		\\
		X^\flat &\equiv&
		\begin{cases} \ptest{X};\pdual{(\ptest{\lfalse})} &\text{if $\vartheta(X)= 0$ and $X\in \overline{\Vc}$} \\
			 \setcval{X} &\text{if $\vartheta(X)= 1$ and $X\in \overline{\Vc}$}
		\end{cases}
	\end{IEEEeqnarray*}
If $\varphi$ is a composite formula define
		\begin{IEEEeqnarray*}{rClCrCl}
		(\varphi_1\lor\varphi_2)^\flat &\equiv& (\pchoice{\varphi_1^\vartheta}{\varphi_2^\vartheta})
		&\quad\quad& 
		(\varphi_1\land\varphi_2)^\flat &\equiv& (\dchoice{\varphi_1^\vartheta}{\varphi_2^\vartheta})
		\\
		(\ddiamond{\stdtr}{\varphi})^\flat &\equiv& (\stdtr;\varphi^\flat)
		&\quad& 
		(\dbox{\stdtr}{\varphi})^\flat &\equiv& (\pdual{\stdtr};\varphi^\flat).
	\end{IEEEeqnarray*}
If $\varphi$ is a fixpoint formula define
	\begin{IEEEeqnarray*}{rCl}
	(\lfp{X}{\varphi})^\flat &\equiv& {\setcval{X}; \prepeat{(\testcval{X};\varphi^{\flat[X]})};\testcvalnot{X}}\\
	(\gfp{X}{\varphi})^\flat &\equiv& {\setcval{X}; \drepeat{(\pdual{(\testcval{X})};\varphi^{\flat[X]})};\pdual{(\testcvalnot{X})}}.
	\end{IEEEeqnarray*}
}%

The translation of fixpoint formulas is well-defined, because renaming ensured that no \pvariable is bound more than once.
Hence $\dictsubst{X}$ is compatible with~$\varphi$.
Whenever~$\varphi$ does not contain free \pvariables, $\varphi^\vartheta$ is a pure \GL-game, i.e. has no \pvariables.

Most of the translation is natural.
The game $\varphi^\vartheta$ can be thought of as a formal game semantics for the first-order $\mu$-calculus.
Angel tries to verify the formula $\varphi$ by winning the game.
Demon wins the game if the play of the game witnesses falsity of the formula $\varphi$.
The formula $\varphi$ is then true iff Angel has a winning strategy in $\varphi^\vartheta$.

This explains the translation of $X$, when treated like a free variable (i.e. when $\vartheta(X)=0$).
Angel wins if $X$ is true, because she passes her test $?X$ and the game ends because Demon loses his next move $\pdual{?\lfalse}$ as he failed to show falsity.
If $X$ is not true, Angel fails her test and the game ends.
It is important that $X$ is \emph{not} a fixpoint variable, bound in some larger formula containing $\varphi$.
Otherwise the fixpoint bound by $X$ may need to be unrolled again.
But since $X$ is a free variable, the play has encountered an unbound \pvariable and ends. Because Demon has not managed to provide
a play of the game witnessing falsity of the formula, Angel wins.

Most subtle is the translation of $X$ when it occurs in a formula $\varphi$ in scope of a fixpoint operator
$\lfp{X}{\varphi}$.
The translation $(\lfp{X}{\varphi})^\vartheta $
can be understood as a game in which Angel can play the game $\varphi^{\dictsubst{X}}$ arbitrarily often,
which unfolds the fixpoint map defined by~$\varphi$.
The control variable $\ctrlvar{}$ ensures that Angel wins only if she can keep winning the subgame until 
she can eventually force it to end on a true atomic formula which is not $X$.
The translation $X^\vartheta$ to $\setcval{X}$ records that at least one more unfolding of the fixpoint for $X$ is necessary before Angel has a chance to win.
The value of the control variable $\ctrlvar{}$ controls which nested fixpoint to unfold next.

Once the correctness of the translation is proven,
one can safely forget about the dictionary for translating \Lmu-formulas without free variables.
In that case any \pvariable $X$ will simply be translated to the assignment $\setcval{X}$.

The following substitution lemma relates 
translations with dictionary $\vartheta$
and translations with dictionary $\dictsubst{X}$.
\begin{theoremEnd}{lemma}[Substitution] \label{lem:substitution}
		Suppose
		$\stdasst$ is a \assignment in an assignment structure $\structA$,
		$\varphi$ an \LmuV-formula, $X\in \overline{\Vc}$
		and $\vartheta$ a dictionary.
		If $\varphi$ is compatible with $\dictsubst{X}$ 
		and does not mention $\overline{X}$
		then
		\[
		\stdasst \envelope{\varphi^{\dictsubst{X}}}^D =
		\modif{\stdasst}{X}{E}\envelope{\varphi^{\vartheta} }^D
		~~\text{where} ~E=\stdasst\envelope{\setcval{X}}^D
		\]
\end{theoremEnd}

\begin{proofEnd}
	By straightforward induction on the complexity of $\varphi$.
	
	\begin{inparaitem}[\noindent- \emph{Case:}]

	\item $\stdlit$ for $\stdlit \in \FOLL_{\Lcf}$.
	Immediate by the semantics.

	\item $Y\in\overline{\Vc}$ with $\vartheta(Y)=1$.
	Then 
	\begin{IEEEeqnarray*}{rCl}
		\stdasst \envelope{Y^{\dictsubst{X}}}^D 
		&=& \stdasst\envelope{\setcval{Y}}^D\\
		&=&\stdasst\tfrac{E}{X}\envelope{\setcval{Y}}^D
		=\stdasst\tfrac{E}{X}\envelope{Y^{\vartheta} }^D
	\end{IEEEeqnarray*}

	\item $Y\in\overline{\Vc}$ with $\vartheta(Y)=0$.
	By assumption $Y \neq \overline{X}$.
	First consider the case that $Y=X$.
	Then by definition of $E$:
	\begin{IEEEeqnarray*}{rCl}
		\stdasst \envelope{X^{\dictsubst{X}}}^D 
		&=&\stdasst \envelope{\setcval{X}}^D =E\qquad\text{and}\\
		\stdasst\tfrac{E}{X}\envelope{X^{\vartheta} }^D
		&=& \stdasst\tfrac{E}{X}\envelope{?X;(?\lfalse)^\dem }^D
		= \stdasst\tfrac{E}{X}\envelope{?X}^\Sc=E.
	\end{IEEEeqnarray*}

	Now consider the case $Y\not\in\{X,\overline{X}\}$.
	Then
	\begin{IEEEeqnarray*}{rCl}
		\stdasst\envelope{Y^{\dictsubst{X}} }^D
		&=& \stdasst\envelope{?Y;(?\lfalse)^\dem }^D
		= \stdasst\envelope{?Y}^\Sc=\stdasst\envelopegl{Y}\\
		&=&\stdasst\tfrac{E}{X}\envelope{?Y}^\Sc
		=\stdasst\tfrac{E}{X}\envelope{?Y;(?\lfalse)^\dem }^D
		\\ &=&\stdasst\tfrac{E}{X}\envelope{Y^{\vartheta} }^D.
	\end{IEEEeqnarray*}

	\item $\varphi_1\lor\varphi_2$, $\varphi_1\land\varphi_2$, $\ddiamond{\stdtr}{\varphi}$ or $\dbox{\stdtr}{\varphi}$. Straightforward by unraveling the definitions and the induction hypothesis.

	\item $\lfp{Y}{\varphi}$. Observe that $X\neq Y$ by compatibility.
	Unravel the definition of the fixpoint translation
		$\stdasst\envelope{({\lfp{Y}{\varphi}})^{\dictsubst{X}} }^D = \stdasst\envelope{\setcval{Y}}^W$
	where
		\[W = \stdasst\envelope{\prepeat{(\testcval{Y};\gamma^{\dictsubst{X,Y}})}}^Q,\]
	$Q= \stdasst\envelopegl{\neqcval{Y}} \cap D$.
	By definition
	$W$ is the least fixpoint of $U\mapsto Q\cup \stdasst\envelope{\testcval{Y};\varphi^{\dictsubst{X,Y}}}^U$.
	Equivalently, by induction hypothesis, $W$ is the least fixpoint of 
		\[\Gamma_1: U \mapsto Q\cup \stdasst\tfrac{\stdasst\envelope{\setcval{X}}^U}{X}\envelope{\testcval{Y};\varphi^{\dictsubst{Y}}}^U,\]
	Write $E = \stdasst\envelope{\setcval{X}}^D$ and define
	a second map
		\[\Gamma_2: U \mapsto Q\cup \stdasst\tfrac{E}{X}\envelope{\testcval{Y};\varphi^{\dictsubst{Y}}}^U.\]
	Next prove that $\Gamma_1$ and $\Gamma_2$ have the same fixpoints.
	Suppose first that $U$ is a fixpoint of $\Gamma_2$ then
		\begin{IEEEeqnarray*}{rCl}
			&&\stdasst\envelope{\setcval{X}}^U
			= \stdasst\envelope{\setcval{X}}^{\Gamma_2(U)}\\
		&=& \stdasst\envelope{\setcval{X}}^Q \cup\;
		\stdasst\envelope{\setcval{X}}(\stdasst\tfrac{E}{X}\envelope{\testcval{Y};\varphi^{\dictsubst{Y}}}^U)\\
		&=&\stdasst\envelope{\setcval{X}}^Q = \stdasst\envelope{\setcval{X}}^D = E.
		\end{IEEEeqnarray*}
	The third equality follows because $\ctrlval{X}\neq \ctrlval{Y}$ ensures that
	Angel can never pass the test $\testcval{Y}$ right
	after playing 
	$\setcval{X}$.
	The penultimate equality is immediate
	from the definition of $Q$ and $\ctrlval{X}\neq \ctrlval{Y}$.
	This shows that $\Gamma_1(U)=\Gamma_2(U)=U$, i.e. that
	$U$ is also a fixpoint of $\Gamma_1$.
	
	Almost identically it follows that
	 $\stdasst\envelope{\setcval{X}}^U = E$
	for any fixpoint $U$ of $\Gamma_1$.
	Analogously conclude
	that $U=\Gamma_1(U)=\Gamma_2(U)$ for fixpoints $U$ of $\Gamma_1$.

	Thus $W$, the least fixpoint of $\Gamma_1$,
	is the least fixpoint of $\Gamma_2$.
	By definition of the semantics of $\prepeat{}$
	  $$W =\stdasst\tfrac{E}{X}\envelope{
		\prepeat{(\testcval{Y};\varphi^{\dictsubst{Y}})}
	}^Q$$
	Finally because $Q=\stdasst\tfrac{E}{X}\envelopegl{\neqcval{Y}}\cap D$ and $\stdasst\envelope{\setcval{Y}}
	=\stdasst\tfrac{E}{X}\envelope{\setcval{Y}}$
	it follows that 
	\begin{IEEEeqnarray*}{rCl}
		&&\stdasst\envelope{({\lfp{Y}{\varphi}})^{\dictsubst{X}} }^D
		= \stdasst\envelope{\setcval{Y}}^W\\
		&=&\stdasst\tfrac{E}{X}\envelope{\setcval{Y}}(
			\stdasst\tfrac{E}{X}\envelope{
				\prepeat{(\testcval{Y};\varphi^{\dictsubst{Y}})}
			}^Q
		)
		\\
		&=&\stdasst\tfrac{E}{X}\envelope{({\lfp{Y}{\varphi}})^{\vartheta} }^D.
	\end{IEEEeqnarray*}

	\item $\gfp{Y}{\varphi}$. This is almost exactly like the case for the least fixpoint except that least fixpoint
	is replaced by greatest fixpoint everywhere and 
	$Q$ is defined $\stdasst\envelopegl{\eqcval{Y}}\cup D$.
	\end{inparaitem}
\end{proofEnd}

Recall that the fresh control variable $\ctrlvar{X}\notin \Xc_0$ does not occur anywhere in the formula $\varphi$ to translate and is moreover independent of all its transition symbols.
To show that the control variable does not have side effects it is necessary to assume that the initial \assignment $\stdasst$ is also independent of the value of $\ctrlvar{}$.
Formally, for the remainder of this section assume for all \assignments $\stdasst$ and $X\in \overline{\Vc}$ and all $a\in \domain{\structA}$:
$$\stdasst(X) = \{\omega : \omega\tfrac{a}{\ctrlvar{}}\in \stdasst(X)\}.$$

\begin{theoremEnd}{lemma}\label{lem:removectrlvar}
	Suppose $\structA$ is an assignment structure,
	$\stdasst$ is a \assignment in~$\structA$ and
	$\varphi$ an \LmuV-formula.
	Then for any $Z\in \overline{\Vc}$:
	\begin{enumerate}
		\item $\stdasst \envelopemu{\varphi} = \stdasst \envelopemu{\ddiamond{\setcval{Z}}{\varphi}}$ and
		\item $ \stdasst\envelope{\varphi^\vartheta}^D =
		\stdasst\envelope{\setcval{Z};\varphi^\vartheta}^D$ for
		any $D\subseteq\Sc$ and any compatible dictionary $\vartheta$.
	\end{enumerate}
\end{theoremEnd}

\begin{proofEnd}
	For the first part prove
	\[\stdasst \envelopemu{\varphi} = \{\omega: \omega\tfrac{a}{\ctrlvar{Z}}\in \stdasst \envelopemu{\varphi}\}\]
	for all $a\in \domain{\structA}$ by induction on the definition of $\varphi$.

	\begin{inparaitem}[\noindent -\emph{Case}: ]
		\item $\stdlit\in \FOLL_{\Lcf}$.
			Immediate because $\stdlit$ does not mention $\ctrlvar{X}$.

		\item $X\in \overline{\Vc}$.
			That
			\(\stdasst(X)
			=\{\omega: \omega\tfrac{b}{\ctrlvar{Z}}\in \stdasst(X)\}\)
			holds by the assumption on $\stdasst$.

		\item $\varphi_1\lor\varphi_2$. Simply by
			\begin{IEEEeqnarray*}{rCl}
				&&
				\{\omega: \omega\tfrac{b}{\ctrlvar{Z}}\in \stdasst\envelopemu{\varphi_1\lor\varphi_2}\}
				\\
				&=&
				\{\omega: \omega\tfrac{b}{\ctrlvar{Z}}\in \stdasst\envelopemu{\varphi_1}\}
				\cup
				\{\omega: \omega\tfrac{b}{\ctrlvar{Z}}\in \stdasst\envelopemu{\varphi_2}\}
				\\
				&=&
				\stdasst \envelopemu{{\varphi_1}}
				\cup
				\stdasst \envelopemu{{\varphi_2}}
				=
				\stdasst \envelopemu{{\varphi_1\lor\varphi_2}}.
			\end{IEEEeqnarray*}

		\item $\varphi_1\land\varphi_2$.
			Similar to the previous case.
	
		\item $\ddiamond{\stdtr}{\varphi}$ for $\stdtr \in \Act$.
			Suppose that $\stdstate\in \stdasst \envelopemu{\ddiamond{\stdtr}{\varphi}}$.
			Then there is $\newstate$ such that $(\stdstate,\newstate)\in \structA\envelope{\stdtr}$
			and $\newstate \in \stdasst\envelopemu{\varphi}$.
			Because $\stdtr$ is independent of $\ctrlvar{X}$ moreover
			$(\stdstate\tfrac{b}{\ctrlvar{}}, \newstate\tfrac{b}{\ctrlvar{}})\in \structA\envelope{\stdtr}$.
			By induction hypothesis $\newstate\tfrac{b}{\ctrlvar{}} \in \stdasst\envelopemu{\varphi}$
			so that $\omega\tfrac{b}{\ctrlvar{}}\in \stdasst \envelopemu{\ddiamond{\stdtr}{\varphi}}$.

			The reverse inclusion is symmetric.
	
		\item $\dbox{\stdtr}{\varphi}$.
			Similar to the previous case.

		\item $\lfp{X}{\varphi}$.
			Let $E=\stdasst \envelopemu{\lfp{X}{\varphi}}$.
			Because $E$ is a fixpoint of the map $D\mapsto\stdasst\tfrac{D}{X}\envelopemu{\varphi}$
			 the induction hypothesis implies:
			\[
			E = \stdasst\tfrac{E}{X}\envelopemu{\varphi}
			= 
			\{\omega: \omega\tfrac{b}{\ctrlvar{Z}}\in 
			\stdasst\tfrac{E}{X}\envelopemu{\varphi}\}	
			=
			\{\omega: \omega\tfrac{b}{\ctrlvar{Z}}\in 
			E\}
			\]
		
		\item $\gfp{X}{\varphi}$.
			Similar to the previous case.
	\end{inparaitem}

	For the second part proceed
	again by induction on $\varphi$

	\begin{inparaitem}[\noindent -\emph{Case}: ]
		\item $\stdlit\in \FOLL_{\Lcf}$.
			Immediate because $\stdlit$ does not mention $\ctrlvar{X}$.

		\item $X\in \overline{\Vc}$ and $\vartheta(X)=0$.
			Unravelling the translation and the semantics the
			statement becomes
			\(\stdasst(X)
			=\{\omega: \omega\tfrac{\ctrlval{Z}}{\ctrlvar{Z}}\in \stdasst(X)\},\)
			which holds by assumption on $\stdasst$.

		\item $X\in \overline{\Vc}$ and $\vartheta(X)=1$.
			The semantics of $\setcval{Z};\setcval{X}$
			and $\setcval{X}$ coincide.

		\item Conjunction and disjunction are straightforward.
	
		\item Modalities are just like modalities in the first part.

		\item $\lfp{X}{\varphi}$.
		Again because the semantics of $\setcval{Z};\setcval{X}$
		and $\setcval{X}$ coincide:
			\begin{IEEEeqnarray*}{rCl}
				&&
				\stdasst\envelope{\setcval{Z};(\lfp{X}{\varphi})^\vartheta}^D
				\\
				&=&
				\stdasst\envelope{\setcval{Z};{\setcval{X}; \prepeat{(\testcval{X};\varphi^{\vartheta[X]})};\testcvalnot{X}}}^D
				\\
				&=&
				\stdasst\envelope{{\setcval{X}; \prepeat{(\testcval{X};\varphi^{\vartheta[X]})};\testcvalnot{X}}}^D
				\\
				&=&
				\stdasst\envelope{(\lfp{X}{\varphi})^\vartheta}^D
			\end{IEEEeqnarray*}
		
		\item $\gfp{X}{\varphi}$.
			Similar to the previous case.
	\end{inparaitem}
\end{proofEnd}

\begin{theoremEnd}[see full proof]{proposition}[Counterembedding] \label{prop:translationmain}
	Suppose $\structA$ is an assignment structure,
	$\stdasst$ is a \assignment in~$\structA$,
	$\varphi$ an \LmuV-formula,
	$\vartheta$ a compatible dictionary and suppose $D\subseteq \Sc$ is such that
		\begin{equation}\label{conditiondagger}
			\mforall{X{\in}\overline{\Vc}} \big(\vartheta(X)=1 \;\Rightarrow\;\stdasst\envelope{\setcval{X}}^D = \stdasst(X)\big)
			\tag{$\ast$}
	 	 \end{equation}
	Then:
	\[
	\stdasst \envelopemu{\varphi} = \stdasst\envelope{\varphi^\vartheta}^D 
	\]
\end{theoremEnd}

The main case of interest when applying the proposition is where $\vartheta(X)=0$
for all variables ${X}\in \overline{\Vc}$. In that case condition \ref{conditiondagger} holds vacuously and is not a restriction.
\iflongversion
\else
\begin{proof}[Proof Sketch]
	The statement is proved by induction on the complexity of the formula $\varphi$
	simultaneously for all valuations $\stdasst$, all compatible dictionaries $\vartheta$ and all sets $D$ satisfying \ref{conditiondagger}.
	A full proof is in \rref{app:proofs}. The case of propositional variables and least fixpoint operators are sketched here:

	\begin{inparaitem}[\noindent- \emph{Case:}]
		\item $X$ for $X\in \overline{\Vc}$ with $\vartheta(X)=0$.
		\begin{IEEEeqnarray*}{rCl}
			\stdasst \envelopemu{X} &=& \stdasst(X) =
		\stdasst\envelope{?X}(\stdasst\envelope{(?\lfalse)^\dem}^D)=
	 \stdasst\envelope{X^\vartheta}^D.
		\end{IEEEeqnarray*}
	
		\item $X$ for $X\in \overline{\Vc}$ with $\vartheta(X)=1$.
			 $$\stdasst \envelopemu{X}
			  = \stdasst(X) \stackrel{\text{\ref{conditiondagger}}}{=} \stdasst\envelope{\setcval{X}}^D
			  = \stdasst\envelope{X^\vartheta}^D.$$

		\item $\lfp{X}{\varphi}$. By applying \rref{lem:substitution}
		the two fixpoint maps defining
		the semantics of repetition and least fixpoint are shown to have 
		the same fixpoints modulo the value of the control variable $\ctrlvar{}$.
		\rref{lem:removectrlvar} then takes care of $\ctrlvar{}$.
	\end{inparaitem}
\end{proof}
\fi
\begin{proofEnd}

	Prove the statement by induction on the complexity of formula $\varphi$ simultaneously for all \assignments $\stdasst$, compatible dictionaries~$\vartheta$ and all sets $D$.
	For the use of the induction hypothesis, note: If~$\vartheta$ is compatible with $\varphi$, then $\vartheta$ is compatible with all subformulas of $\varphi$.
	\begin{inparaitem}[\noindent- \emph{Case:}]

	\item $\stdlit$ for $\stdlit \in \FOLL_{\Lcf}$.
	Immediate by the semantics.

	\item $X$ for $X\in \overline{\Vc}$ with $\vartheta(X)=0$.
	\begin{IEEEeqnarray*}{rCl}
		\stdasst \envelopemu{X} &=& \stdasst(X) =
	\stdasst\envelope{?X}(\stdasst\envelope{(?\lfalse)^\dem}^D)=
 \stdasst\envelope{X^\vartheta}^D.
	\end{IEEEeqnarray*}

	\item $X$ for $X\in \overline{\Vc}$ with $\vartheta(X)=1$.
		 $$\stdasst \envelopemu{X}
		  = \stdasst(X) \stackrel{\text{\ref{conditiondagger}}}{=} \stdasst\envelope{\setcval{X}}^D
		  = \stdasst\envelope{X^\vartheta}^D.$$
	
	\item $\varphi_1\lor \varphi_2$.
	By the inductive hypothesis
		\begin{IEEEeqnarray*}{rCl}
			\stdasst\envelopemu{\varphi_1\lor\varphi_2}&=&
		\stdasst\envelopemu{\varphi_1}\cup 
		\stdasst\envelopemu{\varphi_2}\\ &=&
		\stdasst\envelope{\varphi_1^\vartheta}^D\cup 
		\stdasst\envelope{\varphi_2^\vartheta}^D =
		\stdasst\envelope{\varphi_1^\vartheta\cup\varphi_2^\vartheta}^D\\
		&=&
		\stdasst\envelope{(\varphi_1\lor\varphi_2)^\vartheta}^D
		\end{IEEEeqnarray*}

	\item $\varphi_1\land \varphi_2$.
		By the inductive hypothesis
			\begin{IEEEeqnarray*}{rCl}
			\stdasst\envelopemu{\varphi_1\land\varphi_2}&=&
			\stdasst\envelopemu{\varphi_1}\cap 
			\stdasst\envelopemu{\varphi_2}\\ &=&
			\stdasst\envelope{\varphi_1^\vartheta}^D\cap 
			\stdasst\envelope{\varphi_2^\vartheta}^D =
			\stdasst\envelope{(\varphi_1\land\varphi_2)^\vartheta}^D
			\end{IEEEeqnarray*}

	\item $\ddiamond{\stdtr}{\varphi}$.
	Reading definitions and the inductive hypothesis
	\begin{IEEEeqnarray*}{rCl}
		\stdasst\envelope{(\ddiamond{\stdtr}{\varphi})^\vartheta}^D
		&=& \stdasst\envelope{ \stdtr; \varphi^\vartheta}^D
		= \stdasst\envelope{\stdtr}(\stdasst\envelope{\varphi^\vartheta}^D)\\
		&=& \stdasst\envelope{\stdtr}(\stdasst\envelopemu{\varphi})\\
		&=& \{\stdstate\in \Sc : \mexists{(\stdstate,\newstate)\in \structA\envelope{\stdtr}} \newstate\in \stdasst\envelopemu{\varphi}\}\\ &=& \stdasst\envelopemu{\ddiamond{\stdtr}{\varphi}}
	\end{IEEEeqnarray*}

	\item $\dbox{\stdtr}{\varphi}$.
	Reading definitions and the inductive hypothesis
	\begin{IEEEeqnarray*}{rCl}
		\stdasst\envelope{(\dbox{\stdtr}{\varphi})^\vartheta}^D
		&=& \stdasst\envelope{ \pdual{\stdtr}; \varphi^\vartheta}^D
		= \Sc\setminus\stdasst\envelope{\stdtr}(\Sc\setminus\stdasst\envelope{\varphi^\vartheta}^D)\\
		&=&\Sc\setminus\stdasst\envelope{\stdtr}(\Sc\setminus\stdasst\envelopemu{\varphi})
		= \stdasst\envelopemu{\dbox{{\stdtr}}{\varphi}}
	\end{IEEEeqnarray*}

	\item $\lfp{X}{\varphi}$.
	Recall that $\varphi$ may not mention $\overline{X}$ by \rref{def:Lmu-syntax}.
	Let $E = \stdasst\envelope{(\lfp{X}{\varphi})^\vartheta}^D$
	so that after unraveling the definition of the translation
		$E = \stdasst\envelope{\setcval{X}}^W$
	for $W$ the least fixpoint of
		$$\Gamma(U) = Q\cup \stdasst\envelope{\testcval{X};\varphi^{\dictsubst{X}}}^U$$
	and $Q= \stdasst\envelopegl{\neqcval{X}}\cap D$.
	Also let $\Delta(U) = \stdasst\tfrac{U}{X}\envelopemu{\varphi}$.

	Firstly observe that for any set $R\subseteq\Sc$:
		\[\stdasst\envelope{\setcval{Z}}^{Q\cup R}=\begin{cases}
			\stdasst\envelope{\setcval{X}}^{R}
				& \text{if $Z=X$.}\\
			\stdasst\envelope{\setcval{X}}^{D\cup R}
				& \text{if $Z\neq X$.}
		\end{cases}
		\]

	Secondly note that $\modif{\stdasst}{X}{E}$ satisfies the independence assumption made prior to
	\rref{lem:removectrlvar}.
	For $Z\neq X$, independence of $\modif{\stdasst}{X}{E}$ follows from independence of $\stdasst$.
	And for any $a\in \domain{\structA}$:
		\begin{IEEEeqnarray*}{rCl}
			\modif{\stdasst}{X}{E}(X)&=&E =\stdasst\envelope{\setcval{X}}^W\\
			&=&\{\omega : \omega\tfrac{a}{\ctrlvar{}}\in W\}
		=\{\omega : \omega\tfrac{a}{\ctrlvar{}}\in E\}
		\end{IEEEeqnarray*}
		where the last line uses $\omega\tfrac{a}{\ctrlvar{}}\in E$
		iff $\omega\tfrac{a}{\ctrlvar{}}\in W$ since $E$ and $W$ are related by a change of $\ctrlvar{}$.

	Thirdly note that $\stdasst\tfrac{E}{X}$ satisfies condition \ref{conditiondagger} for $W$.
	Suppose $Z\in \overline{\Vc}$ with $\vartheta(Z)= 1$.
	Then $Z\neq X$,
	because $\vartheta$ is compatible with $\lfp{X}{\varphi}.$ Then
	\begin{IEEEeqnarray*}{rCl}
		\stdasst\tfrac{E}{X}\envelope{\setcval{Z}}^W
		&=&
		\stdasst\tfrac{E}{X}\envelope{\setcval{Z}}^{\Gamma(W)}\\
		&=&
		\stdasst\tfrac{E}{X}\envelope{\setcval{Z}}^D
		\stackrel{\text{\ref{conditiondagger}}}{=} \stdasst\tfrac{E}{X}(Z)
	\end{IEEEeqnarray*}
	The second equality is by the first observation above. 
	The last equality is by \ref{conditiondagger} for $D$.
	In particular, the assumption that the \assignment is independent of $\ctrlvar{}$ does not contradict \ref{conditiondagger}.

	The following computation shows that $E$ is a fixpoint 
	of~$\Delta$:
	\begin{IEEEeqnarray*}{rCl}
		E 
		&=&
		\stdasst\envelope{\setcval{X}}^W
		=
		\stdasst\envelope{\setcval{X}}^{\Gamma(W)}
		\\
		&=&
		\stdasst\envelope{\setcval{X}}(\stdasst\envelope{\testcval{X};\varphi^{\dictsubst{X}}}^W)
		\\
		&=&
		\stdasst\envelope{\setcval{X}}(\stdasst\envelope{\varphi^{\dictsubst{X}}}^W)
		\\
		&=&
		\stdasst\envelope{\setcval{X}}(\stdasst\tfrac{E}{X}
		\envelope{\varphi^{\vartheta}}^W)
		=
		\stdasst\tfrac{E}{X}
		\envelope{\setcval{X};\varphi^{\vartheta}}^W
		\\
		&=&
		\stdasst\tfrac{E}{X}
		\envelope{\varphi^{\vartheta}}^W
		=
		\stdasst\tfrac{E}{X}
		\envelopemu{\varphi}
		=\Delta(E)
	\end{IEEEeqnarray*}
	The first line uses that $W$ is a fixpoint of $\Gamma$.
	The second is by the first observation above with $R\equiv{}\stdasst\envelope{\testcval{X};\varphi^{\dictsubst{X}}}^W$.
	The third line uses that the semantics of 
	$\setcval{X};\testcval{X}$ and $\setcval{X}$ coincide.
	The fourth line is by \rref{lem:substitution},
	since $\varphi$ may not mention $\overline{X}$.
	The fifth line is \rref{lem:removectrlvar}
	and the induction hypothesis.
	Thus $E$ is a fixpoint of $\Delta$.

	To show that $E = \stdasst\envelopemu{\lfp{X}{\varphi}}$,
	it remains to show that $E$ is the least
	fixpoint of $\Delta$.
	Let $F$ be the least fixpoint of $\Delta$ and show $E\subseteq F$.
	By \rref{lem:removectrlvar} note
		\[\stdasst\envelope{\setcval{X}}^{F}
		= \stdasst\envelopemu{\ddiamond{\setcval{X}}{\lfp{X}{\varphi}}}
		= \stdasst\envelopemu{\lfp{X}{\varphi}} = F.\]
	For $U = Q\cup (\stdasst\envelopegl{\eqcval{X}}
	\cap F)$ this equality together with the observation above this implies
		\[\stdasst\envelope{\setcval{X}}^U =
		\stdasst\envelope{\setcval{X}}^F= F.\]
	By \rref{lem:substitution} this
	implies $\stdasst\tfrac{F}{X}\envelope{\varphi^\vartheta}^U
	=\stdasst\envelope{\varphi^{\dictsubst{X}}}^U$.
	And because $F$ is the least fixpoint of $\Delta$, this implies:
	\[F=\Delta(F)
	= \stdasst\tfrac{F}{X}\envelopemu{\varphi}
	=\stdasst\tfrac{F}{X}\envelope{\varphi^\vartheta}^U
	=\stdasst\envelope{\varphi^{\dictsubst{X}}}^U.
	\]
	The third equality uses the induction hypothesis.
	It remains to justify that $U$ satisfies \ref{conditiondagger}. Suppose $Z\in \overline{\Vc}$ with $\vartheta(Z)= 1$.
	Then by compatibility $Z \neq X$ and consequently
		\[\stdasst\envelope{\setcval{Z}}^U
		=\stdasst\envelope{\setcval{Z}}^Q
		=\stdasst\envelope{\setcval{Z}}^D
		\stackrel{\text{\ref{conditiondagger}}}{=}
		\Omega(Z),\]
	where the first equality uses that Angel can never win the game $\setcval{Z}$ into $\eqcval{X}$ as $Z \neq X$.
	The second equality uses that Angel wins win the game $\setcval{X}$ into $Q$ exactly if she wins into $D$, because $\neqcval{X}$
	is always true after playing.
	Thus, $U$ is a fixpoint of $\Gamma$ because
	\begin{IEEEeqnarray*}{rCl}
		\Gamma(U)
	&=& Q \cup \stdasst\envelope{\testcval{X}; \varphi^{\dictsubst{X}}}^U 
	\\
	&=&
	Q\cup (\stdasst\envelope{\eqcval{X}}\cap F)= U.
	\end{IEEEeqnarray*}
	Because $W$ is the least fixpoint of $\Gamma$,
	it follows that $W\subseteq U$.
	Consequently \rref{lem:monotone} shows the following as desired:
		$$E = \stdasst\envelope{\setcval{X}}^W
		\subseteq
		\stdasst\envelope{\setcval{X}}^U
		 =F$$

	\item $\gfp{X}{\varphi}$.
	This is almost exactly like the least fixpoint case
	with a few modifications.
	Let $E = \stdasst\envelope{(\gfp{X}{\varphi})^\vartheta}^D$
	so that after unraveling the definition of the translation
		$E = \stdasst\envelope{\setcval{X}}^W$
	where $W$ is the greatest fixpoint of
		$$\Gamma(U) = Q\cap \stdasst\envelope{\pdual{(\testcval{X})};\varphi^{\dictsubst{X}}}^U$$
	and $Q 
	=\stdasst\envelope{\eqcval{X}}\cup D$.
	Also let $\Delta(U) = \stdasst\tfrac{U}{X}\envelopemu{\varphi}$.

	The useful observation becomes
	\[\stdasst\envelope{\setcval{Z}}^{Q\cap R}=\begin{cases}
		\stdasst\envelope{\setcval{X}}^{R}
			& \text{if $Z=X$.}\\
		\stdasst\envelope{\setcval{X}}^{D\cap R}
			& \text{if $Z\neq X$.}
	\end{cases}
	\]
	for any set $R$.
	The argument proceeds almost identically 
	to the last case with least replaced by
	greatest everywhere.
\end{inparaitem}
\end{proofEnd}

For an \Lmu formula $\psi$ 
define $\psi^\flat$ to be the formula $\ddiamond{\psi^\vartheta}{\ltrue}$
where $\vartheta$ is the dictionary with $\vartheta(X)=0$ for all $X\in\Vc$.
If $\psi$ is an \Lmu formula,
then $\psi^\flat$ is a pure \GL-formula. %
The following semantic correspondence between
first-order game logic and the first-order modal $\mu$-calculus
follows from \rref{prop:translationmain}.

\begin{corollary}\label{cor:generalequi-expressivity}
	The first-order $\mu$-calculus and first-order game logic 
	are equi-expressive over any assignment
	structure $\structA$:
\begin{enumerate}
	\item  $\structA\envelopegl{\varphi} = \structA\envelopemu{\varphi^\sharp}$ for any \GL-formula  $\varphi$
	\item $\structA\envelopemu{\psi} = \structA\envelopegl{\psi^\flat}$ for any \LmuV-formula $\psi$
\end{enumerate}
\end{corollary}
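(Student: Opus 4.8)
The plan is to read both claims off the two embedding results already established, by specializing them to formulas without free \pvariables, where the denotational semantics no longer depends on the \assignment. No fresh induction is needed: the corollary merely packages \rref{prop:embeddingsharp} and \rref{prop:translationmain}.

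For item~1, recall that for a \GL-formula $\varphi$ (one with no free \pvariables) the translation $\varphi^\sharp$ is an \Lmu-formula without free \pvariables, as observed after \rref{prop:embeddingsharp}. That proposition gives $\stdasst\envelopegl{\varphi} = \stdasst\envelopemu{\varphi^\sharp}$ for \emph{every} \assignment $\stdasst$; since neither side depends on $\stdasst$ in the absence of free \pvariables, they equal $\structA\envelopegl{\varphi}$ and $\structA\envelopemu{\varphi^\sharp}$ respectively, which is item~1. (This half in fact holds over an arbitrary $\Lc$-structure; the assignment-structure hypothesis is only used for item~2.) For item~2, take the dictionary $\vartheta$ with $\vartheta(X)=0$ for all $X\in\Vc$, so that $\psi^\flat = \ddiamond{\psi^\vartheta}{\ltrue}$ and, for \Lmu $\psi$, $\psi^\flat$ is a pure \GL-formula. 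Fix any \assignment $\stdasst$ meeting the control-variable independence convention standing since just before \rref{lem:removectrlvar} (for an \Lmu-formula every $\stdasst$ yields the same denotation, so this costs nothing). The dictionary $\vartheta$ is compatible with $\psi$, and condition~\ref{conditiondagger} holds vacuously because $\vartheta$ sends no variable to $1$; hence \rref{prop:translationmain} applied with $D=\Sc$ gives $\stdasst\envelopemu{\psi} = \stdasst\envelope{\psi^\vartheta}^{\Sc}$. On the other hand, by \rref{def:GL-semantics}, $\stdasst\envelopegl{\psi^\flat} = \stdasst\envelope{\psi^\vartheta}(\stdasst\envelopegl{\ltrue}) = \stdasst\envelope{\psi^\vartheta}^{\Sc}$ since $\stdasst\envelopegl{\ltrue}=\Sc$. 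Chaining these equalities yields $\stdasst\envelopemu{\psi} = \stdasst\envelopegl{\psi^\flat}$, and dropping the irrelevant $\stdasst$ for \Lmu $\psi$ gives $\structA\envelopemu{\psi} = \structA\envelopegl{\psi^\flat}$; reading the statement relativized to a valuation $\stdasst$ (satisfying the standing independence convention) covers the literal \LmuV phrasing.

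There is no substantial obstacle left here: the genuine content lives in \rref{prop:embeddingsharp} and, above all, in \rref{prop:translationmain}, whose fixpoint-alignment argument through \rref{lem:substitution} and \rref{lem:removectrlvar} is the hard part and is already discharged. The only points that need attention are bookkeeping: that $\vartheta\equiv 0$ is compatible with every $\psi$ and makes condition~\ref{conditiondagger} vacuous; that $\stdasst\envelopegl{\ltrue}=\Sc$, so $\ddiamond{\psi^\vartheta}{\ltrue}$ really computes $\stdasst\envelope{\psi^\vartheta}^{\Sc}$; and that the control-variable independence convention can be assumed without loss for \Lmu-formulas, whose denotations are valuation-independent.
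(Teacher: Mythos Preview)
Your proposal is correct and matches the paper's approach: the corollary is stated as an immediate consequence of \rref{prop:embeddingsharp} and \rref{prop:translationmain}, and you have correctly unpacked the bookkeeping (vacuity of condition~\ref{conditiondagger} for the zero dictionary, $\stdasst\envelopegl{\ltrue}=\Sc$, and valuation-independence for closed formulas). The paper gives no explicit proof beyond noting that the result follows from \rref{prop:translationmain}, so your elaboration is exactly what is intended.
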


On assignment structures,
$\mu$-calculus embeds into game logic, which embeds into the two-propositional-variable fragment of the modal $\mu$-calculus.
Thus, any formula of the modal $\mu$-calculus is equivalent over assignment structures to one with only two \pvariables.
This is in contrast to the propositional case,
whose variable hierarchy is strict \cite{DBLP:journals/mst/BerwangerGL07}.

Note that the use of the assignment modality is not necessary and nondeterministic assignments could be used as well.
The proof also works with the deterministic assignment $\setcval{X}$ replaced by $\prandom{\ctrlvar{}};\testcval{X}$.

\section{Proof Calculi}
\label{sec:proofcalculi}

This section introduces a Hilbert-style proof calculus for (the logic confusingly called) first-order modal $\mu$-calculus \LmuV
and first-order game logic \GL
with assignments.
In this section assume that $\Lc$ contains deterministic and nondeterministic assignments.

\subsection{Proof Calculus for First-Order Modal \mutex-Calculus}
\label{sec:proofcalcmu}

The proof calculus consists of the proof rules modus ponens, the atomic monotonicity rule, and the least fixpoint rule.
\begin{center}
	\renewcommand{\linferenceRuleNameSeparation}{\hspace{0.5\tabcolsep}}
	\begin{calculuscollection}
		\begin{calculus}
			\cinferenceRule[modusPonens|MP]{modus ponens}{
			\linferenceRule[sequent]{
				\psi\limply \varphi& \psi
			}{
				\varphi
			}}{}
		\end{calculus}\quad
		\begin{calculus}
			\cinferenceRule[atomicmonotonicity|${\text{M}}_\stdtr$]{monotonicity}{
				\linferenceRule[sequent]{
					\psi\limply \varphi
				}{
					\ddiamond{\stdtr}{\psi} \limply \ddiamond{\stdtr}{\varphi}
				}}{}
		\end{calculus}\quad
		\begin{calculus}
			\cinferenceRule[muI|FP$\mu$]{$\mu$ introduction proof rule}{
			\linferenceRule[sequent]{
			\subst[\psi]{X}{\varphi}\limply \varphi
			}{
				(\lfp{X}{\psi}) \limply \varphi
			}\quad}{\text{$X$ free for $\varphi$ in $\psi$}}
		\end{calculus}
	\end{calculuscollection}
\end{center}
The side condition (elaborated in \rref{app:substitutionpvars}) is necessary.

As axioms add all propositional tautologies and equality axioms.
The set of propositional tautologies is the smallest set of \LmuV-formulas
closed under substitution \(\varphi\mapsto \subst[\varphi]{X}{\psi}\)
which contains all formulas consisting only of \pvariables $X\in \overline{\Vc}$ and conjunctions that are valid when interpreted as a formula of propositional logic in 
the usual way (interpreting bars as negation).
Equality axioms are those of standard first-order logic characterizing equality as a congruence relation with respect to the function and predicate 
symbols in $\Lcf$.
Add the unfolding axiom for fixpoints:
\[
	\cinferenceRule[muf|$\mu$]{$\mu$-fixpoint axiom}
   {
   \linferenceRule[viuqe]
   {\subst[\varphi]{X}{\lfp{X}{\varphi}}}
   {\lfp{X}{\varphi}}\quad
   }{\text{$X$ free for $\lfp{X}{\varphi}$ in $\varphi$}}
\]
Renaming of bound \pvariables is permitted in a proof.
Such technicalities will be glossed over.

Additional axioms are required for interpreted modalities.
As least-fixpoint logic is at least as expressive as first-order logic,
the proof calculus should be made complete
for the first-order fragment. Adding the axioms
	\(
			\cinferenceRule[existsintroduction|$\exists$I]{existential quantifier introduction axiom}
		   {
		   \linferenceRule[impl]
		   {\subst[\varphi]{x}{\stdterm}}
		   {\ddiamond{\prandom{x}}{\varphi}}
		   }{}
	\)
		   and
	\(
			\cinferenceRule[vacuousAssignment|V]{vacous assignment}{
			\linferenceRule[impl]
			{\ddiamond{\prandom{x}}{\psi}}
			{\psi}}{}
	\)
if $x$ not free in $\psi$ achieves this by
G\"odel's completeness theorem \cite{Goedel30}.
Finally for deterministic assignments also add the axiom:\footnote{%
The axiom \(\ddiamond{x:=\stdterm}{\varphi}
\lbisubjunct \varphi\tfrac{\stdterm}{x}\)
handling assignments by (free) syntactic substitution suffices for any particular choice of atomic transitions.
It is not pursued here to avoid technicalities for substitution in atomic transitions.
}
\[
	\cinferenceRule[assignment|$\didia{:=}$]{assignment axiom}
   {
   \linferenceRule[viuqe]
   {\ddiamond{x:=\stdterm}{\varphi}}
   {\ddiamond{\prandom{y}}{(y=\stdterm \land \transposefml{\varphi}{x}{y}})}
   }{\text{$y$ not in $\varphi,\stdterm$}}
\]

Given a set of \LmuV-formulas $T$ 
and an \LmuV-formula~$\varphi$, write \(T\infers[\Lmucalc]\varphi\)
iff there is a proof of $\varphi$ in this calculus.
A \emph{proof} is a sequence of \LmuV-formulas such that
each formula is either an axiom, belongs to $T$
or follows from one of the preceding formulas by an application of one
of the three proof rules.
Since it is unclear whether provability is independent of the signature,
the language whose notion of provability is studied is carried
around as a subscript.

\begin{theoremEnd}{theorem}[\Lmucalc soundness]
	The $\Lmucalc$ proof calculus is sound.
	That is any $\LmuV$-formula $\varphi$ with $\infers[\Lmucalc] \varphi$ is valid.
\end{theoremEnd}

\begin{proofEnd}
	It suffices to prove that all axioms are sound,
	i.e. all their instances are valid
	and that all proof rules are sound, i.e. the conclusion
	is valid whenever all premises are valid.
	Consider a structure $\structA$ and a \assignment $\stdasst$.

\begin{inparaitem}[\noindent-]
\item
	The soundness of the propositional tautologies,
	equality axioms and the rule
	\irref{modusPonens} is as usual.
	Note that for propositional tautologies \rref{prop:negation}
	is used.

\item
	For \irref{atomicmonotonicity} suppose \(\psi \limply \varphi\) is valid.
	Then \(\stdasst\envelopemu{\psi}\subseteq\stdasst\envelopemu{\varphi}\).
	Thus,
	\(\stdasst\envelopemu{\ddiamond{\stdtr}{\psi}\limply\ddiamond{\stdtr}{\varphi}} = \Sc\)
	follows from
		\begin{IEEEeqnarray*}{rCl}
			\stdasst\envelopemu{\ddiamond{\stdtr}{\psi}}
			&=&
			\{\stdstate\in \Sc {\with} \mexists{(\stdstate,\newstate)\in \structA\envelope{\stdtr}} \newstate\in \stdasst\envelopemu{\psi}\}
			\\
			&\subseteq&
			\{\stdstate\in \Sc {\with} \mexists{(\stdstate,\newstate)\in \structA\envelope{\stdtr}} \newstate\in \stdasst\envelopemu{\varphi}\}
			\\
			&=&
			\stdasst\envelopemu{\ddiamond{\stdtr}{\varphi}}.
		\end{IEEEeqnarray*}

\item
	For \irref{muI} suppose $\subst[\psi]{X}{\varphi}\limply \varphi$ is valid.
	Then \(\stdasst\envelopemu{\subst[\psi]{X}{\varphi}}\subseteq\stdasst\envelopemu{\varphi}\).
	By \rref{lem:subsitutesemantically}
	\[\stdasst\tfrac{\stdasst\envelopemu{\varphi}}{X}\envelopemu{\psi}=
	\stdasst\envelopemu{\subst[\psi]{X}{\varphi}}\subseteq\stdasst\envelopemu{\varphi}.\]
	Hence:
	\begin{IEEEeqnarray*}{rCl}
		\stdasst\envelopemu{\lfp{X}{\psi}}
		&=&
		\capfold\{D\subseteq \Sc : \stdasst\tfrac{D}{X}\envelopemu{\psi} \subseteq D\}
		\subseteq\stdasst\envelopemu{\varphi}
	\end{IEEEeqnarray*}

\item
	For the unfolding axiom \irref{muf}
	note that $\stdasst\envelopemu{\lfp{X}{\varphi}}$
	is the least fixpoint of the monotone (\rref{lem:monomu}) map $D\mapsto\stdasst\tfrac{D}{X}\envelopemu{\varphi}$.
	Thus by \rref{lem:subsitutesemantically}
	 $$\stdasst\envelopemu{\lfp{X}{\varphi}}=\stdasst\tfrac{\stdasst\envelopemu{\lfp{X}{\varphi}}}{X}\envelopemu{\varphi}=\stdasst\envelopemu{
		{\subst[\varphi]{X}{\lfp{X}{\varphi}}}
	}.$$

\item Soundness of the axioms \irref{existsintroduction} and \irref{vacuousAssignment} are immediate.

\item For the assignment axiom \irref{assignment} note that $y$
not free in $\varphi$ means that $y$ does not occur in $\varphi$ and
all transitions $\stdtr$ in $\varphi$ are such that 
\[\structA\envelope{\stdtr}
=\{(\stdstate\tfrac{\lambda}{y},\newstate\tfrac{\lambda}{y}) : (\stdstate,\newstate)\in\structA\envelope{\stdtr}\}\]
for all $\lambda \in \domain{\structA}$.

It is sufficient to show that for any state $\stdstate$ there is some~$\lambda$ such that
	\[\stdstate\in \envelopemu{\varphi}\quad\text{iff}\quad \stdstate\tfrac{\lambda}{y}\in \envelopemu{y=\stdterm\land\transposefml{\varphi}{x}{y}}.\]
Choosing $\lambda = \stdstate\envelope{\stdterm}$
the right hand side is equivalent to
\(\stdstate\tfrac{\stdstate\envelope{\stdterm}}{y}\in\envelopemu{\transposefml{\varphi}{x}{y}},\)
because $y$ is not free in $\stdterm$.
By \rref{lem:transposesubst} this is equivalent to \(\stdstate\tfrac{\stdstate\envelope{\stdterm}}{x}\tfrac{\stdstate(x)}{y}\in \envelopemu{\varphi}.\)
Because $y$ is not free in $\varphi$ it follows that
\(\stdstate\tfrac{\stdstate\envelope{\stdterm}}{x}\in \envelopemu{\varphi}.\)
\end{inparaitem}
\end{proofEnd}

Substitution of \LmuV-formulas for \pvariables uniformly in an \Lmucalc derivation
is an admissible rule.

\begin{theoremEnd}{proposition}[\Lmucalc substitution]\label{prop:substituteoverproof}
	Let $\varphi,\psi$ be \LmuV-formulas.
	If \(\infers[\Lmucalc] \varphi\) and $X$ is free for $\psi$ in $\varphi$	 then \(\infers[\Lmucalc] \subst[\varphi]{X}{\psi}\).
\end{theoremEnd}
\begin{proofEnd}
By induction on the derivation witnessing \(\infers[\Lmucalc] \varphi\).

\begin{inparaitem}[\noindent-]
\item
	If $\varphi$ is an instance of an equality axiom
	it does not contain any \pvariables.

\item
	If $\varphi$ is a propositional tautology, then $\subst[\varphi]{X}{\psi}$
	is also a propositional tautology, since the set of propositional tautologies is closed under substitution.

\item
	If $\varphi$ is an instance of \irref{existsintroduction},
	\irref{vacuousAssignment} or
	\irref{assignment} then $\varphi\tfrac{\psi}{X}$
	is an instance of the same axiom.
	(For \irref{assignment} note that $\varphi_x^y\tfrac{\psi}{X}
	= (\varphi\tfrac{\psi}{X})_x^y$ )

\item
	If $\varphi$ is an instance of the axiom \irref{muf}, i.e. 
	of the form $\subst[\rho]{Y}{\lfp{Y}{\rho}}\lbisubjunct\lfp{Y}{\rho}$,
	where $Y$ is free for $\lfp{Y}{\rho}$ in $\rho$.
	There are two cases.
	If $Y\in \{X,\overline{X}\}$ then $\varphi$ and
	the substituted formula $\varphi\frac{\psi}{X}$
	are identical by \rref{lem:doublesubstitution}, so there is nothing to do.
	If $Y\notin \{X,\overline{X}\}$ then because $X$ is free for $\psi$ in 
	$\varphi$, the \pvariable $Y$ does not occur freely in $\psi$.
	By \rref{lem:doublesubstitution}
	the substituted formula has the form
		\[(\rho\tfrac{\psi}{X})\tfrac{\lfp{Y}{\rho\tfrac{\psi}{X}}}{Y} \lbisubjunct\lfp{Y}{(\rho\tfrac{\psi}{X})}\]
	and is provable as an instance of \irref{muf}.
	The side condition is satisfied since $Y$ is free for $\lfp{Y}{\rho}$
	in $\rho$ and $Y$ does not occur freely in $\psi$.

\item If $\varphi$ is the conclusion of an application of
	\irref{modusPonens}, i.e. there is $\psi$ such that 
	$\infers[\Lmucalc] \rho$ and $\infers[\Lmucalc] \rho\limply\varphi$.
	Since both derivations are shorter the inductive hypothesis yields
	$\infers[\Lmucalc] \rho\tfrac{\psi}{X}$ and $\infers[\Lmucalc] \rho\frac{\psi}{X}\limply\varphi\frac{\psi}{X}$.
	Thus $\infers[\Lmucalc] \varphi\frac{\psi}{X}$ derives by \irref{modusPonens}.

\item
	The case where $\varphi$ is the conclusion of an application of
	\irref{atomicmonotonicity} is similar.

\item
	Suppose $\varphi$ is the conclusion of an application of the rule \irref{muI}, so of the form $(\lfp{Y}{\rho})\limply \sigma$.
	Then $Y$ is free for $\sigma$ in $\rho$ and
	$\infers[\Lmucalc{}] \rho\tfrac{\sigma}{Y}\limply\sigma$.
	By the inductive hypothesis this implies $\infers[\Lmucalc] (\rho\tfrac{\sigma}{Y}\limply\sigma)\tfrac{\psi}{X}$.
	First assume that $Y\in \{X,\overline{X}\}$.
	Then by \rref{lem:doublesubstitution} also
	$$\infers[\Lmucalc] \rho\tfrac{\sigma\frac{\psi}{X}}{Y}\limply\sigma\tfrac{\psi}{X}.$$
	Because $Y$ is free for $\sigma$ in $\rho$ and $X$ is free for $\psi$ in $\varphi$ rule \irref{muI} is applicable and implies
	$\lfp{Y}{\varphi}\limply\sigma\tfrac{\psi}{X}$ as required.

	Now assume that $Y\notin \{X,\overline{X}\}$. 
	By renaming bound variables, 	
	assume without loss of generality
	that no free
	\pvariable of $\psi$ is bound in $\rho$.
	Because $X$ is free for
	$\psi$ in $\lfp{Y}{\rho}$, the \pvariables $Y$ and $\overline{Y}$
	are not free in $\rho$. By \rref{lem:doublesubstitution}{}
	then 
		\[\infers[\Lmucalc] (\rho\tfrac{\psi}{X})\tfrac{\sigma\tfrac{\psi}{X}}{Y} \limply \sigma\tfrac{\psi}{X}\]
	Because $X$ is free for $\psi$ in $\varphi$, %
	the \pvariable $Y$ does not occur freely in $\psi$.
	So since $Y$ is free for $\sigma$ in $\rho$ and 
	no free \pvariable of $\psi$ is bound in $\rho$, the \pvariable $Y$
	is free for $\sigma\tfrac{\psi}{X}$ in $\rho\tfrac{\psi}{X}$.
	By \irref{muI} it follows that
		\[\infers[\Lmucalc]\lfp{Y}{(\rho\tfrac{\psi}{X})} \limply\sigma\tfrac{\psi}{X}\qedhere\]
\end{inparaitem}
\end{proofEnd}

An important consequence is that free \pvariables
do not increase the deductive strength of the \Lmucalc-calculus:

\begin{theoremEnd}{corollary}
	If $\varphi$ is an \Lmu-formula with $\infers[\Lmucalc]\varphi$
	then there is a derivation of $\varphi$ consisting only of \Lmu-formulas.
\end{theoremEnd}
\begin{proofEnd}
	The proof is by induction on the length of the derivation of $\varphi$.
	All cases except for \irref{modusPonens} are immediate.
	Suppose that~$\varphi$ is the consequence of an application of \irref{modusPonens}.
	That is, there is some \LmuV-formula $\psi$ such that $\infers[\Lmucalc]\psi\limply\varphi$ and $\infers[\Lmucalc]\psi$.
	For readability assume that $X$ is the only free \pvariable in $\psi$.
	By \rref{prop:substituteoverproof} it follows that
	$\infers[\Lmucalc]\psi\tfrac{\lfalse}{X}\limply\varphi$ and $\infers[\Lmucalc]\psi\tfrac{\lfalse}{X}$.
	A close inspection of the proof of \rref{prop:substituteoverproof}
	reveals that those derivations are no longer than those of $\psi \limply\varphi$ and $\psi$ respectively.
	By the induction hypothesis, $\infers[\Lmucalc]\psi\tfrac{\lfalse}{X}\limply\varphi$ and $\infers[\Lmucalc]\psi\tfrac{\lfalse}{X}$
	are witnessed by derivations consisting only of \Lmu-formulas.
	Since $\varphi$ follows with \irref{modusPonens},
	there is a derivation of $\varphi$ without free \pvariables.
\end{proofEnd}

\subsection{Proof Calculus for First-Order Game Logic}
\label{sec:proofcalcgl}

This section introduces a similar Hilbert-style proof calculus for \GL-formulas.
It consists of the proof rule \irref{modusPonens} together with
the following monotonicity and fixpoint rules:
\begin{center}
	\begin{calculuscollection}
		\begin{calculus}
			\cinferenceRule[monotonicity|$\text{M}$]{monotonicity}{
				\linferenceRule[sequent]{
					\psi\limply \varphi
				}{
					\ddiamond{\gamma}{\psi} \limply \ddiamond{\gamma}{\varphi}
				}}{}
		\end{calculus}\qquad
		\begin{calculus}
			\cinferenceRule[fixpoint|FP*]{fixpoint rule}{
			\linferenceRule[sequent]{
				(\psi \lor\ddiamond{\gamma}{\varphi})\limply \varphi
			}{
				\ddiamond{\prepeat{\gamma}}{\psi} \limply \varphi
			}\quad}{}
		\end{calculus}
	\end{calculuscollection}
\end{center}
Here $\gamma$ ranges over all games, not only atomic transitions~$\stdtr$ as is the case for first-order $\mu$-calculus.
As axioms, add all propositional tautologies, equality axioms, the axioms \irref{existsintroduction} \irref{vacuousAssignment},\irref{assignment}
as well as the following
axiom schemata capturing the semantics of games:

\begin{calculus}
	\cinferenceRule[test|$\didia{?}$]{challenge axiom}
   {
   \linferenceRule[viuqe]
   {\ddiamond{\ptest{\psi}}{\varphi}}
   {(\psi\land\varphi)}
   }{}
   \cinferenceRule[choice|$\didia{{\cup}}$]{choice axiom}
   {
	\linferenceRule[viuqe]
   {\ddiamond{\pchoice{\gamma_1}{\gamma_2}}{\varphi}}
   {(\ddiamond{\gamma_1}{\varphi} \lor \ddiamond{\gamma_2}{\varphi} )}
   }{}
   \cinferenceRule[composition|$\didia{{;}}$]{composition axiom}
   {
	\linferenceRule[viuqe]
   {\ddiamond{\gamma_1;\gamma_2}{\varphi}}
   {\ddiamond{\gamma_1}{\ddiamond{\gamma_2}{\varphi}}}
   }{}
   \cinferenceRule[fixpointaxiom|$\didia{{}^*}$]{fixpoint axiom}
   {
	\linferenceRule[viuqe]
   {\ddiamond{\prepeat{\gamma}}{\varphi}}
   {(\varphi\lor\ddiamond{\gamma}{\ddiamond{\prepeat{\gamma}}{\varphi}})}
   }{}
   \cinferenceRule[demon|$ \didia{{^d}}$]{demon}
   {
	\linferenceRule[viuqe]
   {\ddiamond{\pdual{\gamma}}{\varphi}}
   {\lnot\ddiamond{\gamma}{\lnot\varphi}}
   }{}
\end{calculus}

In the axioms and proof rules, games range only over \GL{}-games and formulas 
only over \GL-formulas. Hence derivations consists of a sequence of \GL-formulas
without \pvariables.
Provability $\infers[\GLcalc]$ in \GL
is defined like provability $\infers[\Lmucalc]$ in \Lmu.

\begin{theoremEnd}{theorem}[\GL soundness]
	The \GL proof calculus is sound.
	That is, any \GL-formula $\varphi$ with \(\infers[\GLcalc] \varphi\) is valid.
\end{theoremEnd}

\begin{proofEnd}
It suffices to prove that all axioms and proof rules are sound.
Consider a structure $\structA$ and \assignment $\stdasst$.

\begin{inparaitem}[\noindent-]
\item
	The soundness of the propositional tautologies,
	equality axioms and the rule
	\irref{modusPonens} is straightforward.

\item
	For \irref{monotonicity} suppose $\psi \limply
	\varphi$ is valid.
	Then $\stdasst\envelopegl{\psi}\subseteq\stdasst\envelopegl{\varphi}$.
	Thus, $\Sc = \stdasst\envelopegl{\ddiamond{\gamma}{\psi}\limply\ddiamond{\gamma}{\varphi}}$ follows by \rref{lem:monotone}:
		\begin{IEEEeqnarray*}{rCl}
			\stdasst\envelopegl{\ddiamond{\gamma}{\psi}}
			&=&\stdasst\envelope{{\gamma}}(\stdasst\envelopegl{{\psi}})\\
			&\subseteq&
			\stdasst\envelope{{\gamma}}(\stdasst\envelopegl{{\varphi}})
			= \stdasst\envelopegl{\ddiamond{\gamma}{\varphi}}
		\end{IEEEeqnarray*}

\item
	For rule \irref{fixpoint} suppose $(\psi \lor\ddiamond{\gamma}{\varphi})\limply \varphi$ is valid.
	Then 
		$$\stdasst\envelopegl{\psi}\cup\stdasst\envelope{\gamma}(\stdasst\envelopegl{\varphi})
		\subseteq \stdasst\envelopegl{\varphi}.$$
	Hence
	\begin{IEEEeqnarray*}{rCl}
		\stdasst\envelopegl{\ddiamond{\prepeat{\gamma}}{\varphi}}
		&=&\capfold \{Z\subseteq\Sc : 
		\stdasst\envelopegl{\psi}
		\cup\stdasst\envelope{\gamma}^Z\subseteq Z\}\\
		&\subseteq& \stdasst\envelopegl{\varphi}.
	\end{IEEEeqnarray*}

\item
	For the axiom \irref{test}
	\begin{IEEEeqnarray*}{rCl}
		\stdasst\envelopegl{\ddiamond{\ptest{\psi}}{\varphi}}
		&=&\stdasst\envelope{\ptest{\psi}}(\stdasst\envelopegl{\varphi})
		\\
		&=&\stdasst\envelopegl{\psi}\cap
		\stdasst\envelopegl{{\varphi}}
		=\stdasst\envelopegl{{\psi\land \varphi}}.
	\end{IEEEeqnarray*}

\item
	For the axiom \irref{choice}
	\begin{IEEEeqnarray*}{rCl}
		\stdasst\envelopegl{\ddiamond{\pchoice{\gamma_1}{\gamma_2}}{\varphi}}
		&=&\stdasst\envelope{\pchoice{\gamma_1}{\gamma_2}}(\stdasst\envelopegl{\varphi})
		\\
		&=&\stdasst\envelope{{\gamma_1}}(\stdasst\envelopegl{\varphi})
		\cup 
		\stdasst\envelope{{\gamma_2}}(\stdasst\envelopegl{\varphi})
		\\
		&=& \stdasst\envelopegl{\ddiamond{{\gamma_1}}{\varphi}
		\lor \ddiamond{{\gamma_2}}{\varphi}}.
	\end{IEEEeqnarray*}

\item
	For the axiom \irref{composition}
	\begin{IEEEeqnarray*}{rCl}
		\stdasst\envelopegl{\ddiamond{{\gamma_1};{\gamma_2}}{\varphi}}
		&=&\stdasst\envelope{{\gamma_1};\gamma_2}(\stdasst\envelopegl{\varphi})
		\\
		&=&
		\stdasst\envelope{\gamma_1}(\stdasst\envelope{\gamma_2}\stdasst\envelopegl{\varphi})
		\\
		&=&\stdasst\envelope{\gamma_1}(\stdasst\envelopegl{\ddiamond{\gamma_2}{\varphi}})\\
		&=&
		\stdasst\envelopegl{\ddiamond{\gamma_1}{\ddiamond{\gamma_2}{\varphi}}}.
	\end{IEEEeqnarray*}

\item
	For the axiom \irref{fixpointaxiom}
	note that $\stdasst\envelope{\ddiamond{\prepeat{\gamma}}{\varphi}}$ is the least fixpoint of
	the map $Z\mapsto \stdasst\envelopegl{\varphi}
	\cup\stdasst\envelope{\gamma}^Z$.
	Hence a fixpoint:
		$$\stdasst\envelopegl{\varphi}
		\cup\stdasst\envelope{\gamma}(\stdasst\envelope{\ddiamond{\prepeat{\gamma}}{\varphi}}) = \stdasst\envelope{\ddiamond{\prepeat{\gamma}}{\varphi}}.$$

\item
	For the axiom \irref{demon}
	\begin{IEEEeqnarray*}{+rCl+rx*}
		\stdasst\envelopegl{\ddiamond{\pdual{\gamma}}{\varphi}}
		&=&\Sc \setminus\stdasst\envelope{{\gamma}}(\Sc\setminus\stdasst\envelopegl{\varphi})
		\\
		&=&
		\Sc \setminus\stdasst\envelope{{\gamma}}(\stdasst\envelopegl{\lnot\varphi})\\
		&=&
		\Sc \setminus\stdasst\envelopegl{\ddiamond{\gamma}{\lnot\varphi}}\\
		&=&\stdasst\envelopegl{\lnot\ddiamond{\gamma}{\lnot\varphi}}.
	\end{IEEEeqnarray*}

\item Soundness of the axioms \irref{existsintroduction}
	and	\irref{vacuousAssignment} is immediate.
	
\item Soundness of \irref{assignment}
	is just like for \Lmu. Use \rref{lem:substforreplgl}{} instead of \rref{lem:transposesubst}.
\end{inparaitem}
\end{proofEnd}

This proof calculus for first-order game logic is
essentially the proof calculus for propositional game logic~\cite{DBLP:conf/focs/Parikh83}.
The calculus for first-order $\mu$-calculus is weaker than the propositional one \cite{DBLP:journals/tcs/Kozen83}, which has
additional axioms for modalities.

\subsection{Relating the Proof Calculi}\label{sec:relatingcalculi}

Recall that by the assumption that $\Lc$ 
contains deterministic assignment modalities the translation ${}^\vartheta$ is defined.
The fresh control variable $\ctrlvar{}$ is assumed to be
independent
of all propositional variables and all
modalities not mentioning $\ctrlvar{}$ explicitly.
Formally assume that the proof calculi are extended such that for every $X\in\overline{\Vc}$ the equivalence
$\varphi \lbisubjunct \ddiamond{\setcval{X}}{\varphi}$
is provable in both calculi for every formula $\varphi$
not explicitly mentioning~$\ctrlvar{}$.
This is the syntactic analogue of \rref{lem:removectrlvar}.

Some of the proofs 
\iflongversion
subsequently
\else
in \rref{app:proofs}
\fi
use auxiliary derived axioms and derived proof rules summarized in \rref{app:auxax}.

\begin{theoremEnd}{lemma}\label{lem:barflat}
	For any \GL-formula $\rho$, any \LmuV-formula~$\psi$ 
	and any compatible dictionary $\vartheta$ such that
	$\vartheta(X)=1$ for all $X$ which are free in $\psi$:
	\[\infers[\GLcalc] \ddiamond{\overline{\psi}^{\vartheta}}{\rho}\lbisubjunct  \ddiamond{\pdual{{\psi^{{\vartheta}}}}}{\rho}.\]
\end{theoremEnd}
\begin{proofEnd}
	An auxiliary definition is needed to prove the relation of the translation $\cdot^\vartheta$ of a formula and its barred version.
	Suppose $\psi$ is an \LmuV-formula and $\vartheta$ a compatible dictionary such 
	that $\vartheta(X)=1$ for all $X$ which are free in $\psi$.
	Write $\psi^{\overline{\vartheta}}$ for the formula $\psi^\vartheta$
	except that any occurrence of $\ctrlval{Y}$ is replaced by
	$\ctrlval{\overline{Y}}.$
	By induction on the definition
	the derivations for the following equivalence are sketched:
	\[\infers[\GLcalc] \ddiamond{\overline{\psi}^{\vartheta}}{\rho}\lbisubjunct  \ddiamond{\pdual{{\psi^{\overline{\vartheta}}}}}{\rho}.\]

	\begin{inparaitem}[\noindent- \emph{Case:}]
		\item $\stdlit\in \FOLL_{\Lcf}$: apply axioms
		\irref{demon+composition+test}.

		\item $X\in \Vc$: If $\vartheta(X)=0$ the derivation is like for literals.
		If $\vartheta(X)=1$, \irref{demon}, \irref{assignment} and quantifier reasoning prove:
			$$\infers[\GLcalc]\ddiamond{\setcval{\overline{X}}}{\rho}\lbisubjunct \ddiamond{{\pdual{(\setcval{\overline{X}})}}}{\rho},$$

		\item $\varphi_1 \land \varphi_2$: apply the \irref{choice} axiom.

		\item $\varphi_1 \lor \varphi_2$: apply the \irref{dchoice} axiom.
		
		\item $\dbox{\stdtr}{\varphi}$: apply axiom \irref{composition}
		and proof rule \irref{monotonicity}.

		\item $\ddiamond{\stdtr}{\varphi}$: apply axiom \irref{composition}
		and proof rule \irref{monotonicity}.
		Then apply the \irref{dcomposition} axiom.
		
		\item $\gfp{X}{\varphi}$:
		The following equivalences are provable in \GLcalc
		\begin{IEEEeqnarray*}{rCl}
			&&\ddiamond{\overline{\gfp{X}{\varphi}}^\vartheta}{\rho}\\
			&\text{iff}\quad&
			\ddiamond{\setcval{\overline{X}};\prepeat{(
			{\testcval{\overline{X}}}	
			;\overline{\varphi}^{\dictsubst{\overline{X}}})};\testcvalnot{\overline{X}}}{\rho}
			\\
			&\text{iff}\quad&
			\ddiamond{\setcval{\overline{X}};\prepeat{(
			{\testcval{\overline{X}}};
			\pdual{{\varphi^{\overline{\dictsubst{{X}}}}}}
			)};\testcvalnot{\overline{X}}}{\rho}
			\\
			&\text{iff}\quad&
			\ddiamond{\pdual{(
				\setcval{\overline{X}};{(
			\pdual{(\testcval{\overline{X}})}	
			;{\varphi}^{\overline{\dictsubst{X}}})}^\demrep;\pdual{(\testcvalnot{\overline{X}})})}}{\rho}
			\\
			&\text{iff}\quad&
			\ddiamond{\pdual{(\gfp{X}{
				{\varphi^{\overline{\vartheta}}}
				})}}{\rho}
		\end{IEEEeqnarray*}
		
		The second equivalence is by the induction hypothesis and by replacing equivalent
		programs in this context. (This follows by applying \irref{composition}
		to decompose and \rref{lem:replaceinloop} together with \irref{monotonicity}.)
		The third equivalence is by \irref{dcomposition} and \irref{assignment}.

		\item $\lfp{X}{\varphi}$: Like the previous case with the
		necessary additional applications of \irref{demon} and \irref{dcomposition}.
	\end{inparaitem}

	Because $\ctrlvar{X}$ is independent
	of modalities, literals and propositional
	variables the statement is derivable.
\end{proofEnd}

\begin{theoremEnd}{lemma}\label{lem:sharpdistribute}
	For \GLV-formulas $\varphi,\psi$
	$(\varphi\tfrac{\psi}{X})^\sharp\equiv\varphi^\sharp\tfrac{\psi^\sharp}{X}$
\end{theoremEnd}
\begin{proofEnd}
	This is proved by a straightforward induction on the rank introduced in the 
	proof of \rref{prop:embeddingsharp}.
	For the cases $\lnot\varphi$ and $\ddiamond{\pdual{\gamma}}{\varphi}$ use \rref{lem:negationsubstitution}.
\end{proofEnd}

\newcommand{\twostepmod}[2][\vartheta]{{#2}_{#1}}%
\newcommand{\ptwostepmod}[2][\vartheta]{{#2}_{#1}}%

\newcommand{\swostepmod}[3][X]{{#2}_{#1}}%

Without loss of generality assume that all dictionaries take 
the value $1$ only finitely often.
\begin{theoremEnd}{lemma}\label{lem:trueiffalse}
	Suppose $\psi$ is an \LmuV-formula and $\vartheta$ a compatible dictionary such that $\vartheta(X)=1$ for all
	free \pvariables in $\psi$. Then
	\[\infers[\GLcalc]
		\ddiamond{\psi^\vartheta}{\ltrue}\lbisubjunct
		\ddiamond{\psi^\vartheta}{(\vartheta(\ctrlvar{})=1)}\]
	where  $\vartheta(\ctrlvar{}) = 1$ is the disjunction of the
	formulas $\eqcval{X}$ for all~$X$ with $\vartheta(X)=1$.

	In particular 
	\(\infers[\GLcalc]
		\ddiamond{\psi^\vartheta}{\ltrue}\lbisubjunct
		\ddiamond{\psi^\vartheta}{\rho}\)
	for any \GL-formula $\rho$ whenever $\psi$ does not contain free \pvariables.
\end{theoremEnd}
\begin{proofEnd}
	The backward implication is immediate by \irref{monotonicity}.
	The forward implication is a straightforward induction.

	The in particular follows by \irref{monotonicity}
	 because $\vartheta(\ctrlvar{}) = 1$ is the empty disjunction, i.e. $\lfalse$ by compatibility.
\end{proofEnd}

Recall that for any \Lmu-formula $\psi$ the formula $\psi^\flat$ abbreviates $\ddiamond{\psi^\eta}{\ltrue}$
where $\eta$ is the dictionary $\eta(X)=0$ for all $X$.

\begin{theoremEnd}{proposition}\label{prop:rco}
	For a \GL{}-formula~$\varphi$ and an \Lmu-formula~$\psi$
		\[\infers[\GLcalc](\varphi^\sharp)^\flat\lbisubjunct \varphi \quad \text{and}\quad  \infers[\Lmucalc] (\psi^\flat)^\sharp\lbisubjunct \psi.\]
\end{theoremEnd}

\begin{proofEnd}
	First prove $\infers[\GLcalc](\varphi^\sharp)^\flat\lbisubjunct\varphi$.
	More generally prove by induction on the definition for all \GL-formulas $\varphi$ and all \GL-games~$\gamma$:	
		\[
			\infers[\GLcalc] {\varphi^\sharp}^\flat\leftrightarrow\varphi
			\quad\text{and}\quad
		\infers[\GLcalc] \ddiamond{{(\ddiamond{\gamma}{\sigma})^\sharp}^{\vartheta}}{\rho}
		\leftrightarrow \ddiamond{\gamma;{\sigma^\sharp}^\vartheta}{\rho}\]
	simultaneously for all \GL-formula $\rho$, 
	all \GLV-formulas~$\sigma$ and
	all compatible dictionaries $\vartheta$ such that $\vartheta(X)=1$ for every free \pvariable $X$ of $\sigma$.

	For formulas:

	\begin{inparaitem}[- \emph{Case}]
		\item $\stdlit$ for $\stdlit\in \FOLL_{\Lcf}$: simply by unraveling the definition of the 
		translations.

		\item $\lnot\varphi$: let $\eta$ be the dictionary taking value $0$ for every \pvariable.
		By \rref{lem:barflat}
		$(\varphi^\sharp)^\flat$ is the formula 
		$\ddiamond{\pdual{{(\varphi^\sharp)^{{\eta}}}}}{\ltrue}.$
		By \irref{demon} and \rref{lem:trueiffalse}
		this is provably equivalent to $\lnot\ddiamond{{{(\varphi^\sharp)^{{\eta}}}}}{\ltrue}$.
		The equivalence follows by the induction hypothesis.
		
		\item $\varphi_1\lor\varphi_2$: by \irref{choice} and the induction hypothesis.
		
		\item $\ddiamond{\gamma}{\varphi}$: apply the induction hypothesis for $\gamma$
		with $\sigma\equiv\varphi$, $\rho \equiv \ltrue$
		and $\vartheta=\eta$
		. This yields 
		\(\infers[\GLcalc] \ddiamond{{(\ddiamond{\gamma}{\varphi})^\sharp}^{\eta}}{\ltrue}
		\leftrightarrow \ddiamond{\gamma;{\varphi^\sharp}^\eta}{\ltrue}\).
		By axiom \irref{composition},
		\(\infers[\GLcalc] \ddiamond{{(\ddiamond{\gamma}{\varphi})^\sharp}^{\eta}}{\ltrue}
		\leftrightarrow \ddiamond{\gamma}{{\varphi^\sharp}^\flat}\).
		Since \({\varphi^\sharp}^\flat\) is 
		provably equivalent to \(\varphi\)
		by the induction hypothesis for $\varphi$
		the desired equivalence follows with~\irref{monotonicity}.
	\end{inparaitem}

	\begin{inparaitem}[- \emph{Case}]
		\item $\stdtr$ for $\stdtr\in \Act$: simply by unraveling the definition of the 
		translations.
		
		\item $\ptest{\varphi_1}$:
		by \irref{test} and \irref{composition} the right hand side of the equivalence is provably equivalent to $\varphi_1\land\ddiamond{{\sigma^\sharp}^\vartheta}{\rho}$.
		By unraveling the definitions the left hand side is provably equivalent to $\ddiamond{\dchoice{{\varphi_1^\sharp}^\vartheta}{{\sigma^\sharp}^\vartheta}}{\rho}$.
		Note that ${{\varphi_1^\sharp}^\vartheta}\equiv{{\varphi_1^\sharp}^\eta}$ because $\varphi_1$ does not contain \pvariables.
		Hence
		\(\infers[\GLcalc]
			\ddiamond{{{\varphi_1^\sharp}^\vartheta}}{\rho}
		\lbisubjunct
		\ddiamond{{{\varphi_1^\sharp}^\eta}}{\ltrue}\) by \rref{lem:trueiffalse}{}.
		By the induction hypothesis for $\varphi_1$ and \irref{dchoice} the
		desired statement follows.
		
		\item $\pchoice{\gamma_1}{\gamma_2}$: by \irref{choice} and the induction hypothesis.
		
		\item ${\gamma_1};{\gamma_2}$: by the induction hypothesis for $\gamma_1$ and \irref{composition} the formula
		\( \ddiamond{{(\ddiamond{\gamma_1}{\ddiamond{\gamma_2}{\sigma}})^\sharp}^{\vartheta}}{\rho}\)
		is \GL-provably equivalent to
		\[ \ddiamond{\gamma_1}{\ddiamond{{(\ddiamond{\gamma_2}{\sigma})^\sharp}^\vartheta}{\rho}}.\]
		By \irref{monotonicity} and inductive hypothesis for $\gamma_2$ this is equivalent to
		\[ \ddiamond{\gamma_1}{\ddiamond{{\gamma_2;({\sigma}^\sharp)}^\vartheta}{\rho}}.\]
		The statement follows with another application of \irref{composition}.

		\item $\prepeat{\gamma}$: The following equivalences are provable in \GLcalc
		\begin{IEEEeqnarray*}{rCl}
			&&
			\ddiamond{{(\ddiamond{\prepeat{\gamma}}{\sigma})^\sharp}^\vartheta}{\rho}
			\\
			&\text{iff}\quad&
			\ddiamond{\setcval{X};\prepeat{(\testcval{X};{(\sigma\lor \ddiamond{\gamma}{X})^\sharp}^{\dictsubst[\vartheta]{X}})};\testcvalnot{X}}{\rho}
			\\
			&\text{iff}\quad&
			\ddiamond{\setcval{X};\prepeat{(\testcval{X};{(\pchoice{{\sigma^\sharp}^{\dictsubst{X}{}}}{\gamma; {X^\sharp}^{\dictsubst{X}{}} )}})};\testcvalnot{X}}{\rho}
			\\
			&\text{iff}\quad&
			\ddiamond{\setcval{X};\prepeat{(\testcval{X};{(\pchoice{{\sigma^\sharp}^{\vartheta}}{{\gamma}; \setcval{X})}})};\testcvalnot{X}}{\rho}
			\\
			&\text{iff}\quad&
			\ddiamond{\prepeat{{\gamma}};{\sigma^\sharp}^{\vartheta}
			}{\rho}
		\end{IEEEeqnarray*}
		The second equivalence is by the induction hypothesis and \rref{lem:replaceinloop}{}.
		The third equivalence is because $X$ is some fresh \pvariable not occurring in $\sigma^\sharp$.
		The last equivalence is proved with the axioms for repetition and uses that if 
		the game ${\sigma^\sharp}^\vartheta$ ends,
		it ends with $\vartheta(\ctrlvar{X}) = 1$, i.e. 
		with $\neqcval{X}$. 

		\item $\pdual{\gamma}$: by \rref{lem:barflat} the induction hypothesis and \irref{dcomposition}.
	\end{inparaitem}

	\bigskip

	For the second equivalence an auxiliary definition is useful.
For any \LmuV-formula $\psi$ and any variable $X\in \overline{\Vc}$
define another \LmuV-formula~$\swostepmod{\psi}{X}$ by replacing free occurrences of any
\pvariable $Z$ with $\vartheta(Z)=1$ by $\ddiamond{\setcval{Z}}{X}$.
By induction prove
	$$ \infers[\Lmucalc] (\ddiamond{\psi^\vartheta}{X})^\sharp\lbisubjunct \swostepmod{\psi}{\rho}.$$
for any compatible dictionary $\vartheta$ and any $X\in \overline{\Vc}$.

\begin{inparaitem}[\noindent - \emph{Case}]
	\item $\stdlit$ for $\stdlit\in \FOLL_{\Lcf}$:
	immediate by \irref{test}, \irref{demon} and \irref{composition}.

	\item $Z$ for $Z\in \overline{\Vc}$: if $\vartheta(Z)=0$ 
	this is exactly the same as the previous case.
	If $\vartheta(Z)=1$ the antecedent
	and the consequent of the formula are both $\ddiamond{\setcval{Z}}{X}$.

	\item $\psi_1\lor\psi_2$ and $\psi_1\land\psi_2$: immediate by unraveling the definitions and
	the induction hypothesis.

	\item $\ddiamond{\stdtr}{\psi}$: immediate by the induction hypothesis
	and \irref{atomicmonotonicity}.

	\item  $\dbox{\stdtr}{\psi}$: immediate by the induction hypothesis
	and \irref{atomicmonotonicity}.

	\item $\lfp{Y}{\psi}$: each of the following equivalences 
	is provable in \Lmucalc
	\begin{IEEEeqnarray*}{rCl}
		&&(\ddiamond{(\lfp{Y}{\psi})^\vartheta}{X})^\sharp
		\\
		&\text{iff}\quad&
		\ddiamond{\setcval{Y}}{\lfp{Y}{(
			(\neqcval{Y}\land X)
			\lor (\eqcval{Y}\land (\ddiamond{\psi^{\dictsubst{Y}}}{Y})^\sharp)
		)}}
		\\
		&\text{iff}\quad&
		\ddiamond{\setcval{Y}}{\lfp{Y}{(
			(\neqcval{Y}\land X)
			\lor (\eqcval{Y}\land \swostepmod[Y]{\psi})
		)}}
		\\
		&\text{iff}\quad&
		\ddiamond{\setcval{Y}}{\lfp{Y}{(
			(\neqcval{Y}\land X)
			\lor (\eqcval{Y}\land \swostepmod[Y]{\psi}{Y}
			\tfrac{\ddiamond{\setcval{Y}}{Y}}{Y})
		)}}
		\\
		&\text{iff}\quad&
		\lfp{Y}{\swostepmod[X]{\psi}{}}
	\end{IEEEeqnarray*}
	The second equivalence is by the induction hypothesis and \irref{fixpointmonotonicity}.
	The last equivalence is proved with \irref{muf} and \irref{muI}.

	\item $\gfp{X}{\psi}$: similar to the previous case.

	For a formula $\psi$ without free \pvariables
	the statement follows with \rref{prop:substituteoverproof}{} by substituting $X$ with $\ltrue$
\end{inparaitem}
\end{proofEnd}

\begin{corollary}
	The first-order modal $\mu$-calculus is complete relative to first-order game logic over assignment structures and vice versa.
	That is any valid formula in one logic is provable
	from the translation of a valid formula into the other logic.
\end{corollary}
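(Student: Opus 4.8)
The plan is to derive the corollary directly from the semantic equi-expressivity of \rref{cor:generalequi-expressivity} together with the provable round-trip equivalences of \rref{prop:rco}, recalling throughout that validity is taken relative to assignment structures (and that $\Lc$ contains deterministic and nondeterministic assignments, so all translations involved are defined).

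First I would treat completeness of \Lmucalc relative to \GLcalc. Given a valid \Lmu-formula $\psi$, I would apply \rref{cor:generalequi-expressivity} to get $\structA\envelopemu{\psi}=\structA\envelopegl{\psi^\flat}$ for every assignment structure $\structA$; since $\psi$ is valid, the right-hand side equals $\Sc$ for every assignment structure, so the \GL-formula $\psi^\flat$ is valid as well. Its translation back into the $\mu$-calculus is $(\psi^\flat)^\sharp$, which is again a pure \Lmu-formula, and \rref{prop:rco} supplies $\infers[\Lmucalc](\psi^\flat)^\sharp\lbisubjunct\psi$. Taking $(\psi^\flat)^\sharp$ — that is, the translation of the valid \GL-formula $\psi^\flat$ — as an extra axiom therefore lets \Lmucalc derive $\psi$ by \irref{modusPonens} and propositional reasoning. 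Hence every valid \Lmu-formula is \Lmucalc-provable from the translation of a valid \GL-formula.

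The converse direction I would handle symmetrically. For a valid \GL-formula $\varphi$, \rref{cor:generalequi-expressivity} gives $\structA\envelopegl{\varphi}=\structA\envelopemu{\varphi^\sharp}$ for every assignment structure, so $\varphi^\sharp$ is a valid \Lmu-formula; its translation back into game logic is $(\varphi^\sharp)^\flat$, and \rref{prop:rco} gives $\infers[\GLcalc](\varphi^\sharp)^\flat\lbisubjunct\varphi$. So from the single axiom $(\varphi^\sharp)^\flat$ — the translation of the valid \Lmu-formula $\varphi^\sharp$ — game logic derives $\varphi$, which is the claim in that direction.

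I do not expect a genuine obstacle at this stage: the substantive work is already discharged in \rref{prop:rco}, whose proof needed the game-semantic translation $\cdot^\vartheta$ with its control-variable bookkeeping, the substitution lemma, and the fixpoint reasoning. The only point that still deserves care is that the two translations preserve \emph{validity} and not merely truth in a single fixed structure — which is precisely why \rref{cor:generalequi-expressivity} is phrased uniformly over all assignment structures; once that is in hand, each direction of the corollary is a one-line argument combining validity transfer with the provable round-trip equivalence.
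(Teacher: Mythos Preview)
Your argument is correct and follows essentially the same approach as the paper: transfer validity along one translation using the semantic equi-expressivity (the paper cites \rref{prop:translationmain}, you cite its packaged form \rref{cor:generalequi-expressivity}), then close the round trip using the provable equivalences of \rref{prop:rco} and \irref{modusPonens}. The paper treats only one direction explicitly and says ``similarly'' for the other; you spell both out, but the content is the same.
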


\begin{proof}
	Suppose $\varphi$ is a valid \GL-formula.
	Then $\varphi^\sharp$ is a valid \Lmu formula by
	\rref{prop:translationmain}.
	By \rref{prop:rco}
	$\infers[\GLcalc] (\varphi^\sharp)^\flat \limply\varphi$.
	Similarly for relative completeness of \Lmu to~\GL.
\end{proof}

\begin{theoremEnd}{proposition}[Equi-potency]\label{prop:deductivelyequivalent}
	For a \GL-formula $\varphi$ and an \Lmu-formula $\psi$ %
	\begin{enumerate}
		\item $\infers[\GLcalc] \varphi  \quad \text{iff} \quad  \infers[\Lmucalc]\varphi^\sharp$ \;and \label{point1}
		\item $ \infers[\Lmucalc] \psi \quad \text{iff} \quad \infers[\GLcalc] \psi^\flat$\label{point2}
	\end{enumerate}
\end{theoremEnd}
\begin{proofEnd}
	First prove the forward implications.
	The forward implication of \ref{point1} is proved by induction on the length of the derivation witnessing $\infers[\GLcalc]\varphi$.

\begin{inparaitem}[\noindent-]
\item
	Suppose first that $\varphi$ is an equality axiom, an instance of \irref{assignment} or a propositional tautology.
	In all of those cases case $\varphi^\sharp$ is of the same kind by the definition of~${}^\sharp$.

\item
	If $\varphi$ is one of the game axioms \irref{test}, \irref{choice}, \irref{composition}, \irref{demon} then
	$\varphi^\sharp$ is a propositional tautology,
	by the definition of ${\cdot}^\sharp$.
	For example if $\varphi$ is an instance of \irref{choice}
	then $\varphi^\sharp$ has the form
	\begin{IEEEeqnarray*}{rCl}
		\varphi^\sharp&\equiv&(\ddiamond{\pchoice{\gamma_1}{\gamma_2}}{\rho}\lbisubjunct (\ddiamond{{\gamma_2}}{\rho})\lor(\ddiamond{{\gamma_2}}{\rho}))^\sharp\\
		&\equiv&
		(\ddiamond{{\gamma_2}}{\rho})^\sharp\lor(\ddiamond{{\gamma_2}}{\rho})^\sharp\lbisubjunct (\ddiamond{{\gamma_2}}{\rho})^\sharp\lor(\ddiamond{{\gamma_2}}{\rho})^\sharp.
	\end{IEEEeqnarray*}

\item
	If $\varphi$ is an instance of \irref{fixpointaxiom} then by \rref{lem:sharpdistribute} 
	$\varphi^\sharp$ has the form
	\begin{IEEEeqnarray*}{rCl}
		\varphi^\sharp&\equiv&(\ddiamond{\prepeat{\gamma}}{\rho} \lbisubjunct (\rho\lor\ddiamond{\gamma}{\ddiamond{\prepeat{\gamma}}{\rho}}))^\sharp\\
		&\equiv&
		\ddiamond{\prepeat{\gamma}}{\rho}
		\lbisubjunct
		(\rho^\sharp\lor(\ddiamond{\gamma}{\ddiamond{\prepeat{\gamma}}{\rho}})^\sharp)\\
		&\equiv&
		\ddiamond{\prepeat{\gamma}}{\rho}
		\lbisubjunct
		(\rho^\sharp\lor(\ddiamond{\gamma}{{X}})^\sharp\tfrac{(\ddiamond{\prepeat{\gamma}}{\rho})^\sharp}{X})
	\end{IEEEeqnarray*}
	So $\varphi^\sharp$ is provable in \Lmucalc as it is an instance of axiom \irref{muf}.

\item
	Suppose the derivation ends with
	an application of \irref{modusPonens}.
	That is there is a formula $\rho$ such that
	$\infers[\GLcalc]\rho$ and $\infers[\GLcalc]\rho\limply\varphi$.
	By induction hypothesis $\infers[\Lmucalc]\rho^\sharp$ and $\infers[\Lmucalc](\rho\limply\varphi)^\sharp$.
	By definition of ${}^\sharp$ this means
	$\infers[\Lmucalc] \rho^\sharp\limply\varphi^\sharp$.
	Hence $\infers[\Lmucalc]\varphi^\sharp$ derives by \irref{modusPonens}.

\item
	Suppose the derivation ends with an application of \irref{fixpoint} so with a conclusion
	$\ddiamond{\prepeat{\gamma}}{\rho_1}\limply\rho_2$
	and a corresponding premise $\infers[\GLcalc](\rho_1\lor\ddiamond{\gamma}{\rho_2})\limply \rho_2$.
	By the induction hypothesis
		$$\infers[\Lmucalc] (\rho_1\lor\ddiamond{\gamma}{\rho_2})^\sharp\limply \rho_2^\sharp.$$
	Thus by \rref{lem:sharpdistribute} also
	$\infers[\Lmucalc](\rho_1\lor\ddiamond{\gamma}{X})^\sharp\tfrac{\rho_2^\sharp}{X}\limply\rho_2^\sharp$. An application of \irref{muI} then proves~$(\ddiamond{\prepeat{\gamma}}{\rho_1)^\sharp}\limply\rho_2^\sharp$
	as desired.

\item
	Suppose the derivation ends with \irref{monotonicity} so with a conclusion $\ddiamond{\gamma}{\rho_1}\limply\ddiamond{\gamma}{\rho_2}$ and a corresponding premise $\infers[\GLcalc]\rho_1\limply \rho_2$.
	By the induction hypothesis $\infers[\Lmucalc]\rho_1^\sharp\limply \rho_2^\sharp$.
	Prove $\infers[\Lmucalc](\ddiamond{\gamma}{\rho_1})^\sharp\limply(\ddiamond{\gamma}{\rho_2})^\sharp$ simultaneously for all $\rho_1$, $\rho_2$ 
	with $\infers[\Lmucalc]\rho_1^\sharp\limply \rho_2^\sharp$ by induction on the rank $\rank(\gamma)$ from the proof of \rref{prop:embeddingsharp}.
	\begin{inparaitem}[-]

		\item If $\gamma$ is atomic this follows immediately with \irref{atomicmonotonicity}.

		\item If $\gamma$ is a test game $\ptest{\zeta}$ propositional reasoning suffices.

		\item If $\gamma$ is $\gamma_1\cup \gamma_2$ the conclusion
		follows by applying the induction hypothesis separately to
		$\ddiamond{\gamma_i}{\rho_1}\limply\ddiamond{\gamma_i}{\rho_2}$
		for $i=1,2$ and using propositional reasoning.

		\item If $\gamma$ is $\gamma_1; \gamma_2$ 
		then $\infers[\Lmucalc](\ddiamond{\gamma_2}{\rho_1})^\sharp\limply(\ddiamond{\gamma_2}{\rho_2})^\sharp$ by induction hypothesis for
		$\gamma_2$
		and
		$\infers[\Lmucalc](\ddiamond{\gamma_1}{\ddiamond{\gamma_2}{\rho_1}})^\sharp\limply(\ddiamond{\gamma_1}{\ddiamond{\gamma_2}{\rho_2}})^\sharp$
		by induction hypothesis for $\gamma_2$.
		The claim is by definition of ${}^\sharp.$

		\item If $\gamma$ is $\prepeat{\gamma}$
		by the induction hypothesis and the assumption 
		$\infers[\Lmucalc] (\rho_1\lor\ddiamond{\gamma}{X})^\sharp\limply(\rho_2\lor\ddiamond{\gamma}{X})^\sharp$.
		By rule \irref{fixpointmonotonicity} from \rref{lem:derivedaxiomsmu}
		$\infers[\Lmucalc] \lfp{X}{(\rho_1\lor\ddiamond{\gamma}{X})^\sharp}\limply
		\lfp{X}{(\rho_2\lor\ddiamond{\gamma}{X})^\sharp}$
		derives.
 
		\item If $\gamma$ is $\pdual{\gamma}$ the conclusion follows
		from the induction hypothesis for $\gamma$ applied to obtain
		$\infers[\Lmucalc](\ddiamond{\gamma}{\lnot\rho_2})^\sharp\limply(\ddiamond{\gamma}{\lnot\rho_1})^\sharp$
		and propositional reasoning for double negation.
	\end{inparaitem}
\end{inparaitem}

\bigskip

Next prove the forward direction of \ref{point2}.
The proof is in three steps.
Let $\eta$ be the trivial dictionary with $\eta(X) = 0$ for all $X\in \Vc$.

Step 1: Note that $\psi\mapsto \ddiamond{\psi^{\eta}}\ltrue$ 
provably distributes over propositional connectives for \Lmu-formulas $\psi$.
For $\lor$ this is simply because
	$$\infers[\GLcalc]\ddiamond{\psi_1^{\eta}}{\ltrue}\lor \ddiamond{\psi_2^{\eta}}{\ltrue}\lbisubjunct
	\ddiamond{(\psi_1\lor \psi_2)^{\eta}}{\ltrue}$$
is an instance of \irref{choice}.
For bars
	$$\infers[\GLcalc] \ddiamond{\overline{\psi}^\eta}{\ltrue}\lbisubjunct\lnot\ddiamond{\psi^\eta}{\ltrue}$$
follows from \rref{lem:barflat}, \irref{demon} and \rref{lem:trueiffalse}{}.

Step 2: Prove that
	\[
	\infers[\GLcalc{}] 	\ddiamond{(\psi\tfrac{\sigma}{X})^\vartheta}{\rho} \lbisubjunct \ddiamond{\psi^{\vartheta[X]};
	(\testcval{X};\pchoice{\sigma^\eta}{\testcvalnot{X}})}{\rho}
	\]
for any \GL-formula $\rho$, any \LmuV-formula $\psi$,
any \Lmu-formula $\sigma$, 
and a suitably compatible dictionary $\vartheta$,
such that $(\psi\tfrac{\sigma}{X})^\vartheta$ is a \GL-formula.
This is by induction on $\psi$.
In particular by \rref{lem:trueiffalse}{}
\[
	\infers[\GLcalc{}] 	\ddiamond{(\psi\tfrac{\sigma}{X})^\eta}{\ltrue} \lbisubjunct \ddiamond{\psi^{\eta[X]};
	\sigma^\eta}{\ltrue}
	\]
when $\sigma$ is an \Lmu-formula and $\psi$ mentions only $X$ freely.

Step 3: Prove that $\infers[\mu] \psi$ 
implies $\infers[\GLcalc] \ddiamond{\psi^{\eta}}\ltrue$ for any
\Lmu-formula $\psi$.
The proof is by induction on the length of the derivation.

\begin{inparaitem}[\noindent-]

\item If $\psi$ is provable as a propositional tautology,
then $\ddiamond{\psi^\eta}{\ltrue}$ is also a propositional tautology and
therefore provable in \GL by Step 1.

\item Suppose $\psi$ is provable as an instance of an equality axiom.
	This follows from the fact that the translation distributes over propositional connectives and
	the fact that $\infers[\GLcalc] \stdlit\lbisubjunct\ddiamond{\stdlit^{\eta}}\ltrue$
	for any $\stdlit\in \FOLL_{\Lcf}$.

\item If $\psi$ is provable as an instance of the assignment axiom \irref{assignment}, then $\ddiamond{\psi^\eta}{\ltrue}$ is also an instance of the assignment axiom.

\item If $\psi\tfrac{\lfp{X}{\psi}}{X}\leftrightarrow\lfp{X}{\psi}$ is provable as an instance of \irref{muf},
	then each of the following equivalences is provable in \GL:
	\begin{IEEEeqnarray*}{rCl}
		&&
		\ddiamond{(\lfp{X}{\psi})^\eta}{\ltrue}
		\\
		&\text{iff}\quad&
		\ddiamond{\setcval{X};\prepeat{(\testcval{X};\psi^{\dictsubst[\eta]{X}})}
		\testcvalnot{X}}{\ltrue}
		\\
		&\text{iff}\quad&
		\ddiamond{\setcval{X}}{(
		\ddiamond{\testcvalnot{X}}{\ltrue}\lor{}\\&&
		\ddiamond{\testcval{X};\psi^{\dictsubst[\eta]{X}};\prepeat{(\testcval{X};\psi^{\dictsubst[\eta]{X}})}
		;\testcvalnot{X}}{\ltrue})}
		\\
		&\text{iff}\quad&
		\ddiamond{\setcval{X};
		\psi^{\dictsubst[\eta]{X}};\prepeat{(\testcval{X};\psi^{\dictsubst[\eta]{X}})}
		;\testcvalnot{X}
		}{\ltrue}
		\\
		&\text{iff}\quad&
		\ddiamond{\setcval{X};
		\psi^{\dictsubst[\eta]{X}};
		\setcval{X};
		\prepeat{(\testcval{X};\psi^{\dictsubst[\eta]{X}})}
		;\testcvalnot{X}
		}{\ltrue}
		\\
		&\text{iff}\quad&
		\ddiamond{\setcval{X};
		\psi^{\dictsubst[\eta]{X}};
		(\lfp{X}{\psi})^{\eta}
		}{\ltrue}
		\\
		&\text{iff}\quad&
		\ddiamond{(\psi\tfrac{\lfp{X}{\psi}}{X})^\eta)
		}{\ltrue}
	\end{IEEEeqnarray*}
The second equivalence is by \irref{fixpointaxiom} and \irref{monotonicity}.
The fourth equivalence is by \rref{lem:trueiffalse}{} and the last by Step 2.

\item
	If $\psi$ is the conclusion of an application of
	\irref{modusPonens} then $\ddiamond{\psi^{\eta}}{\ltrue}$ is provable by the induction hypothesis, \irref{modusPonens} and the fact that the translation distributes
	over implication.
	
\item If $\psi$ is the conclusion $\ddiamond{\stdtr}{\psi_1}\limply
	\ddiamond{\stdtr}{\psi_2}$ of 
	an instance of \irref{atomicmonotonicity} then 
	$\ddiamond{{\psi_1}^{\eta}}{\ltrue}\limply\ddiamond{{\psi_2}^{\eta}}{\ltrue}$
	is provable by the induction hypothesis.
	Now
	$$\ddiamond{a}{\ddiamond{{\psi_1}^{\eta}}{\ltrue}}\limply
	\ddiamond{a}{\ddiamond{{\psi_2}^{\eta}}{\ltrue}}$$
	derives by \irref{monotonicity}.
	The statement follows
	by axiom \irref{composition}. 

\item Suppose $\lfp{X}{\psi}\limply\sigma$
	does not contain any free \pvariables and is the conclusion of an
	application of \irref{muI}, i.e. $\psi\tfrac{\sigma}{X}\limply\sigma$ is \Lmucalc-provable.
	Then $\psi\tfrac{\sigma}{X}\limply\sigma$ also does not contain any
	free \pvariables and therefore by the induction hypothesis
		$$\ddiamond{(\psi\tfrac{\sigma}{X})^\eta}{\ltrue}\limply\ddiamond{\sigma^\eta}{\ltrue}$$
	is provable in \GLcalc.
	By Step 2 
	$$\ddiamond{\psi^{\dictsubst[\eta]{X}};
	\pchoice{(\testcval{X};\sigma^\eta)}{(\testcvalnot{X})}}{\ltrue}
	\limply\ddiamond{\sigma^\eta}{\ltrue}$$
	is provable in \GLcalc.
	By \irref{monotonicity} provability of 
	\begin{IEEEeqnarray*}{rCl}
		&&\ddiamond{\testcval{X};\psi^{\dictsubst[\eta]{X}};
	\pchoice{(\testcval{X};\sigma^\eta)}{(\testcvalnot{X})}}{\ltrue}\\
	&\limply&\ddiamond{\testcval{X};\sigma^\eta}{\ltrue}
	\end{IEEEeqnarray*}
	follows.
	By \irref{composition}, \irref{monotonicity}, \irref{choice} and \irref{fixpoint} moreover
	$$\ddiamond{\prepeat{(\testcval{X};\psi^{\dictsubst[\eta]{X}})}}{\ddiamond{{\testcvalnot{X}}}{\ltrue}}\limply \ddiamond{\testcval{X};\sigma^\eta}{\ltrue}$$
	Applying monotonicity again
	\begin{IEEEeqnarray*}{rCl}
		&&\ddiamond{\setcval{X};\prepeat{(\testcval{X};\psi^{\dictsubst[\eta]{X}})};\testcvalnot{X}}{\ltrue}\\
		&\limply& \ddiamond{\setcval{X};\testcval{X};\sigma^\eta}{\ltrue}
	\end{IEEEeqnarray*}
	Because $\ctrlvar{}$ does not affect $\sigma$ conclude
	$$\ddiamond{\setcval{X};\prepeat{(\testcval{X};\psi^{\dictsubst[\eta]{X}})};\testcvalnot{X}}{\ltrue}\limply \ddiamond{\sigma^\eta}{\ltrue}.$$
	By the fact that the translation distributes over propositional connectives
	provability of $\ddiamond{(\lfp{X}{\psi}\limply\sigma)^\eta}{\ltrue}$ follows.
\end{inparaitem}

	\bigskip

	For the backward direction of \ref{point1}
	suppose $\infers[\Lmucalc] \varphi^\sharp$.
	Then \ref{point2} implies $\infers[\GLcalc] (\varphi^\sharp)^\flat$.
	By \rref{prop:rco} and \irref{modusPonens} $\infers[\GLcalc]\varphi$ derives. The backward direction of \ref{point2} is similar.
\end{proofEnd}

This shows that the two logics are the same not only in expressive power,
but that the calculi prove exactly the same.

\subsection{Expressivity for First-Order Modal \mutex-Calculus}

The chief advantage of the first-order $\mu$-calculus over first-order game logic
is that \Lmu-formulas contain only atomic modalities.
This allows one to easily replace modalities in context, thereby locally reducing a formula to an equivalent one with fewer kinds of modalities.

\begin{definition}\label{def:provablyexpressiv}
	Let $\Lc =(\Lcf, \Act)$ be a signature and
	$T$ a collection of \Lmu-formulas.
	A syntactic fragment of \Lmu is
	\emph{$T$-provably expressive} (for \Lmu)
	iff for any \Lmu-formula $\varphi$ there is an \Lmu-formula $\rho$ in that fragment such 
	that $T \infers[\Lmucalc] \varphi \lbisubjunct \rho$.
\end{definition}

A local reduction result establishes provable expressivity:

\begin{proposition}[Local reduction] \label{prop:relativecompletenessscheme}
	Let $\Lc =(\Lcf, \Act)$ be a signature,
	$T$ a collection of \Lmu-formulas and
	$\Lambda \subseteq \Act$.
	Assume that for every $\stdtr\in \Lambda$ and every \LmuV-formula $\varphi$ without $\Lambda$-modalities,
	there is some \LmuV-formula $\hat\varphi$ without $\Lambda$-modalities and with no more free \pvariables than $\varphi$, such that
	$T \infers[\Lmucalc] \hat\varphi \lbisubjunct \ddiamond{\stdtr}{\varphi}.$
	Then the $\Lambda$-modality free fragment of \Lmu is $T$-provably expressive.
\end{proposition}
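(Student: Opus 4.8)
The plan is to prove the following strengthening by structural induction on the \LmuV-formula $\varphi$: \emph{there is a $\Lambda$-modality-free \LmuV-formula $\rho$ whose free \pvariables are among those of $\varphi$ and with $T\infers[\Lmucalc]\varphi\lbisubjunct\rho$.} The proposition is then the special case where $\varphi$ has no free \pvariables: $\rho$ has none either, so $\rho$ lies in the $\Lambda$-modality-free fragment of \Lmu and $T\infers[\Lmucalc]\varphi\lbisubjunct\rho$ witnesses \rref{def:provablyexpressiv}.

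Before the induction I would record a \textbf{congruence lemma}: provable equivalence over $T$ is preserved by all \LmuV{}-constructs. For $\lor$ and $\land$ this is propositional reasoning with \irref{modusPonens}; for $\ddiamond{\stdtr}{{\cdot}}$ it is \irref{atomicmonotonicity} used in both directions together with propositional reasoning; for $\dbox{\stdtr}{{\cdot}}$ one first passes from $\alpha\lbisubjunct\beta$ to $\overline\alpha\lbisubjunct\overline\beta$ and contraposes (both are instances of propositional tautologies, which are closed under the \pvariable substitution sending $X$ to $\alpha$ and $\overline X$ to $\overline\alpha$), and then applies \irref{atomicmonotonicity} to the barred formulas; for $\lfp{X}{{\cdot}}$ and $\gfp{X}{{\cdot}}$ it is the derived rule \irref{fixpointmonotonicity} of \rref{lem:derivedaxiomsmu}, whose derivation chains \irref{muf}, \rref{prop:substituteoverproof} and \irref{muI} --- and the use of \rref{prop:substituteoverproof} over the theory $T$ is legitimate because $T$-formulas have no free \pvariables and hence are untouched by \pvariable substitution. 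Since renaming of bound \pvariables is permitted in proofs, I may also assume that the $\rho$ delivered by the induction never binds a \pvariable occurring freely in the surrounding formula; in particular $\lfp{X}{\rho}$ stays well-formed when $\rho$ comes from the body of $\lfp{X}{\psi}$.

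The induction itself is routine ``modalities in, modalities out''. Literals and \pvariables are their own $\rho$. For $\varphi_1\lor\varphi_2$, $\varphi_1\land\varphi_2$ and for $\ddiamond{\stdtr}{\psi}$ or $\dbox{\stdtr}{\psi}$ with $\stdtr\notin\Lambda$ one pushes the induction hypotheses through the connective by the congruence lemma, and the free-\pvariable and $\Lambda$-freeness bookkeeping is immediate. For $\ddiamond{\stdtr}{\psi}$ with $\stdtr\in\Lambda$: the induction hypothesis gives a $\Lambda$-free $\rho_\psi$ with $T\infers[\Lmucalc]\psi\lbisubjunct\rho_\psi$, hence $T\infers[\Lmucalc]\ddiamond{\stdtr}{\psi}\lbisubjunct\ddiamond{\stdtr}{\rho_\psi}$ by congruence; now the hypothesis of the proposition, applied to $\stdtr$ and the $\Lambda$-modality-free formula $\rho_\psi$, yields a $\Lambda$-free $\sigma$ with no new free \pvariables and $T\infers[\Lmucalc]\sigma\lbisubjunct\ddiamond{\stdtr}{\rho_\psi}$; take $\rho\equiv\sigma$ and chain the equivalences. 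For $\dbox{\stdtr}{\psi}$ with $\stdtr\in\Lambda$: here $\dbox{\stdtr}{\psi}$ is, up to the bar abbreviation, $\overline{\ddiamond{\stdtr}{\overline\psi}}$; apply the induction hypothesis to the proper subformula $\psi$ to get $\rho_\psi$, note $\overline{\rho_\psi}$ is still $\Lambda$-modality-free, run the diamond argument on $\ddiamond{\stdtr}{\overline{\rho_\psi}}$ to obtain a $\Lambda$-free $\sigma$ with $T\infers[\Lmucalc]\sigma\lbisubjunct\ddiamond{\stdtr}{\overline{\rho_\psi}}$, and set $\rho\equiv\overline\sigma$, using congruence of $\overline{{\cdot}}$ and bar-congruence to get $T\infers[\Lmucalc]\dbox{\stdtr}{\psi}\lbisubjunct\overline\sigma$; crucially the induction hypothesis is invoked only on $\psi$, so the recursion is on the size of $\varphi$ and stays well-founded despite the appearances of $\overline{{\cdot}}$. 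Finally, for $\lfp{X}{\psi}$ (and dually $\gfp{X}{\psi}$): the induction hypothesis gives a $\Lambda$-free $\rho_\psi$ whose free \pvariables are among those of $\psi$; since $\overline X$ is not free in $\psi$ by \rref{def:Lmu-syntax}, it is not free in $\rho_\psi$, and after renaming $\lfp{X}{\rho_\psi}$ is a well-formed $\Lambda$-free formula with $T\infers[\Lmucalc]\lfp{X}{\psi}\lbisubjunct\lfp{X}{\rho_\psi}$ by fixpoint congruence; its free \pvariables are among those of $\lfp{X}{\psi}$.

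I expect the genuine work to be the congruence lemma for fixpoint operators together with the accounting of free \pvariables. The clause ``with no more free \pvariables than $\varphi$'' in the hypothesis is exactly what makes $\lfp{X}{\rho_\psi}$ remain a legal \LmuV-formula and, at the top level, makes a closed input yield a closed output; and the fact that the proposition's hypothesis speaks only of $\ddiamond{\stdtr}{{\cdot}}$ forces the small detour through the bar abbreviation in the $\dbox{}$-case, which must be carried out without breaking well-foundedness of the recursion. Everything else is a matter of unwinding definitions and applying the derived proof rules of \rref{app:auxax}.
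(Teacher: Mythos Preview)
Your proposal is correct and follows essentially the same route as the paper: both strengthen to arbitrary \LmuV-formulas (tracking free \pvariables), proceed by structural induction, use \irref{atomicmonotonicity} for modalities, invoke the hypothesis of the proposition for $\stdtr\in\Lambda$ after first reducing the body, and handle fixpoints via the chain \irref{muf}--\rref{prop:substituteoverproof}--\irref{muI} (which is exactly your fixpoint-congruence). The only difference is presentational: you package the congruence reasoning as a preliminary lemma and spell out the $\dbox{}$-case via the bar abbreviation more carefully, whereas the paper inlines the same steps and dismisses box modalities and greatest fixpoints in one line by ``negation and propositional reasoning.''
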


\begin{proof}
	Show more generally that for any \LmuV-formula $\varphi$ there is an
	\LmuV-formula $\rho$ without $\Lambda$-modalities such that
	$T \infers[\Lmucalc] \psi \lbisubjunct \ddiamond{\stdtr}{\varphi}$
	and $\rho$ has no more free \pvariables than $\varphi$.
	Proceed by induction on $\varphi$.
	For first-order literals or \pvariables the statement is clear.

\begin{inparaitem}[\noindent- \emph{Case:}]
\item \(\varphi_1\lor\varphi_2\):
	By the inductive hypothesis there are formulas $\rho_1$ and $\rho_2$
	without $\Lambda$-modalities
	such that $T \infers[\Lmucalc] \varphi_i \lbisubjunct \rho_i$.
	Propositionally derive
	$T \infers[\Lmucalc] (\varphi_1\lor\varphi_2) \lbisubjunct (\rho_1\lor\rho_2)$. So $\rho\equiv\rho_1\lor\rho_2$ is as required.
	Conjunction is similar.
	
\item \(\ddiamond{\stdtr}{\sigma}\):
	By the induction hypothesis pick a formula $\rho$
	without $\Lambda$-modalities and with no more \pvariables than $\sigma$ such that
	$\infers[\Lmucalc] \rho\lbisubjunct \sigma.$
	Rule \irref{atomicmonotonicity}
	derives  $\infers[\Lmucalc] \ddiamond{\stdtr}{\rho}\lbisubjunct\ddiamond{\stdtr}{\sigma}$.
	If $\stdtr\not\in\Lambda$ this is enough.
	Otherwise by assumption there is
	a formula $\hat \rho$ without $\Lambda$-modalities
	such that $\infers[\Lmucalc]\hat\rho\lbisubjunct \ddiamond{\stdtr}{\rho}$. Because $\hat\rho$ has no more free \pvariables than $\rho$,
	this $\hat\rho$ is as required.

\item \(\lfp{X}{\sigma}\):
	By induction hypothesis there is an \LmuV-formula $\rho$ 
	without $\Lambda$-modalities and with no more  free \pvariables than $\sigma$,
	such that $\infers[\Lmucalc] \rho \lbisubjunct\sigma$.
	\rref{prop:substituteoverproof} derives $\infers[\Lmucalc] \subst[\rho]{X}{\lfp{X}{\sigma}} \lbisubjunct\subst[\sigma]{X}{\lfp{X}{\sigma}}$.
	Then
	$\infers[\Lmucalc] \subst[\rho]{X}{\lfp{X}{\sigma}} \rightarrow \lfp{X}{\sigma}$ derives by axiom \irref{muf}.
	Rule \irref{muI} derives $\infers[\Lmucalc] \lfp{X}{\rho} \limply \lfp{X}{\sigma}$.
	The reverse implication is proved similarly.
	Then $\lfp{X}{\rho}$ is an \LmuV-formula with the required properties.

\item Box modalities \(\dbox{\stdtr}{\varphi}\) and greatest fixpoints \(\gfp{X}{\varphi}\) follow by negation and propositional reasoning.
\end{inparaitem}
\end{proof}

\rref{prop:relativecompletenessscheme} shows the utility of 
using the $\mu$-calculus for studying game logic.
Atomic games can be also replaced by equivalent games in game logic (\rref{prop:schematicgameremoval}{}).
However inside a composite game made up of repetitions, sequential composition and dual operators, they can not easily be simplified
to formulas.
Translating through the $\mu$-calculus shows that the ability to remove atomic occurrences of modalities suffices to remove them anywhere in
a \GL-formula. 

\begin{example}[Discrete assignments] \label{ex:discrete}
Consider a signature $\Lc =(\Lcf, \Act)$
containing deterministic and nondeterministic assignment modalities.
Choosing $\Lambda$ as the set of all deterministic assignments,
by \rref{prop:relativecompletenessscheme}, the deterministic-assignment-free fragment of \Lmu is provably expressive using axiom \irref{assignment} to eliminate a single deterministic assignment.
\end{example}

\section{Extended Actions: Differential Equations}
\label{sec:diffeq}

This section extends the first-order modal $\mu$-calculus and first-order game logic with differential equations as extended actions to model hybrid systems combining discrete and continuous dynamics.
Their proof calculi are extended to completely handle the additional expressivity syntactically.

\subsection{Integral Curves}

This section summarizes notations and definitions related 
to differential equations.
A continuous function 
\(\gamma: [a,b]\to\Rb\) is an \emph{integral curve} of the continuous function $F:\Rb^n\to \Rb^n$, if 
$\gamma$ is differentiable on $(a,b)$ with derivative ${\gamma}'(t)= F(\gamma(t))$ for all $t\in (a,b)$.
For readability the notation $\gamma_s =\gamma(s)$ is used synonymously.
Point $y$ is \emph{$t$-reachable} from $x$ along $F$, written \(x\xrightarrow{F}_t y\), iff there is an integral curve $\gamma:[0,t] \to \Rb^n$ of $F$
such that $\gamma_0 = x$ and $\gamma_t = y$.
If additionally $K\subseteq \Rb^n$ is such that $\gamma(s)\in K$ for all $s\in [0,t]$ write \(x \xrightarrow{F,K}_t y\).

\subsection{Differential Game Logic, Differential \mutex-Calculus}

\newcommand{\Langdlf}{\Lcf^{\dL}}%
\newcommand{\Actdl}{\Act^{\dL}}%
\newcommand{\Langdl}{\Lc^{\dL}}%

The \emph{signature of \dL} \cite{DBLP:conf/lics/Platzer12b} $\Langdl =(\Langdlf, \Actdl)$ consists of
the usual first-order signature $\Langdlf$ of ordered fields
and the collection $\Actdl$ of the following transition symbols:
	\begin{enumerate}
		\item nondeterministic assignments $x:=*$,
		\item deterministic assignments $x:=\stdterm$ for any $\Langdlf$-term $\stdterm$,
		\item continuous evolutions $\pevolvein{{x'}=\stdterm}{\psi}$ for ($N$-tuples of) variables $x$, ($N$-tuples of) $\Langdlf$-terms $\stdterm$ and first-order formulas $\psi$ in the language $\Langdlf$ and any $N\in \Nb$.
	\end{enumerate}

For readability, the notation $x'=\theta$ is used in continuous evolution modalities even though vectorial differential equations are also permitted such as
	\(x'=\alpha x - \beta x y, y' =\delta x y - \gamma y.\) %

The intended model for \Lmu[\Langdl] and \GL[\Langdl]
is the structure $\Rb$ with the usual ordered-field structure.
The interpretations of deterministic and nondeterministic assignments are as in \rref{sec:assignmentstructures}{}
to make $\Rb$ into an assignment structure and a quantifier structure.
The interpretation of the continuous evolution is as the evolution of a differential equation in $\psi$:
	\[\structA\envelope{\pevolvein{{x'}=\stdterm}{\psi}}
	= \{(\stdstate, \newstate)\in \Sc\times\Sc : \stdstate \xrightarrow{\theta, \psi} \newstate\}\]
Here $\stdstate \xrightarrow{\theta, \psi} \newstate$ abbreviates
the statement $\stdstate(x)\xrightarrow{F,K}\newstate(x)$
where $F(r) = \stdstate\tfrac{r}{x}\envelope{\theta}$
and $K = \{r : \structA,\stdstate\tfrac{r}{x}\models {\psi}\}$.
The $\Langdl$-structure with this interpretation is written $\Rf$.

Observe that the syntax of the logic \GL[\Langdl]
is the same as that of \emph{differential game logic} \dGL \cite{DBLP:journals/tocl/Platzer15}.
The only difference is in the semantics.
Differential game logic is an interpreted logic and as such
a \GL[\Langdl]-formula $\varphi$ is valid in the \dGL sense iff $\Rf\models\varphi$.
In contrast $\varphi$ is valid in the \GL[\Langdl] sense
iff it is true in \emph{all} suitable $\Langdl$-structures~$\structA$. 
The fragment of \dGL without $\pdual{\cdot}$ operator is \emph{differential dynamic logic} \dL \cite{DBLP:conf/lics/Platzer12b}.

The \emph{differential $\mu$-calculus} \dLmu{} is defined analogously to differential game logic but with the syntax of~\Lmu[\Langdl] without free variables.
Again, the semantics of \dLmu{} is that of \Lmu[\Langdl] restricted to the structure $\Rf$. \rref{cor:generalequi-expressivity} implies:

\begin{corollary}\label{cor:dGL-dLmu-equi-expressivity}
	Differential game logic and differential $\mu$-calculus are equi-expressive.
\end{corollary}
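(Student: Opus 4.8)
The plan is to obtain this as a direct instantiation of \rref{cor:generalequi-expressivity} at the single $\Langdl$-structure $\Rf$. The first step is to record that $\Rf$ is an assignment structure in the sense of \rref{sec:assignmentstructures}: the field constants $0,1$ receive distinct real interpretations and the deterministic assignments $x:=\stdterm$ are interpreted exactly as prescribed there (indeed $\Rf$ is also a quantifier structure). Hence \rref{cor:generalequi-expressivity} applies with $\structA\equiv\Rf$, and it remains only to translate formulas in both directions while checking that the translations stay inside the two logics at hand, which differ from $\GL[\Langdl]$ and $\Lmu[\Langdl]$ only in that their semantics is restricted to $\Rf$.

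For ``\dLmu{} is at least as expressive as \dGL'', let $\varphi$ be a \dGL formula; syntactically this is a $\GL[\Langdl]$-formula, and since \dGL carries no \pvariables, $\varphi$ is a pure \GL-formula over $\Langdl$. By part~1 of \rref{cor:generalequi-expressivity} (i.e. \rref{prop:embeddingsharp}), $\Rf\envelopegl{\varphi} = \Rf\envelopemu{\varphi^\sharp}$, and $\varphi^\sharp$ is an \Lmu[\Langdl]-formula without free \pvariables: the translation ${}^\sharp$ leaves first-order literals untouched and maps each atomic modality $\ddiamond{\stdtr}{\cdot}$ — including the continuous-evolution modalities $\ddiamond{\pevolvein{{x'}=\stdterm}{\psi}}{\cdot}$, which are ordinary elements of $\Actdl$ — to the corresponding \Lmu[\Langdl] modality, introducing only bound \pvariables for the repetition cases. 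Thus $\varphi^\sharp$ is a \dLmu{} formula with the same denotation over $\Rf$. Conversely, for ``\dGL is at least as expressive as \dLmu{}'', let $\psi$ be a \dLmu{} formula, i.e. an \Lmu[\Langdl]-formula without free \pvariables. By part~2 of \rref{cor:generalequi-expressivity} (which rests on \rref{prop:translationmain}), $\Rf\envelopemu{\psi} = \Rf\envelopegl{\psi^\flat}$, and $\psi^\flat$ is a pure \GL[\Langdl]-game formula, hence a \dGL formula denoting the same set of states over $\Rf$. Combining the two inclusions yields equi-expressivity.

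The one point that needs a word of care — and essentially the only place where anything could go wrong — is whether the counterembedding machinery behind \rref{prop:translationmain}, in particular the fresh control variable $\ctrlvar{}$ that is assumed independent of all relevant transitions, behaves well in the presence of continuous-evolution modalities. This is unproblematic bookkeeping rather than a genuine obstacle: $\ctrlvar{}$ is chosen outside the finite set $\Xc_0$ of variables occurring in the formula to be translated, so it appears in no differential equation there and therefore stays constant along every integral curve, which makes it independent of each continuous-evolution transition in the required sense; independence from deterministic and nondeterministic assignments is likewise immediate. Consequently all hypotheses of \rref{cor:generalequi-expressivity} are met at $\Rf$, and the corollary follows.
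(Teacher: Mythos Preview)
Your proof is correct and follows exactly the approach the paper takes: the paper simply states that \rref{cor:generalequi-expressivity} implies the result, since $\Rf$ is an assignment structure over the signature $\Langdl$. Your additional paragraph verifying independence of the control variable from continuous-evolution modalities is a legitimate technical check that the paper leaves implicit, and your argument for it is sound.
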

As for every assignment structure, the fixpoint variable hierarchy of the differential $\mu$-calculus collapses.
By coding arguments it can be seen that
the \ovariable hierarchies of \dLmu and \dGL \cite{DBLP:conf/lics/Platzer12b} collapse at some finite stage as well.
This holds generally for the first-order $\mu$-calculus and first-order game logic over acceptable structures \cite{Moschovakis74}.

The next question is what axioms are needed to handle the additional expressivity
brought to the language $\Langdl$ with continuous evolution modalities.
Relative completeness results can subsequently be obtained as corollaries to the above results for the proof calculi of $\mu$-calculus and game logic.

\subsection{Evolution Domain Constraints}

The evolution domain constraint of a differential equation is of fundamental importance for accurately modeling cyber-physical systems \cite{DBLP:conf/lics/Henzinger96}.
Evolution domain constraints however can be handled syntactically with the ``there and back again'' axiom \cite{DBLP:conf/lics/Platzer12b}, where $t$ is a fresh \ovariable:
\[
	\cinferenceRule[thereandbackagain|$\&$]{existential quantifier elimination axiom}
	   {
	   \linferenceRule[equivl2]
	   {\ddiamond{t:=0}{
		   \ddiamond{\pevolve{{x'}=\stdterm, t'=1}}{
			   (\varphi\land\dbox{\pevolve{{x'}=-\stdterm, t'=-1}}{(t{\geq}0\limply\psi)})
		   }
	   }}
	   {\ddiamond{\pevolvein{{x'}=\stdterm}{\psi}}{\varphi}}}{}
\]
This completely reduces differential equations with evolution domain constraints to ones without.
Instead of checking whether there is a continuous evolution along the differential equation to a state where $\varphi$ holds such that $\psi$ remains true throughout, axiom \irref{thereandbackagain} equivalently checks whether there is a continuous evolution along the differential equation to a state in which $\varphi$ holds and from which $\psi$ always holds when following the differential equation backwards in time.

Syntactically axiom \irref{thereandbackagain} is a formula of both
the differential $\mu$-calculus and differential game logic.
Its soundness proof \cite[Appendix B]{DBLP:conf/lics/Platzer12b} transfers directly to \dGL and \dLmu.

\begin{theoremEnd}[normal]{proposition}[{{\cite{DBLP:conf/lics/Platzer12b}}}] \label{prop:thereandback}
	The `there and back again' axiom \irref{thereandbackagain}
	is sound for differential $\mu$-calculus and differential game logic.
\end{theoremEnd}

Recall that evolution domain constraints were
syntactically assumed to be first-order formulas to avoid complications in the 
recursive definition of formulas.
As \rref{prop:thereandback} demonstrates, evolution domain constraints, even if crucial for modeling, do not increase the expressivity of the language.
It can be convenient to write
	$\pevolvein{{x'}=\stdterm}{\psi}$
where $\psi$ is also a \Lmu[\Langdl]-formula
 to abbreviate the \Lmu[{\Langdl}]-formula
\[\ddiamond{t:=0}{
	\ddiamond{\pevolve{{x'}=\stdterm, t'=1}}{
		(\varphi\land\dbox{\pevolve{{x'}=-\stdterm, t'=-1}}{(t\geq0\limply\psi)})
	}
}.\]
Similarly for box modalities and for \GL[\Langdl].

The dual game operator $\pdual{}$ makes eliminating evolution domain constraints in differential game logic  nicer \cite[Lem.\,3.4]{DBLP:journals/tocl/Platzer15}.
One can not only replace formulas $\ddiamond{\pevolvein{x'=\stdterm}{\psi}}{\varphi}$ by an equivalent \emph{formula} without evolution domain constraints, but indeed locally replace any game $\pevolvein{x'=\stdterm}{\psi}$ 
by an equivalent \emph{game} without evolution domain constraints.

\subsection{Differential Equation Modalities}

An axiom $\Delta$ which completely reduces continuous evolution modalities with semialgebraic postconditions was introduced and proved sound
for differential dynamic logic (\dL) \cite{DBLP:conf/lics/Platzer12b}.
Moreover it was proved that the fragment of \dL 
mentioning continuous evolutions only with semialgebraic postconditions
is (provably) as expressive as \dL.
From these facts it followed that differential dynamic logic is complete relative to its differential-equation-free fragment \cite{DBLP:conf/lics/Platzer12b}.

This approach cannot work for differential game logic,
which is strictly more expressive than differential dynamic logic \cite{DBLP:journals/tocl/Platzer15}.
Instead this section introduces an axiom for differential $\mu$-calculus that directly handles differential equations with arbitrary postconditions.
The idea is to characterize differential equations from the point of view of fixpoints.

Write $X= \Rb^N\times\Rb^N \times [0,\infty)$.
For a continuously differentiable function
$F:\Rb^N\to\Rb^N$ and a set $K\subseteq \Rb^N$
define
	$$\Rc_K =\{(x,y,t)\in X : x\xrightarrow{F,K}_t y\}.$$
The aim is to characterize $\Rc_K$ as the greatest fixpoint of a monotone map $\Gamma_K:\Pc(X)\to\Pc(X)$ for any compact set $K$.

Let $\norm[K]{\cdot}$ denote the supremum-norm
on the set $K\subseteq\Rb^N$.
That is, let \(\norm[K]{F} = \sup_{x\in K} \abs{F(x)}\) for any $F:\Rb^N\to\Rb^N$.

The following growth bound follows from the Taylor approximation to the integral curve.

\pgfkeys{/prAtEnd/forcompleteness/.style={
    text link={For a self-contained presentation, a \hyperref[proof:prAtEnd\pratendcountercurrent]{proof} is on page~\pageref{proof:prAtEnd\pratendcountercurrent}.}
  }
}
\begin{theoremEnd}[forcompleteness]{lemma}\label{lem:taylor}
	If $F:\Rb^N\to\Rb^N$ is a continuously differentiable function
	and $\gamma:[0,t]\to K$ an integral curve of $F$, then
	\begin{enumerate}
		\item $\abs{\gamma_t-\gamma_0}\leq t \norm[K]{F}$ and
		\item $\abs{\gamma_t-\gamma_0-t F(\gamma_0)}\leq \tfrac{t^2}{2} \norm[K]{(DF)F}$
	\end{enumerate}
	where $DF$ is the Jacobian matrix of partial derivatives of $F$ and $(DF)F$ is pointwise matrix vector multiplication.
\end{theoremEnd}
\begin{proofEnd}
	By the fundamental theorem
		$\gamma_t - \gamma_0 = \int_0^t \gamma_s'\ud s.$
	Hence $$|\gamma_t-\gamma_o|\leq t\max_{s\in[0,t]}|\gamma_s'|
	\leq t\max_{s\in[0,t]}|F(\gamma_s)|\leq t\norm[K]{F}$$
	since $\gamma(s)\in K$ for all $s\in[0,t]$.
	By partial integration
		\begin{IEEEeqnarray*}{rCl}
			\int_0^t (t-s)\gamma_s''\ud s = - t\gamma'(0)+\int_0^t \gamma_s'\ud s = \gamma_t-\gamma_0- t\gamma_0'.
		\end{IEEEeqnarray*}
	Since \(\gamma\) is an integral curve of $F$,
	the second derivative is \(\gamma_s''=DF(\gamma_s)F(\gamma_s)\).
	So \(\gamma_s'' \leq \norm[K]{(DF)F}\) because $\gamma_s\in K$.
	Now estimate
		\begin{IEEEeqnarray*}{+rCl+rx*}
			|\gamma_t-\gamma_0- t\gamma_0'| &\leq&
			 \int_0^t |(t-s) \gamma_s''|\ud s \\
			 &\leq&  \int_0^t |(t-s) DF(\gamma_s)F(\gamma_s)|\ud s \\
			 &\leq&  \norm[K]{(DF)F} \int_0^t (t-s)\ud s \\
			 &\leq&  \tfrac{t^2}{2} \norm[K]{(DF)F}&\qedhere
		\end{IEEEeqnarray*}
\end{proofEnd}

Define the subset of $X$ within the bounds of \rref{lem:taylor}:
	\begin{IEEEeqnarray*}{rl}
		G_K=\{&(x,y,t)\in X:
	\abs{y-x} \leq t\norm[K]{F}\\
	&\text{and}~
	 \abs{y-x-tF(x)} \leq \tfrac{t^2}{2} \norm[K]{(DF)F}\}
	\end{IEEEeqnarray*}
Note that $X\setminus G_K$ is the collection of $(x,y,t)$ for which, by \rref{lem:taylor}, no integral curve witnessing $x\xrightarrow{F,K}_t y$ can exist.

When $x\xrightarrow{F,K}_t y$,
there is a halfway point, which can be reached from $x$ along the integral 
curve in time $\frac{t}{2}$ and from which $y$ can be reached by continuing along
the curve for time $\frac{t}{2}$.
In other words $\Rc$ is a fixpoint of the map
$Z \mapsto\{(x,y,t)\in X :\mexists{u} \Lp x,u ,\tfrac{t}{2}\Rp, \Lp u, y,\tfrac{t}{2}\Rp \in Z\}$.
This map is not descriptive enough, as its least fixpoint is empty and its greatest fixpoint is $X$.
By imposing the additional condition that the triples be in $G_K$ 
for some compact set $K$, i.e. satisfy the bound above, $\Rc_K$ can be described as a fixpoint.
Formally:

\begin{proposition} \label{prop:ODEgfphelper}
	The map $\Gamma_K :\Pc(X)\to \Pc(X)$
		\begin{IEEEeqnarray*}{rl}
			\Gamma_K(Z) =  \{&(x,y,t)\in G_K : x,y \in K,
			\mexists{u} (x,u ,\tfrac{t}{2}), (u, y,\tfrac{t}{2}) \in Z\}
		\end{IEEEeqnarray*}
	is monotone.
	For compact~$K$, $\Rc_K$ is the greatest fixpoint of~$\Gamma_K$.
\end{proposition}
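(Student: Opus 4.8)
The plan is to establish the three assertions in order. \emph{Monotonicity} of $\Gamma_K$ holds for every $K$ and is immediate: if $Z_1\subseteq Z_2$ and $(x,y,t)\in\Gamma_K(Z_1)$, then the witness $u$ with $(x,u,t/2),(u,y,t/2)\in Z_1$ also witnesses $(x,y,t)\in\Gamma_K(Z_2)$, while the side conditions $(x,y,t)\in G_K$ and $x,y\in K$ do not mention the argument. For the fixpoint part, recall from \rref{sec:preliminaries} that a monotone map has a \emph{unique} greatest post-fixpoint, which coincides with its greatest fixpoint. Hence it suffices to show that $\Rc_K$ is the greatest post-fixpoint of $\Gamma_K$, i.e. (a) $\Rc_K\subseteq\Gamma_K(\Rc_K)$, and (b) $Z\subseteq\Rc_K$ for every post-fixpoint $Z\subseteq\Gamma_K(Z)$.

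For (a): given $(x,y,t)\in\Rc_K$, fix an integral curve $\gamma\colon[0,t]\to K$ of $F$ with $\gamma_0=x$ and $\gamma_t=y$. \rref{lem:taylor} gives exactly the two inequalities defining $G_K$, so $(x,y,t)\in G_K$, and $x,y\in K$ as endpoints of $\gamma$. The halfway point $u=\gamma_{t/2}$ is a suitable witness: the restriction of $\gamma$ to $[0,t/2]$ shows $x\xrightarrow{F,K}_{t/2}u$, and $s\mapsto\gamma_{t/2+s}$ on $[0,t/2]$ shows $u\xrightarrow{F,K}_{t/2}y$, so $(x,u,t/2),(u,y,t/2)\in\Rc_K$ and hence $(x,y,t)\in\Gamma_K(\Rc_K)$; the degenerate case $t=0$ uses the constant curve.

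Claim (b) is the technical heart. Let $Z\subseteq\Gamma_K(Z)$ and $(x,y,t)\in Z$. Since $Z\subseteq\Gamma_K(Z)\subseteq G_K$ with both endpoints in $K$, the same is true of every triple produced below. Splitting triples recursively through their $\Gamma_K$-witnesses, build points $\gamma_{tr}\in K$ for every dyadic rational $r\in[0,1]$ with $\gamma_0=x$, $\gamma_t=y$, and $(\gamma_{tk/2^n},\gamma_{t(k+1)/2^n},\,t/2^n)\in Z$ for all $n$ and $0\le k<2^n$ (the midpoint of each such triple is installed as the next-level point, using that the triple, lying in $\Gamma_K(Z)$, has a witness). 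The first $G_K$-inequality yields $\abs{\gamma_{t(k+1)/2^n}-\gamma_{tk/2^n}}\le(t/2^n)\norm[K]{F}$; as $K$ is compact, $\norm[K]{F}<\infty$, so $s\mapsto\gamma_s$ is $\norm[K]{F}$-Lipschitz on the dyadic grid and extends uniquely to a continuous $\gamma\colon[0,t]\to\Rb^N$ whose image lies in $K$ since $K$ is closed. The second $G_K$-inequality says each dyadic increment over a step of length $t/2^n$ differs from $(t/2^n)F(\gamma_{tk/2^n})$ by at most $\tfrac{(t/2^n)^2}{2}\norm[K]{(DF)F}$, again finite by compactness. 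Telescoping over a dyadic interval $[a,b]$, the accumulated main term is a Riemann sum for $\int_a^b F(\gamma(\sigma))\,\ud\sigma$ (convergent, since $F\circ\gamma$ is continuous) while the accumulated error is at most $\tfrac{(b-a)(t/2^n)}{2}\norm[K]{(DF)F}\to0$. Passing to the limit, and then using continuity of both sides in $a,b$, gives $\gamma(b)-\gamma(a)=\int_a^b F(\gamma(\sigma))\,\ud\sigma$ for all $0\le a\le b\le t$; hence $\gamma$ is an integral curve of $F$ with $\gamma_0=x$, $\gamma_t=y$ and image in $K$, i.e. $(x,y,t)\in\Rc_K$.

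The step I expect to be the main obstacle is this last reconstruction of a genuine $C^1$ integral curve from the dyadic skeleton, and in particular the careful use of compactness of $K$: it is needed to make $\norm[K]{F}$ and $\norm[K]{(DF)F}$ finite (so the $G_K$-bounds have content), to get the Lipschitz estimate that lets the dyadic data extend continuously, and to keep the limit curve inside $K$ via closedness. The surrounding bookkeeping — that the recursive choice of dyadic points is well-defined, and the elementary Riemann-sum estimate — is routine.
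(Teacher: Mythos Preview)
Your proposal is correct and follows essentially the same approach as the paper: both establish monotonicity trivially, verify $\Rc_K\subseteq\Gamma_K(\Rc_K)$ via the halfway point and \rref{lem:taylor}, and for the greatest post-fixpoint direction recursively split triples to build a dyadic skeleton in $K$, use the first $G_K$-bound for equicontinuity/Lipschitz control and the second for a telescoping Riemann-sum argument yielding the integral equation. The only cosmetic difference is that the paper phrases the limit via uniform convergence of piecewise-constant approximants $\gamma^m$ whereas you extend directly from the dyadic grid via the Lipschitz estimate; the content is the same.
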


\begin{proof}
	For readability drop the subscript of $\Gamma$, $G$ and $\Rc$ in this proof.
	Monotonicity of $\Gamma$ is immediate.

	\emph{$\Rc$ is a post-fixpoint}: i.e.\ $\Rc \subseteq \Gamma(\Rc)$.
	Suppose $(x,y,t)\in \Rc$ and $\gamma$ is
	an integral curve of $F$
	in $K$ witnessing this.
	Then $(x,y,t)\in G$ by \rref{lem:taylor}.
	Finally observe that for $u=\gamma_{t/2}$
	the restrictions of $\gamma$
	to the intervals $[0,\tfrac{t}{2}]$ and
	$[\tfrac{t}{2},t]$ witness that
	$(x,u,\tfrac{t}{2})\in \Rc$ and 
	$(u,y,\tfrac{t}{2})\in \Rc$.
	Hence $(x,y,t)\in \Gamma(\Rc)$.

	\emph{Greatest post-fixpoint:} i.e.\ $A\subseteq \Rc$ for any fixpoint $A=\Gamma(A)$.
		Fix $M\geq \Vert F\Vert_K+\Vert (DF)F\Vert_K$
		and let $(x,y,t)\in A$.
		By recursion on $m$ define $(x^m_k)_{k\leq 2^m}\in \Rb^N$
		such
		that $(x^m_k,x^m_{k+1},t2^{-m})\in A$.

		For $m=0$ let $x^0_0 = x$, $x^0_1 = y$.
		For $m>0$, $k\leq 2^{m}$
		pick $u$  such that
		$(x^{m}_{k}, u, t2^{-m-1}), (u,x^{m}_{k+1}, t2^{-m-1}) \in A$.
		This is possible by $(x^m_k,x^m_{k+1},t2^{-m})\in A=\Gamma(A)$.
		Set $x^{m+1}_{2k} = x^m_k$ and $x^{m+1}_{2k+1}=u$.
		
		Because $A=\Gamma(A)\subseteq G$ this construction yields $x^m_k\in K$
		such that for all $m$ and all $1\leq k \leq 2^m$:
		\begin{IEEEeqnarray*}{rCl}
			|x^m_{k}-x^m_{k-1}| &\leq& t2^{-m}M\\
			|x^m_{k}-x^m_{k-1} -t2^{-m}F(x^m_{k-1})| &\leq& t^22^{-2m-1}M
		\end{IEEEeqnarray*}
		
		Define piecewise constant functions $\gamma^m : [0,t] \to \Rb^n$ by
			$$\gamma^m_s = x^m_{\lfloor 2^{m}\frac{s}{t}\rfloor}$$
		and to prove the following:
			\begin{enumerate}
				\item $\gamma^m$ converges uniformly to a function $\gamma:[0,t]\to K$ as $m\to \infty$,\label{p1}
				\item $\gamma$ is continuous\label{p2} and
				\item $|\gamma^m_s -\gamma^m_0 -\int_0^s F(\gamma^m_r) \ud r|\to 0$
					as $m\to \infty$.\label{p3}
			\end{enumerate}
		This suffices to show that $\gamma$ is an integral curve in $F$ witnessing $(x,y,t)\in \Rc$.
		Indeed by uniform convergence
			\[\gamma_s -\gamma_0 \stackrel{\text{\ref{p3}}}{=} 
			\lim_{m\to \infty}\int_0^s F(\gamma_r^m)\ud r
			\stackrel{\text{\ref{p1}}}{=} \int_0^s F(\gamma_r)\ud r\]
		and the fundamental theorem of calculus
		establishes that~\(\gamma'_s = F(\gamma_s)\) for $s\in(0,t)$ as desired.

		For point \ref{p1}, the uniform convergence of $\gamma^m$, consider ${m\geq k}$.
		By definition of $\gamma$ and the sequence $x$:
		\[\gamma^k_s = x^k_{\lfloor 2^k\tfrac{s}{t}\rfloor} = x^{k+1}_{2\lfloor 2^k\tfrac{s}{t}\rfloor}
		=\ldots = x^{k+m}_{2^{m-k}\lfloor 2^k\tfrac{s}{t}\rfloor}\]
		Note\footnote{Observe $\lfloor 2^{m}\frac{s}{t}\rfloor = \max\{n\in\naturals: n2^{-m}\leq \tfrac{s}{t}\}$.
		Then $\lfloor 2^{m}\frac{s}{t}\rfloor\geq 2^{m-k}\lfloor 2^{k}\frac{s}{t}\rfloor$
		is clear from $2^{m-k}\lfloor 2^{k}\frac{s}{t}\rfloor \cdot 2^{-m}
		= 2^{-k}\lfloor 2^{k}\frac{s}{t}\rfloor\leq\tfrac{s}{t}$.\\
		For the second inequality let $a=\lfloor 2^{m}\frac{s}{t}\rfloor$
		and $b=\lfloor 2^{k}\frac{s}{t}\rfloor$.
		Then $2^{-m}a\leq \tfrac{s}{t}\leq (b+1)2^{-k}$.
		Hence $a-2^{m-k}b\leq (b+1)2^{m-k}-b2^{m-k}=2^{m-k}$}
		 $0\leq\lfloor 2^{m}\tfrac{s}{t}\rfloor
		- 2^{m-k}\lfloor 2^{k}\tfrac{s}{t}\rfloor\leq 2^{m-k}$.
		By triangle inequality:
		\begin{IEEEeqnarray*}{rCl}
			|\gamma^m_s-\gamma^k_s| &=& 
			|x^m_{\lfloor 2^{m}\frac{s}{t}\rfloor}-x^m_{2^{m-k}\lfloor 2^{k}\frac{s}{t}\rfloor}|
			\leq \sum^{\lfloor 2^{m}\frac{s}{t}\rfloor-1}_{i=2^{m-k}\lfloor 2^{k}\frac{s}{t}\rfloor} |x_{i+1}^m-x_i^m| \\
			&\leq& (\lfloor 2^{m}\tfrac{s}{t}\rfloor
			- 2^{m-k}\lfloor 2^{k}\tfrac{s}{t}\rfloor) t2^{-m}M\leq t2^{-k}M
		\end{IEEEeqnarray*}
		This is arbitrarily small for large enough $k$.
		That is, the sequence $\gamma^m$ is uniformly Cauchy, hence converges uniformly
		to some $\gamma:[0,t]\to \Rb^n$.
		Since the range of every $\gamma^m$ is a subset of $K$ and
		$K$ is closed,
		the range of $\gamma$ is a subset of $K$.

		For point \ref{p2}, the continuity of $\gamma$, consider $s\in[0,t]$
		and $\varepsilon>0$.
		By uniform convergence pick $m$ large enough such that
		$t2^{-m}M\leq \tfrac{\varepsilon}{6}$ and
		$\Vert\gamma^m-\gamma\Vert_K\leq\tfrac{\varepsilon}{2}$.
		Consider some $r\in [s,t]$.
		Note that $0\leq\lfloor 2^m\tfrac{r}{t}\rfloor-\lfloor 2^m\tfrac{s}{t}\rfloor \leq 2^m\tfrac{|r-s|}{t}+1$.\footnote{Let $a = \lfloor 2^m\tfrac{r}{t}\rfloor$ and $b=\lfloor 2^m\tfrac{s}{t}\rfloor$. Then combine
		$a2^{-m}\leq \tfrac{r}{t}$ and $(b+1)2^{-m}\geq \tfrac{s}{t}$ to
		get $2^{-m}(a-b-1)\leq \tfrac{|r-s|}{t}$.}
		Then
		\begin{IEEEeqnarray*}{rCl}
				|\gamma^m_s-\gamma^m_r|&=& 
				|x^m_{\lfloor 2^{m}\frac{s}{t}\rfloor}-x^m_{\lfloor 2^{m}\frac{r}{t}\rfloor}|
				\leq \sum^{\lfloor 2^{m}\frac{s}{t}\rfloor-1}_{i=\lfloor 2^{m}\frac{r}{t}\rfloor} |x_{i+1}^m-x_i^m| \\
				&\leq& (\lfloor 2^{m}\tfrac{s}{t}\rfloor
				- \lfloor 2^{m}\tfrac{r}{t}\rfloor) t2^{-m}M\\
				&\leq& |r-s|M + t2^{-m}N \leq|r-s|M +\tfrac{\varepsilon}{6}
			\end{IEEEeqnarray*}
		By symmetry the same inequality holds for $r\in [0,s]$.
		Also
			\begin{IEEEeqnarray*}{rCl}
				|\gamma_s-\gamma_r|&\leq&
				|\gamma_s-\gamma^m_s| +|\gamma^m_s-\gamma^m_r|+|\gamma^m_r-\gamma_r|\\
				&\leq& 2\Vert\gamma^m-\gamma\Vert_K + |r-s|N +\tfrac{\varepsilon}{6}
				\leq \tfrac{5}{6}\varepsilon+ |r-s|M.
			\end{IEEEeqnarray*}
		Thus $|\gamma_s-\gamma_r|\leq \varepsilon$
		for all $r$ with $|s-r|\leq \tfrac{\varepsilon}{6N}$, i.e.
		$\gamma$ continuous.

		It remains to prove point \ref{p3}.
		Because $\gamma^m_r$ is piecewise constant, applying the triangle inequality yields:
			\begin{IEEEeqnarray*}{+rCl+rx*}
				|\gamma^m_s - \gamma^m_0 - \int_0^s F(\gamma^m_r) \ud r|
				&\leq& 
				\sum_{k=1}^{\lfloor 2^m \frac{s}{t}\rfloor}
				\left |x^m_{k} - x^m_{k-1} - t2^{-m}F(x_{k-1}^m)
				\right | 
				\\&&+|\gamma^m_s-\gamma^m_{\lfloor 2^m \frac{s}{t}\rfloor}-t 2^{-m}F(x_{\lfloor 2^m \frac{s}{t}\rfloor}^m)|\\
				&\leq& 2^m\tfrac{s}{t}t^2 2^{-2m-1}M 
				+ t2^{-m}\Vert F\Vert_K\\
				&\leq& 2^{-m}Mt(\tfrac{s}{2}+1)
				\xrightarrow{m\to \infty} 0 \quad\quad\qedhere
			\end{IEEEeqnarray*}
\end{proof}

\newcommand{\secondderstdterm}{\hat \theta}%

The fixpoint representation of $\Rc_K$ turns the continuous evolution modality into a fixpoint operator in \dLmu.
Consider a modality $\ddiamond{\pevolve{x'=\stdterm}}{\varphi}$ written in one-dimensional notation (even if it could have $N$-dimensional tuples of variables and terms in
place of $x$ and $\stdterm$).
Now pick $F(r) = \modif{\stdstate}{x}{r}\envelope{\theta}$.
Because $\theta$ is a term in the language of ordered fields, it can
be thought of as a 
vector of multivariate polynomials.
As such $F$ is continuously differentiable and moreover there is another $\Langdlf$-term 
$\secondderstdterm$ such that $\stdstate\tfrac{a}{x}\envelope{\secondderstdterm} = DF(a)F(a)$.
If $\stdterm= (p_1, \ldots, p_n)$ is a vector of multivariate polynomials $p_1, \ldots,p_n$, then
$\secondderstdterm$ is the vector of multivariate polynomials
\begin{equation}\label{eq:gradprodterm}
\secondderstdterm = \begin{pmatrix}
	\sum_{i=1}^n (\partial_ip_1)\cdot p_i,\ldots,
	\sum_{i=1}^n (\partial_ip_n)\cdot p_i
\end{pmatrix}
\tag{\text{$\ast$}}
\end{equation}
where $\partial_i p_j$ is the formal derivative of $p_j$ with respect to $x_i$.

By \rref{prop:ODEgfphelper} the reachability relation of $F$ in the compact set
$K$ is the greatest fixpoint of~$\Gamma_K$.
The following axiom captures this syntactically in differential $\mu$-calculus.
\[
		\cinferenceRule[nabla|$\nabla$]{nabla axiom}
   		{
  		 \linferenceRule[equiv]
		   {\lexists{y}{\left( \ddiamond{x:=y}{\varphi}
		   \land \lexists{t{>}0}{\lexists{M{>}0}{\gfp{X}{\rho}}}\right)}}
   		{\ddiamond{\pevolve{x'=\stdterm}}{\varphi}}
   		}{}
\]
	where $\rho$ is the conjunction of
	\begin{enumerate}
		\item $\abs{x}\leq M \land \abs{y}\leq M$
		\item $\lexists{z}{\big( \abs{z}\leq M \land \abs{y-x} \leq t\abs{\transposeterm{\stdterm}{x}{z}}\big)}$
		\item $\lexists{z}{\big( \abs{z}\leq M \land 2\abs{y-x -t \stdterm} \leq t^2\abs{\transposeterm{\secondderstdterm}{x}{z}}\big)}$ with $\secondderstdterm$ by \rref{eq:gradprodterm}
		\item $\lexists{u}{\ddiamond{t:=\tfrac{t}{2}}{(\ddiamond{y:=u}{X} \land
		\ddiamond{x:=u}{X})}}$.
	\end{enumerate}

\noindent
The variable $y$ in axiom \irref{nabla}
represents the point reachable from $x$ along the differential equation,
which witnesses that $\ddiamond{x'=\stdterm}{\varphi}$.
This explains the condition $\ddiamond{x:=y}{\varphi}$,
i.e. that formula $\varphi$ 
must be true in the final state.
Moreover $y$ must be reachable from $x$ along the differential equation.
Equivalently there must be some time $t$ and some $M$ such that 
the point $y$ is reachable from point $x$ along an integral curve of $F$ in time $t$
while staying in the compact set $K=\{z : |z|\leq M\}$.
(By continuity of integral curves
and compactness of the time interval.)
By \rref{prop:ODEgfphelper} this is equivalent to $(x,y,t)$ belonging
to the greatest fixpoint of~$\Gamma_K$.
Syntactically this is expressed as the formula $\gfp{X}{\rho}$.
The first conjunct of $\rho$ corresponds to $x,y\in K$. 
The second and third capture that $(x,y,t)\in G_K$.
To ensure $\abs{y-x}\leq t\norm[K]{F}$, by compactness, it suffices to
require the existence of some point $z\in K$
at which $t|F(z)|$ (syntactically $t|\transposeterm{\stdterm}{x}{z}|$) is greater than $\abs{y-x}$. Similarly for the second condition of $G_K$.
The fourth conjunct represents the defining condition of the fixpoint map $\Gamma_K$.

\begin{corollary}
	The axiom scheme \irref{nabla} is sound for the
	differential $\mu$-calculus.
\end{corollary}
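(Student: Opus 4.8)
The plan is to prove the biconditional in \irref{nabla} semantically over the intended structure $\Rf$, reducing the substantive direction to \rref{prop:ODEgfphelper} and to the fact that an integral curve defined on a compact time interval has compact image. Fix a state $\stdstate$ and, writing the differential equation in one-dimensional notation as in the text, put $F(r)=\modif{\stdstate}{x}{r}\envelope{\stdterm}$; this $F$ is continuously differentiable (indeed polynomial) and $\modif{\stdstate}{x}{r}\envelope{\secondderstdterm}=(DF)(r)\,F(r)$ by \rref{eq:gradprodterm}. Since $\pevolve{x'=\stdterm}$ carries no evolution domain constraint, $\stdstate\in\Rf\envelopemu{\ddiamond{\pevolve{x'=\stdterm}}{\varphi}}$ holds iff there is a time $t\ge 0$ and an integral curve $\gamma\colon[0,t]\to\Rb^N$ of $F$ with $\gamma_0=\stdstate(x)$ and $\modif{\stdstate}{x}{\gamma_t}\in\Rf\envelopemu{\varphi}$. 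As every such curve stays in some ball $K=\{z\in\Rb^N:|z|\le M_0\}$, this is equivalent to the existence of $M_0>0$, $t\ge 0$ and a point $b\in\Rb^N$ with $\stdstate(x)\xrightarrow{F,K}_{t}b$ and $\modif{\stdstate}{x}{b}\in\Rf\envelopemu{\varphi}$. The degenerate case $t=0$, in which $\varphi$ already holds at $\stdstate$, is matched on the right of \irref{nabla} by a direct check (take $b=\stdstate(x)$); for the remainder assume $t>0$.

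The crux is to identify the inner greatest fixpoint $\gfp{X}{\rho}$ with $\Rc_K$ after ``freezing'' the remaining variables. Fix a value $M_0>0$ for the variable $M$, set $K=\{z:|z|\le M_0\}$, which is compact, and to each state $\newstate$ that agrees with $\stdstate$ off $\{x,y,t\}$, has $\newstate(M)=M_0$ and $\newstate(t)\ge 0$ associate the triple $\iota(\newstate)=(\newstate(x),\newstate(y),\newstate(t))$; this $\iota$ is a bijection from the resulting slice of states onto $X=\Rb^N\times\Rb^N\times[0,\infty)$. The key claim, verified conjunct by conjunct, is that on this slice the monotone operator sending $D$ to $\stdasst\tfrac{D}{X}\envelopemu{\rho}$ (which on the slice is independent of the valuation $\stdasst$, $X$ being the only free propositional variable of $\rho$) is conjugate under $\iota$ to the operator $\Gamma_K$ of \rref{prop:ODEgfphelper}. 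Indeed conjunct~1 of $\rho$ expresses $x,y\in K$; conjuncts~2 and~3 express $(x,y,t)\in G_K$, where compactness of $K$ makes $\norm[K]{F}=\sup_{z\in K}|F(z)|$ and $\norm[K]{(DF)F}=\sup_{z\in K}|(DF)F(z)|$ attained, so that $|y-x|\le t\norm[K]{F}$ is equivalent to $\exists z\in K.\,|y-x|\le t|F(z)|$ and likewise for the second bound via $\secondderstdterm$; and conjunct~4 expresses the splitting condition of $\Gamma_K$, since $\ddiamond{t:=\tfrac{t}{2}}{(\ddiamond{y:=u}{X}\land\ddiamond{x:=u}{X})}$ consults $X$ exactly at the two states whose $\iota$-images are $(x,u,\tfrac{t}{2})$ and $(u,y,\tfrac{t}{2})$, both again in the slice.

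Because $\rho$ reads $X$ only at states inside the slice, the $\rho$-operator depends on its argument only through the slice; combining this with the conjugacy just described, $\gfp{X}{\rho}\cap\text{slice}$ equals $\iota^{-1}(\gfp{X}{\Gamma_K})$ — one inclusion because $\gfp{X}{\rho}\cap\text{slice}$ transports to a post-fixpoint of $\Gamma_K$, the other because applying the $\rho$-operator once to $\iota^{-1}(\gfp{X}{\Gamma_K})$ yields a post-fixpoint of the $\rho$-operator whose restriction to the slice is again $\iota^{-1}(\gfp{X}{\Gamma_K})$. By \rref{prop:ODEgfphelper}, $\gfp{X}{\Gamma_K}=\Rc_K$, so this says that for $\newstate$ in the slice $\newstate\in\Rf\envelopemu{\gfp{X}{\rho}}$ iff $\newstate(x)\xrightarrow{F,K}_{\newstate(t)}\newstate(y)$. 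Assembling the outer quantifiers of \irref{nabla} (with $y,t,M$ and the auxiliaries $u,z$ taken fresh), $\stdstate$ satisfies the left-hand side iff there are $b\in\Rb^N$, $t>0$ and $M_0>0$ with $\modif{\stdstate}{x}{b}\in\Rf\envelopemu{\varphi}$ and $\stdstate(x)\xrightarrow{F,K}_{t}b$ for $K=\{z:|z|\le M_0\}$ — which, by the first paragraph together with the separately treated $t=0$ case, is precisely $\stdstate\in\Rf\envelopemu{\ddiamond{\pevolve{x'=\stdterm}}{\varphi}}$. The vectorial case is identical, reading $x$ and $\stdterm$ as tuples.

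I expect the main obstacle to be the freezing argument of the second and third paragraphs: one must track carefully that the $\mu$-calculus denotation of $\gfp{X}{\rho}$, computed as a greatest post-fixpoint over all of $\Sc$ and hence over arbitrary values of every variable, genuinely restricts to the set-theoretic greatest fixpoint of $\Gamma_K$ over $\Pc(X)$. This relies on the observations that $\rho$ never reassigns $M$, that the witnesses for $z$ in conjuncts~2 and~3 are local to first-order quantifiers, that any parameters occurring in $\stdterm$ and $\secondderstdterm$ besides $x$ are held fixed at their $\stdstate$-values inside the slice, and that the recursive calls in conjunct~4 only move to states inside the slice, so nothing in $\rho$ can carry the fixpoint computation outside it.
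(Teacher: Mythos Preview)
Your approach is exactly what the paper intends: the corollary is stated without an explicit proof, as an immediate consequence of \rref{prop:ODEgfphelper} together with the surrounding explanation of how the conjuncts of $\rho$ encode the defining conditions of $\Gamma_K$. Your slice-and-conjugacy argument making this precise is correct, and your observation that compactness of $K$ makes the suprema $\norm[K]{F}$ and $\norm[K]{(DF)F}$ attained (so that conjuncts~2 and~3 genuinely express membership in $G_K$) is the right point to isolate.

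There is one step that does not go through as written: your handling of $t=0$. You assert that it ``is matched on the right of \irref{nabla} by a direct check (take $b=\stdstate(x)$)'', but the right-hand side of \irref{nabla} quantifies $\exists t>0$, and for $b=\stdstate(x)$ this would require $(\stdstate(x),\stdstate(x),t)\in\Rc_K$ for some $t>0$, i.e.\ that the flow of $F$ returns to $\stdstate(x)$ in strictly positive time---false in general. Concretely, take $\stdterm\equiv 1$, $\varphi\equiv(x=0)$, and $\stdstate(x)=0$: the left side of \irref{nabla} holds via the zero-duration evolution, yet any $y$ with $\modif{\stdstate}{x}{y}\models\varphi$ forces $y=0$, which is not reachable from $0$ in positive time along $x'=1$. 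This is not a defect of your strategy but a mismatch in the axiom as printed: the quantifier should be $\exists t\ge 0$ (matching $X=\Rb^N\times\Rb^N\times[0,\infty)$ in \rref{prop:ODEgfphelper}). With that emendation your argument is complete, since $(x,x,0)\in\Rc_K$ whenever $x\in K$ and this triple lies in the greatest fixpoint of $\Gamma_K$.
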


\subsection{Proof Calculi}
For a \dLmu-formula $\varphi$ write \(\infers[\dLmu] \varphi\)
to mean \(T_\dLmusubst \infers[{\Lmucalc[\Langdl]}] \varphi\) where $T_\dLmusubst$
consists of the set of all $\Lmu[\Langdl]$ instances of the axioms
\irref{thereandbackagain} and \irref{nabla}
together with all first-order formulas valid in~$\Rb$.

Similarly for \dGL-formula $\varphi$ write \(\infers[\dGL] \varphi\)
for \m{T_{\dGL} \infers[{\GLcalc[\Langdl]}] \varphi} where $T_{\dGL}$
consists of the set of all $\GL[\Langdl]$ instances of the axioms
\irref{thereandbackagain}, all first-order formulas that
are valid in~$\Rb$
and all formulas $\psi^\flat$ for instances $\psi$ of \irref{nabla}.
Modulo the axioms \irref{nabla} and \irref{thereandbackagain} this is the calculus for differential game logic \cite{DBLP:journals/tocl/Platzer15}.

These are both computable proof calculi by the decidability of first-order real arithmetic due to Tarski-Seidenberg \cite{tarski_decisionalgebra51}.

The results from Sections~\ref{sec:secequiexpressive}
and~\ref{sec:relatingcalculi} are applicable to these calculi, because an independent control variable can always be chosen and its independence proved syntactically from
the axioms \irref{assignment} and \irref{nabla}.
Hence differential game logic and differential $\mu$-calculus are equi-potent by \rref{prop:deductivelyequivalent}.

Both calculi have essentially the same axioms
stated in their respective syntax.
To be precise \(\infers[\dGL] \varphi^\flat\) for any $\varphi\in T_\dLmusubst$
and \(\infers[\dLmu] \varphi^\sharp\) for any $\varphi\in T_{\dGL}$.
Moreover any valid formula of differential game logic is provable from the translations of valid formulas of the differential $\mu$-calculus and vice versa.

Observe that the natural numbers are definable in \dLmu:
	\[n\in \Nb \quad\lbisubjunct\quad \lfp{X}{(n=0 \lor \ddiamond{n:=n-1}{X})}\]
First-order integer arithmetic is
interpretable in differential $\mu$-calculus.
The proof calculus must therefore be incomplete by G\"odel's incompleteness theorem \cite{Goedel_1931}.
Hence
there is a true \dLmu-formula not provable in that calculus.
The relevant notion to look at is therefore relative completeness.

\subsection{Relative Completeness}

The differential $\mu$-calculus
is \emph{complete relative} to a syntactic fragment, if for any 
valid \dLmu-formula $\varphi$ there is a valid formula $\rho$ in that fragment such that \(\infers[\dLmu] \rho \limply\varphi\).

The main tool to prove relative completeness for the differential $\mu$-calculus
is \rref{prop:relativecompletenessscheme}.
The next proposition yields the stronger property of \dLmu-provable expressivity.
Recall \rref{def:provablyexpressiv}: a syntactic fragment of \dLmu is 
\emph{\dLmu-provably expressive},
if for any \dLmu-formula $\varphi$ there is another
formula~$\rho$ in that fragment such that \(\infers[\dLmu] \rho \lbisubjunct\varphi\).

\begin{theorem}[Relative completeness] \label{thm:dlmuexpressiveimpliesrelativelycomplete}
	The logic \dLmu is complete relative to \emph{any} \dLmu-provably expressive fragment.
\end{theorem}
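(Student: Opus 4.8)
The plan is to read this off directly from the definition of provable expressivity (\rref{def:provablyexpressiv}) together with soundness of the $\dLmu$ calculus; no new reasoning about differential equations, fixpoints, or game structure is needed here, since all of that substance has already been absorbed into the notion of a \dLmu-provably expressive fragment (via \rref{prop:relativecompletenessscheme} and the soundness of \irref{nabla} and \irref{thereandbackagain}). So let $\Fc$ be a \dLmu-provably expressive fragment and let $\varphi$ be a valid \dLmu-formula; the task is to produce a valid $\rho\in\Fc$ with $\infers[\dLmu]\rho\limply\varphi$.

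First I would apply \dLmu-provable expressivity to $\varphi$, obtaining some formula $\rho\in\Fc$ with $\infers[\dLmu]\rho\lbisubjunct\varphi$. Next I would invoke soundness of $\infers[\dLmu]$ with respect to truth in the structure $\Rf$. This is assembled from the \Lmucalc soundness theorem --- its rules \irref{modusPonens}, \irref{atomicmonotonicity}, \irref{muI} preserve truth in any fixed structure, hence preserve $\Rf$-validity --- together with the observation that every formula of $T_\dLmusubst$ is valid over $\Rf$: first-order formulas valid in $\Rb$ stay valid over $\Rf$ (whose first-order reduct is $\Rb$), the propositional, equality, and quantifier/assignment axioms are valid over the assignment and quantifier structure $\Rf$, \irref{thereandbackagain} is sound by \rref{prop:thereandback}, and \irref{nabla} is sound by its soundness corollary. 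Hence $\rho\lbisubjunct\varphi$ is valid over $\Rf$, and since $\varphi$ is valid this forces $\rho$ to be valid as well.

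Finally, $\infers[\dLmu]\rho\limply\varphi$ follows from $\infers[\dLmu]\rho\lbisubjunct\varphi$ by a single propositional step: $(\rho\lbisubjunct\varphi)\limply(\rho\limply\varphi)$ is a propositional tautology, hence an axiom of the base calculus, so \irref{modusPonens} yields $\infers[\dLmu]\rho\limply\varphi$. Thus $\rho$ is a valid formula of $\Fc$ with $\infers[\dLmu]\rho\limply\varphi$, which is exactly relative completeness of \dLmu to $\Fc$. I do not expect a genuine obstacle here; the only part needing care is the bookkeeping that $\infers[\dLmu]$ is sound over $\Rf$, and even that is routine given the soundness results already recorded. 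This theorem would then be coupled with a proof that the differential-equation-free fragment of \dLmu is \dLmu-provably expressive --- eliminating differential equations via \irref{thereandbackagain} and \irref{nabla} and invoking \rref{prop:relativecompletenessscheme} --- to conclude that \dLmu is complete relative to first-order fixpoint logic, and, via the equi-potency results, that \dGL is complete relative to its ODE-free fragment.
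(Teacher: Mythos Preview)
Your proposal is correct and follows essentially the same approach as the paper: obtain $\rho$ from provable expressivity, use soundness to transfer validity from $\varphi$ to $\rho$, and extract the implication from the biconditional. You elaborate more on why $\infers[\dLmu]$ is sound over $\Rf$ and on the propositional step, but the paper's proof is just a terser version of the same argument.
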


\begin{proof}
	Let $K$ be a \dLmu-provably expressive fragment.
	Suppose $\varphi$ is a valid \dLmu formula.
	There is a formula $\rho$ in the fragment~$K$
	such that \(\infers[\dLmu] \varphi \lbisubjunct\rho\).
	By soundness of the proof calculus and validity of $\varphi$,
	the formula $\rho$ is also valid.
	Thus $\rho$ is a valid formula of $K$ such that 
	\(\infers[\dLmu] \rho \limply\varphi\).
\end{proof}

\begin{corollary}\label{cor:removeevolutiondomainconstraint}
	The fragment of differential $\mu$-calculus 
	without evolution domain constraints is \dLmu-provably expressive.
\end{corollary}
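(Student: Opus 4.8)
The plan is to reduce to the local elimination principle \rref{prop:relativecompletenessscheme}, instantiated with $T = T_\dLmusubst$ and with $\Lambda$ chosen to be the set of all continuous evolution modalities $\pevolvein{{x'}=\stdterm}{\psi}$ whose evolution domain constraint $\psi$ is not literally $\ltrue$. With this choice the $\Lambda$-modality-free fragment of $\dLmu$ is exactly the fragment without evolution domain constraints: constraint-free evolutions $\pevolve{{x'}=\stdterm}$ (that is, $\pevolvein{{x'}=\stdterm}{\ltrue}$), deterministic and nondeterministic assignments, and all propositional and fixpoint structure are retained. So it suffices to verify the hypothesis of \rref{prop:relativecompletenessscheme}: for every $\stdtr = \pevolvein{{x'}=\stdterm}{\psi}\in\Lambda$ and every \LmuV-formula $\varphi$ without $\Lambda$-modalities, to exhibit a $\Lambda$-modality-free $\hat\varphi$ with no more free \pvariables than $\varphi$ such that $T_\dLmusubst \infers[{\Lmucalc[\Langdl]}] \hat\varphi \lbisubjunct \ddiamond{\pevolvein{{x'}=\stdterm}{\psi}}{\varphi}$.

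For $\hat\varphi$ I would simply take the ``before'' side of the ``there and back again'' axiom \irref{thereandbackagain}, with $t$ a fresh \ovariable not occurring in $\varphi$, $\stdterm$ or $\psi$:
\[
	\hat\varphi \;\equiv\; \ddiamond{t:=0}{\ddiamond{\pevolve{{x'}=\stdterm, t'=1}}{\big(\varphi\land\dbox{\pevolve{{x'}=-\stdterm, t'=-1}}{(t\geq0\limply\psi)}\big)}}.
\]
The equivalence $\hat\varphi \lbisubjunct \ddiamond{\pevolvein{{x'}=\stdterm}{\psi}}{\varphi}$ is precisely an $\Lmu[\Langdl]$-instance of \irref{thereandbackagain}, hence an element of $T_\dLmusubst$ and so $T_\dLmusubst$-provable (its soundness is \rref{prop:thereandback}). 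Two syntactic side conditions then need checking. First, $\hat\varphi$ contains no $\Lambda$-modality: its only modalities are the deterministic assignment $t:=0$, the two evolutions $\pevolve{{x'}=\stdterm, t'=1}$ and $\pevolve{{x'}=-\stdterm, t'=-1}$ — which lie outside $\Lambda$ exactly because their domain constraint is $\ltrue$ — and the modalities occurring inside $\varphi$, none of which is in $\Lambda$ by assumption; $\psi$ and $t\geq0\limply\psi$ are first-order $\Langdlf$-formulas and carry no modalities at all. Second, $\hat\varphi$ has exactly the free \pvariables of $\varphi$, since $\psi$ is first-order and contributes none and the surrounding context merely applies modalities; in particular $\hat\varphi$ has no more free \pvariables than $\varphi$.

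With the hypothesis of \rref{prop:relativecompletenessscheme} established, the proposition gives that the $\Lambda$-modality-free fragment of $\Lmu[\Langdl]$ — i.e., the differential $\mu$-calculus without evolution domain constraints — is $T_\dLmusubst$-provably expressive, and since the $\rho$ produced there inherits the property of having no more free \pvariables than the input, a \dLmu-formula (no free \pvariables) gets a $\rho$ that is again a \dLmu-formula; thus the fragment is \dLmu-provably expressive in the sense of \rref{def:provablyexpressiv}. I do not expect a genuine obstacle: the only points requiring care are the bookkeeping around the choice of $\Lambda$ (taking it to consist of the \emph{nontrivially} constrained evolutions, so that the target fragment still contains the constraint-free ones) and the observation that \irref{thereandbackagain} genuinely pushes the constraint $\psi$ out of every evolution modality and into a first-order postcondition, leaving only constraint-free evolutions behind.
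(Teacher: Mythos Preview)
Your proposal is correct and follows essentially the same approach as the paper: apply \rref{prop:relativecompletenessscheme} with $\Lambda$ the set of evolution modalities carrying a nontrivial domain constraint, and use the axiom \irref{thereandbackagain} to discharge the local elimination hypothesis. The paper's proof is a two-line version of exactly this, and your additional bookkeeping (that $\hat\varphi$ introduces only constraint-free evolutions and a deterministic assignment, and has no new free \pvariables since $\psi$ is first-order) correctly spells out what the paper leaves implicit.
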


\begin{proof}
	By axiom \irref{thereandbackagain}, this follows by \rref{prop:relativecompletenessscheme} applied to the set of modalities with evolution domain constraints.
\end{proof}

The \emph{continuous} fragment of \dLmu is the fragment containing no modalities except differential equation modalities. 
The fragment of \dLmu containing only nondeterministic assignment modalities is first-order least-fixpoint logic over~$\Rb$.

The continuous and least-fixpoint fragments have the same expressive power.
As long as one of the modalities allows state change it is immaterial which one.

\begin{theorem}[Expressive] \label{thm:dlmuexpressive}
	The continuous fragment and least-fixpoint logic over $\Rb$
	are \dLmu-provably expressive.
	So \dLmu is complete relative to both fragments.
\end{theorem}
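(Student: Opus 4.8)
The plan is to read off both claims from the Local Reduction \rref{prop:relativecompletenessscheme} together with \rref{thm:dlmuexpressiveimpliesrelativelycomplete}, by eliminating the three kinds of $\Langdl$-transition symbols — nondeterministic assignments $\prandom{x}$, deterministic assignments $x:=\stdterm$, and continuous evolutions $\pevolvein{x'=\stdterm}{\psi}$ — one kind at a time, trading them against one another. Concretely I would first show that least-fixpoint logic over $\Rb$ is \dLmu-provably expressive, and then, starting from the least-fixpoint-logic normal form so obtained, that the continuous fragment is \dLmu-provably expressive; relative completeness for each then follows from \rref{thm:dlmuexpressiveimpliesrelativelycomplete}.

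\emph{Least-fixpoint logic over $\Rb$.} Deleting all deterministic assignments and all continuous evolutions from $\Langdl$ leaves exactly the signature of least-fixpoint logic over $\Rb$, so it suffices to verify the hypothesis of \rref{prop:relativecompletenessscheme} for $\Lambda$ the set of these modalities. Given such a modality and a $\Lambda$-free postcondition $\varphi$, I would build a $\Lambda$-free equivalent $\hat\varphi$ by chaining already available axioms: a continuous evolution with a nontrivial domain constraint is first rewritten by the `there and back again' axiom \irref{thereandbackagain} into domain-constraint-free evolutions acting on first-order (hence $\Lambda$-free) postconditions, as in \rref{cor:removeevolutiondomainconstraint}; each remaining domain-constraint-free $\ddiamond{\pevolve{x'=\stdterm}}{\cdot}$ is replaced by the right-hand side of axiom \irref{nabla}, which mentions no continuous evolution (box occurrences are reduced to diamonds by negation inside the proof of \rref{prop:relativecompletenessscheme}, so only the diamond axiom is needed); and the deterministic assignments $t:=0$, $x:=y$, $x:=u$, $y:=u$ produced along the way are eliminated through the assignment axiom \irref{assignment} exactly as in \rref{ex:discrete}. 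None of these steps increases the free \pvariables, so \rref{prop:relativecompletenessscheme} yields that least-fixpoint logic over $\Rb$ is \dLmu-provably expressive, and \rref{thm:dlmuexpressiveimpliesrelativelycomplete} gives relative completeness.

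\emph{Continuous fragment.} By the previous paragraph every \dLmu-formula is \dLmu-provably equivalent to one whose only modalities are nondeterministic assignments; it remains to trade those for continuous evolutions. I would verify the hypothesis of \rref{prop:relativecompletenessscheme} for $\Lambda$ the set of nondeterministic assignments, using the equivalence
\[\infers[\dLmu]\ \ddiamond{\prandom{x}}{\varphi}\ \lbisubjunct\ \varphi\lor\ddiamond{\pevolve{x'=1}}{\ddiamond{\pevolve{x'=-1}}{\varphi}},\]
which mirrors that the flow of $x'=1$ reaches every value $\geq x$ and the flow of $x'=-1$ every value $\leq x$, so their composition realizes an arbitrary real assignment to $x$. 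The right-hand side uses no nondeterministic assignment, and the construction behind \rref{prop:relativecompletenessscheme} introduces no modalities beyond those already present in the formula being reduced and those of this encoding; applied to a least-fixpoint-logic formula (whose only modalities are $\prandom{x}$) it therefore outputs an equivalent formula lying in the continuous fragment. Composing this provable equivalence with the one from the previous paragraph shows the continuous fragment to be \dLmu-provably expressive, and \rref{thm:dlmuexpressiveimpliesrelativelycomplete} again delivers relative completeness.

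\emph{Main obstacle.} The bookkeeping around \rref{prop:relativecompletenessscheme} and the reductions \irref{thereandbackagain}, \irref{assignment} are routine, and \rref{cor:removeevolutiondomainconstraint} and \rref{ex:discrete} already supply parts of them. The delicate point is the displayed encoding of $\prandom{x}$: one must unfold both continuous-evolution modalities by axiom \irref{nabla} and then argue, purely by first-order arithmetic over $\Rb$ (available since $T_\dLmusubst$ contains all first-order formulas valid in $\Rb$), that for the constant vector fields $x'=\pm1$ the guarding greatest fixpoint of \irref{nabla} collapses to $\pm$-reachability of $y$ from $x$ — here the associated term $\secondderstdterm$ is identically $0$, so the third conjunct of $\rho$ forces $y=x\pm t$ — which is where \rref{prop:ODEgfphelper}, specialized to linear flows, is invoked; keeping the disjunct $\varphi$ is what absorbs the degenerate duration $t=0$ that axiom \irref{nabla} itself does not provide.
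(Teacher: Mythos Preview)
Your proposal is correct and follows essentially the same route as the paper: apply the local reduction \rref{prop:relativecompletenessscheme} with the axioms \irref{thereandbackagain}, \irref{nabla}, and \irref{assignment} to reach least-fixpoint logic, then encode $\prandom{x}$ by linear flows to reach the continuous fragment, and invoke \rref{thm:dlmuexpressiveimpliesrelativelycomplete} for relative completeness. Two cosmetic differences are worth noting. First, the paper stages the first reduction in three passes (domain constraints via \rref{cor:removeevolutiondomainconstraint}, then constraint-free evolutions via \irref{nabla}, then deterministic assignments via \irref{assignment} as in \rref{ex:discrete}), using transitivity of provable expressivity; you fold all three into a single $\Lambda$ and a single call to \rref{prop:relativecompletenessscheme}. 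That works, but your sentence that \irref{thereandbackagain} yields ``domain-constraint-free evolutions acting on first-order postconditions'' is not quite right---the outer evolution acts on $\varphi\land\dbox{\pevolve{\cdot}}{(\cdot)}$, which is not first-order---so you still have to unwind the nested box by negation and a second use of \irref{nabla} before cleaning up with \irref{assignment}. Second, the paper encodes $\prandom{x}$ as $\ddiamond{\pevolve{x'=1}}{\varphi}\lor\ddiamond{\pevolve{x'=-1}}{\varphi}$ rather than your composed form; both are fine, and your care about the $t=0$ case is reasonable (though note that in the composed form the net displacement $t_1-t_2$ already hits $0$ for $t_1=t_2>0$, so the extra disjunct $\varphi$ is harmless but not strictly needed). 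Your identification of the greatest-fixpoint collapse for $\secondderstdterm=0$ as the crux matches the paper's footnote argument.
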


\begin{proof} 
	First prove that the fragment without differential equations is \dLmu-provably expressive.
	By transitivity,
	it suffices to prove that this fragment is \dLmu-provably expressive 
	for the evolution domain
	constraint free fragment, which is itself 
	\dLmu-provably expressive by \rref{cor:removeevolutiondomainconstraint}.
	To see that this is the case
	apply \rref{prop:relativecompletenessscheme} for the evolution domain constraint free fragment,
	where axiom \irref{nabla} eliminates differential equation modalities without evolution domain constraints.

	To see that least-fixpoint logic over $\Rb$ is \dLmu-provably expressive, it suffices to show that it is provably-expressive relative to the differential-equation free fragment.
	This also follows with \rref{prop:relativecompletenessscheme}
	applied to the set of deterministic assignment modalities using axiom \irref{assignment} for elimination.
	
	Using the \irref{nabla} axiom prove\footnote{%
	Using $\rho$ from the instance of \irref{nabla} for $x'=1$, the axioms for fixpoints
	and \irref{fixpointnumonotonicity}, it follows that
	$t>0\land M>|x|+|y| \limply (\gfp{X}{\rho} \leftrightarrow y = x+t)$.
	(For one direction use the third conjunct and the fact that $\secondderstdterm=0$.)
	Thus $\ddiamond{x'=1}{\varphi}\leftrightarrow \lexists{t{>}0}{\ddiamond{x:=x+t}{\varphi}}$.
	Similarly for $x'=-1$.}
	\[\ddiamond{\prandom{x}}{\varphi} \lbisubjunct \ddiamond{x'=1}{\varphi} \lor \ddiamond{x'=-1}{\varphi}.\]
	With this derived axiom and \rref{prop:relativecompletenessscheme} applied to
	first-order least fixpoint logic, it follows that the
	continuous fragment is \dLmu-provably expressive.

	Relative completeness then follows from \rref{thm:dlmuexpressiveimpliesrelativelycomplete}.
\end{proof}

Analogously to the differential $\mu$-calculus there are notions of \dGL-provable
expressivity and relative completeness for fragments of differential game logic defined accordingly.

Almost the same proof as \rref{thm:dlmuexpressiveimpliesrelativelycomplete} shows that \dGL is complete relative
to any \dGL-provably expressive fragment.
\begin{theorem}[Relative completeness]
	The logic \dGL is complete relative to any \dGL-provably expressive fragment.
\end{theorem}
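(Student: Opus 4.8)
The plan is to follow the proof of \rref{thm:dlmuexpressiveimpliesrelativelycomplete} essentially verbatim; the only real ingredient beyond it is soundness of the \dGL proof calculus. First I would fix a \dGL-provably expressive fragment $K$ and an arbitrary valid \dGL-formula $\varphi$. By \dGL-provable expressivity there is a formula $\rho$ in $K$ with $\infers[\dGL]\varphi\lbisubjunct\rho$, hence in particular $\infers[\dGL]\rho\limply\varphi$. It then remains to argue that $\rho$ is itself valid. Granting soundness of $\infers[\dGL]$, the provable equivalence gives $\Rf\envelopegl{\varphi}=\Rf\envelopegl{\rho}$, and since $\varphi$ is valid, i.e.\ $\Rf\envelopegl{\varphi}=\Sc$, so is $\rho$. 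Thus $\rho$ is a valid formula of $K$ with $\infers[\dGL]\rho\limply\varphi$, which is what relative completeness with respect to $K$ asks for.

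So the work reduces to establishing soundness of $\infers[\dGL]$, that is, that $T_{\dGL}\infers[{\GLcalc[\Langdl]}]\varphi$ implies $\Rf\envelopegl{\varphi}=\Sc$. The underlying \GL[\Langdl] calculus is sound by the \GL soundness theorem, so a routine induction on derivations reduces the claim to checking that every formula of $T_{\dGL}$ is valid over $\Rf$. The first-order real-arithmetic formulas in $T_{\dGL}$ are valid over $\Rf$ by construction, and the instances of \irref{thereandbackagain} are sound by \rref{prop:thereandback}. For the remaining formulas $\psi^\flat$ with $\psi$ an instance of \irref{nabla}, I would invoke soundness of \irref{nabla} for the differential $\mu$-calculus (the corollary following \rref{prop:ODEgfphelper}), which gives $\Rf\envelopemu{\psi}=\Sc$; since $\Rf$ is an assignment structure, \rref{cor:generalequi-expressivity} then yields $\Rf\envelopegl{\psi^\flat}=\Rf\envelopemu{\psi}=\Sc$, so $\psi^\flat$ is valid as needed.

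I do not expect a serious obstacle here. The one step that requires a little care is transporting soundness of the \irref{nabla} axiom from the $\mu$-calculus side to the game-logic side, which is exactly what the translation $\psi\mapsto\psi^\flat$ together with the equi-expressivity \rref{cor:generalequi-expressivity} is designed to do, once one notes that $\Rf$ is an assignment structure so that corollary is applicable.
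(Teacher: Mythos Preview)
Your proposal is correct and follows exactly the approach the paper intends: the paper simply states that ``almost the same proof as \rref{thm:dlmuexpressiveimpliesrelativelycomplete}'' works, and you have faithfully carried that out. Your additional paragraph justifying soundness of $\infers[\dGL]$ over $\Rf$ (via \GL soundness, \rref{prop:thereandback}, and transporting \irref{nabla} through $\psi\mapsto\psi^\flat$ using \rref{cor:generalequi-expressivity}) is a welcome elaboration of something the paper leaves implicit.
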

The stronger statement, that \dGL is complete relative to any \emph{differentially expressive}\footnote{
\dGL fragment $K$ is
\emph{differentially expressive} if $K$ is as expressive as \dGL
and for any $K$-formula $\varphi$ there is a $K$-formula $\psi$ such that
\(\infers[\dGL] \ddiamond{\pevolve{x'=\stdterm}}{\varphi}\lbisubjunct\psi\).}
fragment, is true \cite[Thm.\,4.5]{DBLP:journals/tocl/Platzer15}.
This stronger result is not needed for what follows.
Similarly to \dLmu, the continuous fragment of \dGL consists of all formulas containing as atomic games only continuous evolutions.
The discrete fragment of \dGL consists of all formulas containing as atomic games only nondeterministic assignments.
\rref{thm:dlmuexpressive}{} carries over to \dLmu via the translation ${}^\flat$.

\begin{theorem}[Expressive] \label{thm:dgleqfreeprovexpr}
	The continuous and the discrete fragments of \dGL are
	provably expressive for \dGL.
\end{theorem}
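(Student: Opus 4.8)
The plan is to transfer \rref{thm:dlmuexpressive} from the differential $\mu$-calculus to differential game logic along the translations $\cdot^\sharp$ and $\cdot^\flat$, which by the discussion preceding this theorem already apply at the level of the calculi $\infers[\dLmu]$ and $\infers[\dGL]$. Fix a \dGL-formula $\varphi$. Then $\varphi^\sharp$ is a \dLmu-formula, so by \rref{thm:dlmuexpressive} there is a \dLmu-formula $\psi$ lying in the continuous fragment (respectively in least-fixpoint logic over $\Rb$) with $\infers[\dLmu] \varphi^\sharp \lbisubjunct \psi$. I claim that $\psi^\flat$ is then a \dGL-formula in the continuous (respectively discrete) fragment of \dGL satisfying $\infers[\dGL] \varphi \lbisubjunct \psi^\flat$; the verification splits into a provable-equivalence part and a fragment-membership part.

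For the provable equivalence I would use equi-potency (\rref{prop:deductivelyequivalent}), which applies to $\infers[\dLmu]$ and $\infers[\dGL]$ since an independent control variable can be chosen and its independence derived from \irref{assignment} and \irref{nabla}. From $\infers[\dLmu] \varphi^\sharp \lbisubjunct \psi$ one obtains $\infers[\dGL] (\varphi^\sharp \lbisubjunct \psi)^\flat$; because $\cdot^\flat$ provably distributes over propositional connectives (Step~1 in the proof of \rref{prop:deductivelyequivalent}), this yields $\infers[\dGL] (\varphi^\sharp)^\flat \lbisubjunct \psi^\flat$, and combining with $\infers[\GLcalc] (\varphi^\sharp)^\flat \lbisubjunct \varphi$ from \rref{prop:rco} (which holds a fortiori in the stronger calculus) gives $\infers[\dGL] \varphi \lbisubjunct \psi^\flat$.

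For the fragment membership I would inspect the clauses defining $\cdot^\flat$, using the variant in which control variables are handled by $\prandom{\ctrlvar{}};\testcval{X}$ rather than by $\setcval{X}$ (legitimate by the remark after \rref{cor:generalequi-expressivity}). Since $\psi$ has no free \pvariables, every \pvariable occurring in it is bound and hence translated through a fixpoint clause, so the only atomic games in $\psi^\flat$ are the atomic modalities of $\psi$ itself (copied verbatim by the clauses for $\ddiamond{\stdtr}{\varphi}$ and $\dbox{\stdtr}{\varphi}$) together with the single control assignment $\prandom{\ctrlvar{}}$; everything else is built from tests and the game operations of choice, demonic choice, sequential composition, (demonic) repetition and dualization. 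In the discrete case the modalities of $\psi$ are all nondeterministic assignments, so $\psi^\flat$ already lies in the discrete fragment of \dGL. In the continuous case the modalities of $\psi$ are all differential equation modalities, and it remains only to eliminate $\prandom{\ctrlvar{}}$: replacing each atomic occurrence of $\prandom{\ctrlvar{}}$ in $\psi^\flat$ by the game $(\pevolve{\ctrlvar{}'=1}) \cup (\pevolve{\ctrlvar{}'=-1})$ is a local replacement of an atomic game by a provably equivalent one (\rref{prop:schematicgameremoval}), the equivalence $\infers[\dGL] \ddiamond{\prandom{\ctrlvar{}}}{\varphi} \lbisubjunct \ddiamond{(\pevolve{\ctrlvar{}'=1}) \cup (\pevolve{\ctrlvar{}'=-1})}{\varphi}$ following from \irref{choice} and the derived equivalence $\ddiamond{\prandom{x}}{\varphi} \lbisubjunct \ddiamond{x'=1}{\varphi} \lor \ddiamond{x'=-1}{\varphi}$ established inside the proof of \rref{thm:dlmuexpressive} (which uses only instances of \irref{nabla}, available in \dGL via $\cdot^\flat$). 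The resulting formula lies in the continuous fragment of \dGL and is still \dGL-provably equivalent to $\psi^\flat$, hence to $\varphi$; relative completeness of \dGL with respect to both fragments then follows from the \dGL relative completeness theorem stated above.

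The main obstacle I anticipate is the bookkeeping around the control variable in the continuous case: one must ensure that $\ctrlvar{}$ is genuinely independent of every differential equation modality occurring in $\psi^\flat$, so that both the $\prandom{\ctrlvar{}}$-handling and its replacement by a clock-like differential equation are sound, and that \rref{prop:schematicgameremoval} applies uniformly even underneath repetitions and dualizations. All of this, however, is subsumed by the transfer of the Section~\ref{sec:secequiexpressive}--\ref{sec:relatingcalculi} machinery to $\infers[\dGL]$ recorded before this theorem, so the remaining work is routine inspection of the translation clauses.
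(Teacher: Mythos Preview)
Your overall strategy matches the paper's: pass to $\varphi^\sharp$, apply \rref{thm:dlmuexpressive} to obtain a fragment formula $\psi$ on the $\mu$-calculus side, transfer the equivalence back via $\cdot^\flat$ using equi-potency and \rref{prop:rco}, and finally tidy up the stray atomic games that $\cdot^\flat$ introduced using \rref{prop:schematicgameremoval}. The paper merely makes the equi-potency transfer more explicit, peeling off the finitely many $T_\dLmusubst$-axioms used and applying \rref{prop:deductivelyequivalent} at the base $\Lmucalc[\Langdl]$/$\GLcalc[\Langdl]$ level before reintroducing them on the \dGL side.

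There is, however, one genuine imprecision. In the provable-equivalence half you invoke \rref{prop:rco} and Step~1 of the proof of \rref{prop:deductivelyequivalent}, both of which are established for the \emph{deterministic-assignment} definition of $\cdot^\flat$; but in the fragment-membership half you switch to the $\prandom{\ctrlvar{}};\testcval{X}$ variant. The remark after \rref{cor:generalequi-expressivity} that you cite for this covers only the \emph{semantic} correctness of the variant (\rref{prop:translationmain}), not the syntactic results you need. So as written, the formula you have shown \dGL-provably equivalent to $\varphi$ and the formula whose atomic games you are inspecting are not literally the same. The paper avoids this mismatch by keeping the standard $\cdot^\flat$ throughout, observing that the only atomic games it introduces beyond those already in $\psi$ are the deterministic control assignments $\setcval{X}$, and then eliminating those in a final pass: via $\prandom{x};\ptest{x=\theta}$ (from \irref{assignment}, \irref{composition}, \irref{test}) for the discrete fragment, and via $(\pchoice{x'=1}{x'=-1});\ptest{x=\theta}$ for the continuous fragment, in both cases by \rref{prop:schematicgameremoval}. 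Your argument is repaired the same way, so the gap is shallow, but it does need closing.
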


\begin{proof}
	Consider a \dGL formula $\varphi$.
	By Theorem \ref{thm:dlmuexpressive} there is a formula
	$\rho$ of the continuous fragment of \dLmu such
	that \(\infers[{\dLmu}] \rho \lbisubjunct\varphi^\sharp\).
	A derivation witnessing this relies only on finitely many $T_\dLmusubst$-formulas, say \(\delta_1,\ldots\delta_k\).
	Thus, 
	\begin{IEEEeqnarray*}{l}
		\infers[{\Lmucalc[\Langdl]}] (\delta_1\land\ldots\land\delta_k)\limply(\rho \lbisubjunct\varphi^\sharp)\quad\quad \text{and}\\
		\infers[{\GLcalc[\Langdl]}] (\delta_1\land\ldots\land\delta_k)^\flat\limply(\rho^\flat \lbisubjunct{\varphi^\sharp}^\flat)
	\end{IEEEeqnarray*}
	by \rref{prop:deductivelyequivalent}.
	The equivalence \(\infers[\dGL] \rho^\flat \lbisubjunct{\varphi^\sharp}^\flat\) derives because \(\infers[\dGL] \delta_i^\flat\).
	With \rref{prop:rco} and \irref{modusPonens} derive 
	\(\infers[{\dGL}] \rho^\flat \lbisubjunct\varphi\).
	Now $\rho$ is differential equation-free and because ${}^\flat$ does not introduce
	any differential equations, $\varphi$ is \dGL-provably
	equivalent to the differential equation free formula~$\rho^\flat$.
	For all \dGL-formulas $\psi$, derive by \irref{assignment+composition+test}:
		\[\infers[\dGL] \ddiamond{x:=\stdterm}{\psi} \lbisubjunct \ddiamond{\prandom{x};\ptest{x=\stdterm}}{\psi}
		\]
	Hence $\infers[\dGL] \rho \lbisubjunct\tilde \rho$
	where $\tilde \rho$ is obtained from $\rho$ by replacing all deterministic assignments
	$x:=\stdterm$ by 
	$\prandom{x};\ptest{x=\stdterm}$.
	\footnote{For details see \rref{prop:schematicgameremoval}{}.}

	The case for the discrete fragment is similar. By the same argument as above, choosing $\rho$ as a nondeterministic assignment free formula instead, the nondeterministic assignment-free fragment of \dGL is provably expressive for \dGL.
	To remove remaining deterministic assignments
	use
	\[\infers[\dGL] \ddiamond{x:=\stdterm}{\psi} \lbisubjunct \ddiamond{(\pchoice{x'=1}{x'=-1});\ptest{x=\stdterm}}{\psi}
		\]
	instead.
	This is provable by \rref{prop:rco} and the fact that the translation of the formula by ${}^\sharp$ is provable from \irref{assignment} and the axiom derived in the proof of \rref{thm:dlmuexpressive}{}.
\end{proof}

As differential game logic is complete relative to any \dGL-provably expressive fragment,
\rref{thm:dgleqfreeprovexpr}
implies:
\begin{corollary}
	Differential game logic is complete relative to its continuous and its discrete fragment.
\end{corollary}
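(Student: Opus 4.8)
The plan is to obtain this purely by composing two results already established above, with no new argument required. First I would invoke the relative completeness theorem for \dGL, which asserts that \dGL is complete relative to \emph{any} \dGL-provably expressive fragment; this reduces the task to exhibiting such fragments. Second I would appeal to \rref{thm:dgleqfreeprovexpr}, which states precisely that the continuous fragment and the discrete fragment of \dGL are \dGL-provably expressive. Combining the two immediately gives completeness of \dGL relative to its continuous fragment and relative to its discrete fragment, which is exactly the claim.

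Spelling this out: let $K$ be either the continuous or the discrete fragment of \dGL, and let $\varphi$ be a valid \dGL-formula. By \rref{thm:dgleqfreeprovexpr} there is a formula $\rho$ in $K$ with $\infers[\dGL] \varphi \lbisubjunct \rho$. Since the \dGL proof calculus is sound (inheriting soundness from the \GL calculus together with soundness of \irref{thereandbackagain} and of the \irref{nabla} instances transported along ${}^\flat$) and $\varphi$ is valid, the formula $\rho$ is valid as well. Hence $\rho$ is a valid $K$-formula with $\infers[\dGL] \rho \limply \varphi$, which is what relative completeness demands.

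There is essentially no obstacle remaining at this point: all the genuine difficulty lies in the supporting results, namely \rref{thm:dlmuexpressive} for \dLmu, the transfer of provable expressivity down to \dGL through the equi-potency \rref{prop:deductivelyequivalent} and the round-trip identities \rref{prop:rco}, and the syntactic elimination of evolution domain constraints and differential equation modalities via \irref{thereandbackagain} and \irref{nabla}. The only point that deserves a moment of care is that one is chaining \emph{provable} equivalences (derivable in the \dGL calculus, not merely semantic ones), so that the two-directional implications actually compose through the reductions; but \rref{thm:dgleqfreeprovexpr} is already phrased in exactly this provable form, so the composition is immediate and the corollary follows.
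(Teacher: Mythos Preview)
Your proposal is correct and follows exactly the paper's own argument: the corollary is obtained by combining the \dGL relative completeness theorem (completeness relative to any \dGL-provably expressive fragment) with \rref{thm:dgleqfreeprovexpr}. The extra detail you spell out (soundness, the explicit chain through $\rho$) is faithful to that reasoning but more explicit than the paper, which simply records the corollary as an immediate consequence.
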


In contrast to the \pvariable hierarchy,
which collapses
over assignment structures,
general results \cite[5.B]{Moschovakis74} imply that the alternation hierarchy does not collapse.
There is a computable coding $\ulcorner \cdot\urcorner$ of 
\dLmu-formulas into natural numbers and a \dLmu-formula 
	$\mathrm{VALID_n}(\ulcorner\varphi\urcorner)$
with $n+1$ fixpoint operator alternations such that $\mathrm{VALID_n}(\ulcorner\varphi\urcorner)$ is valid for an $n$-fixpoint alternation
formula $\varphi$ exactly if $\varphi$ is valid.

Thus, \dLmu is not complete relative to the 
n-fixpoint operator alternation fragment $F$.
Since \dLmu{} provability
$\mathrm{PROV}(\ulcorner\varphi\urcorner)$
is definable in \dLmu, completeness of \dLmu relative to $F$
would imply that truth of a \dLmu formula
is definable in \dLmu{} by
	$$\lexists{\ulcorner\psi\urcorner \in F}{(\mathrm{VALID_n}(\ulcorner\psi\urcorner) \land \mathrm{PROV}(\ulcorner\psi\limply\varphi\urcorner))}.$$
This would contradict Tarski's undefinability theorem \cite{Tarski36}.

By \rref{cor:dGL-dLmu-equi-expressivity} these results carry over to differential game logic. The fragment with $n$-nested $\pdual{{\cdot^*}}$ games is strictly less expressive than the fragment
with $(n+1)$ such games.

\section{Related Work}
The modal $\mu$-calculus and its relation to model checking is well-studied \cite{BradfieldS06,DBLP:reference/mc/BradfieldW18,DBLP:conf/focs/Pratt81,DBLP:journals/tcs/EmersonJS01}.
Both completeness 
\cite{DBLP:journals/iandc/Walukiewicz00,DBLP:journals/tcs/EnqvistSV18, DBLP:journals/sLogica/Studer08,DBLP:conf/lics/AfshariL17} and expressivity 
\cite{DBLP:conf/stacs/Bradfield96,DBLP:conf/concur/JaninW96} of the propositional modal $\mu$-calculus have received significant attention.
Strictness of its fixpoint alternation \cite{DBLP:conf/concur/Bradfield96} and variable \cite{DBLP:journals/mst/BerwangerGL07} hierarchies are major results.

The literature \cite{DBLP:journals/bsl/DawarG02} has a survey on fixpoint logics extending first-order logic with fixpoint operators on finite structures.
Expressive equivalence of inflationary and least fixpoint operators was shown for finite \cite{DBLP:journals/apal/GurevichS86} and 
infinite structures \cite{DBLP:conf/lics/Kreutzer02}.
The sets of integers definable by alternating least and greatest fixpoints were investigated set theoretically \cite{DBLP:conf/lics/Lubarsky89}.

Propositional game logic is due to Parikh \cite{DBLP:conf/focs/Parikh83}.
Pauly \cite{pauly01} gives an exposition of game logic and interpretations over different kinds of transition systems in the propositional case.
General connections between modal $\mu$-calculus and games have been observed \cite{DBLP:conf/focs/EmersonJ91}, but modal $\mu$-calculus is inexpressible in game logic in the propositional case \cite{DBLP:journals/mst/BerwangerGL07}.

A few aspects of hybrid systems can already be cast in propositional modal $\mu$-calculus \cite{DBLP:conf/hybrid/Davoren97}, but first-order modalities are crucial for describing hybrid systems.
Differential dynamic logic is complete relative to its discrete and continuous fragments \cite{DBLP:conf/lics/Platzer12b}.
Its game variant \dGL is complete relative to differentially expressive sublogics \cite{DBLP:journals/tocl/Platzer15}.

\section{Conclusion}

The modal $\mu$-calculus and game logic were shown to have the same expressive power when interpreted over first-order structures.
Proof calculi for both logics were introduced and shown to have the same deductive power.
Augmented with differential equation modalities, this interplay was exploited to prove relative completeness and expressiveness results for differential $\mu$-calculus and differential game logic.

\bibliographystyle{ACM-Reference-Format}
\bibliography{platzer,dLmu}

\clearpage

\appendix
\section{Substitutions}

\subsection{Substitution of \PVariables}\label{app:substitutionpvars}

Let $\varphi,\psi$ be \LmuV-formulas and $X$ a \pvariable.
The formula $\subst[\varphi]{X}{\psi}$ obtained 
by replacing all free occurrences of $X$ by $\psi$ is defined as follows.
An occurrence of $X$ is free in $\varphi$, if it does not occur in the scope of a least or greatest fixpoint operator binding $X$.
Formally:
\begin{IEEEeqnarray*}{rClCrCl}
	\subst[\stdlit]{X}{\psi} &\equiv& 
	 \stdlit ~\text{for}~\stdlit\in\FOLL_{\Lcf}
	 \\
	\subst[Y]{X}{\psi} &\equiv& {Y} ~\text{for}~Y\in\overline{\Vc}\setminus \{X,\overline{X}\}
	\\
	\subst[X]{X}{\psi} &\equiv& {\psi}
	\\
	\subst[\overline{X}]{X}{\psi} &\equiv& \overline{\psi}
	\\
	\subst[(\varphi_1\lor\varphi_2)]{X}{\psi} &\equiv& 
	\subst[\varphi_1]{X}{\psi} \lor \subst[\varphi_2]{X}{\psi}
	\\
	\subst[(\varphi_1\land\varphi_2)]{X}{\psi} &\equiv& 
	\subst[\varphi_1]{X}{\psi}\land\subst[\varphi_2]{X}{\psi}
	\\
	\subst[(\ddiamond{\stdtr}{\varphi})]{X}{\psi} &\equiv& \ddiamond{\stdtr}{(\subst[\varphi]{X}{\psi})}
	\\
	\subst[(\dbox{\stdtr}{\varphi})]{X}{\psi} &\equiv& \dbox{\stdtr}{(\subst[\varphi]{X}{\psi})}
	\\
	\subst[(\sigma{Y}.{\varphi})]{X}{\psi} &\equiv& 
	\sigma{Y}.{(\subst[\varphi]{X}{\psi})}&\text{if $X\notin\{Y,\overline{Y}\}$}
	\\
	\subst[(\sigma{Y}.{\varphi})]{X}{\psi} &\equiv& 
	\sigma{Y}.{\varphi}&\text{if $X\in\{Y,\overline{Y}\}$}
\end{IEEEeqnarray*}
where $\sigma\in \{\mu,\nu\}$.

Note the following technical facts about substitution.
\begin{lemma}\label{lem:negationsubstitution}
	For \LmuV-formulas $\varphi,\psi$:
	\[\overline{\subst[\varphi]{X}{\psi}}\equiv \subst[\overline{\varphi}]{X}{\psi}\]
\end{lemma}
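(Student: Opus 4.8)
The plan is to prove the identity by structural induction on $\varphi$, playing the defining clauses of the negation abbreviation $\overline{\cdot}$ off against those of substitution from \rref{app:substitutionpvars}. Throughout, $\equiv$ denotes syntactic identity of formulas modulo the standing identification $\overline{\overline{Y}}\equiv Y$, and the induction is carried out simultaneously for all \pvariables $X$ and all \LmuV-formulas $\psi$.

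The base cases are immediate unwindings of the two recursions. For $\stdlit\in\FOLL_{\Lcf}$ both sides equal $\overline{\stdlit}$, since $\overline{\stdlit}$ is again a first-order literal and substitution leaves members of $\FOLL_{\Lcf}$ untouched, while $\subst[\stdlit]{X}{\psi}\equiv\stdlit$. For a \pvariable $Y\notin\{X,\overline{X}\}$ both sides are $\overline{Y}$, because $\overline{Y}\notin\{X,\overline{X}\}$ as well. For $Y\equiv X$ we get $\overline{\subst[X]{X}{\psi}}\equiv\overline{\psi}$ and $\subst[\overline{X}]{X}{\psi}\equiv\overline{\psi}$ by the clause for substitution into $\overline{X}$. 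For $Y\equiv\overline{X}$ the same computation yields $\psi$ on both sides, using $\overline{\overline{\psi}}\equiv\psi$ and $\subst[\overline{\overline{X}}]{X}{\psi}\equiv\subst[X]{X}{\psi}\equiv\psi$.

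For the inductive step, the propositional and modal cases follow by pushing $\overline{\cdot}$ through the appropriate De Morgan clause, applying the induction hypothesis to the immediate subformulas, and reassembling with the corresponding substitution clause; for instance $\overline{\subst[(\varphi_1\lor\varphi_2)]{X}{\psi}}\equiv\overline{\subst[\varphi_1]{X}{\psi}}\land\overline{\subst[\varphi_2]{X}{\psi}}\equiv\subst[\overline{\varphi_1}]{X}{\psi}\land\subst[\overline{\varphi_2}]{X}{\psi}\equiv\subst[(\overline{\varphi_1}\land\overline{\varphi_2})]{X}{\psi}\equiv\subst[\overline{\varphi_1\lor\varphi_2}]{X}{\psi}$, and analogously for $\land$, $\ddiamond{\stdtr}{\cdot}$ and $\dbox{\stdtr}{\cdot}$.

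The only case that needs care is a fixpoint formula $\lfp{Y}{\varphi}$ (the $\nu$-case being symmetric), where one splits on whether $X\in\{Y,\overline{Y}\}$. If $X\in\{Y,\overline{Y}\}$, substitution is the identity both on $\lfp{Y}{\varphi}$ and on $\overline{\lfp{Y}{\varphi}}\equiv\gfp{\overline{Y}}{\overline{\varphi}}$ — for the latter because $\{Y,\overline{Y}\}=\{\overline{Y},\overline{\overline{Y}}\}$ — so both sides equal $\gfp{\overline{Y}}{\overline{\varphi}}$. If $X\notin\{Y,\overline{Y}\}$, then on the left $\overline{\subst[(\lfp{Y}{\varphi})]{X}{\psi}}\equiv\overline{\lfp{Y}{\subst[\varphi]{X}{\psi}}}\equiv\gfp{\overline{Y}}{\overline{\subst[\varphi]{X}{\psi}}}$, which by the induction hypothesis equals $\gfp{\overline{Y}}{\subst[\overline{\varphi}]{X}{\psi}}$; on the right $\subst[\overline{\lfp{Y}{\varphi}}]{X}{\psi}\equiv\subst[(\gfp{\overline{Y}}{\overline{\varphi}})]{X}{\psi}\equiv\gfp{\overline{Y}}{\subst[\overline{\varphi}]{X}{\psi}}$, using $X\notin\{\overline{Y},\overline{\overline{Y}}\}$. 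The two agree. I expect this bookkeeping with the barred-variable identification to be the only place where a slip is easy; every other case is a mechanical match of the two recursive definitions.
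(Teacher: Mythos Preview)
Your proof is correct and follows exactly the approach the paper indicates: a straightforward structural induction on $\varphi$. The paper's own proof is simply ``By straightforward induction on the formula $\varphi$,'' and you have filled in precisely those details, including the only mildly delicate point (the barred-variable bookkeeping in the fixpoint case).
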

\begin{proof}
	By straightforward induction on the formula $\varphi$.
\end{proof}

\begin{lemma}\label{lem:doublesubstitution}
	For \LmuV-formulas $\varphi,\psi,\rho$ and $Y\in \{X,\overline{X}\}$
		$$(\varphi\tfrac{\psi}{X})\tfrac{\rho}{Y}\equiv\varphi\tfrac{\psi\tfrac{\rho}{Y}}{X}
	$$
	And if $X$ and $\overline{X}$ do not occur freely in $\rho$ and $Y\notin\{ X,\overline{X}\}$
	$$(\varphi\tfrac{\psi}{X})\tfrac{\rho}{Y}\equiv
		(\varphi\tfrac{\rho}{Y})\tfrac{\psi\tfrac{\rho}{Y}}{X}$$
\end{lemma}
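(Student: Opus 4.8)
The plan is to prove both identities by a single structural induction on $\varphi$, working under the standing convention (the same one that licenses renaming of bound \pvariables) that the \pvariables bound in $\varphi$, together with their bars, are distinct from the \pvariables occurring free in $\psi$ and $\rho$. The base cases $\varphi\in\FOLL_{\Lcf}$ and $\varphi\in\overline{\Vc}$ follow directly from the defining clauses of \pvariable substitution, with two wrinkles. When $\varphi=\overline{X}$, the left side becomes $\overline{\psi}\tfrac{\rho}{Y}$ while the right side becomes $\overline{\psi\tfrac{\rho}{Y}}$, and these are identified by \rref{lem:negationsubstitution}. For the second identity, when $\varphi=Z$ with $Z\in\{Y,\overline{Y}\}$, one additionally uses the elementary fact that substituting for a \pvariable that does not occur free has no effect, which applies because $X$ and $\overline{X}$ do not occur free in $\rho$. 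The connective cases $\varphi_1\lor\varphi_2$, $\varphi_1\land\varphi_2$ and the modal cases $\ddiamond{\stdtr}{\varphi_0}$, $\dbox{\stdtr}{\varphi_0}$ are immediate from the inductive hypothesis, since all substitutions in sight commute with $\lor$, $\land$, $\ddiamond{\stdtr}{\cdot}$ and $\dbox{\stdtr}{\cdot}$.

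The only substantive case is the fixpoint case $\varphi=\sigma Z.\varphi_0$ with $\sigma\in\{\mu,\nu\}$, treated by a case split on the position of $Z$ relative to $X$ and $Y$. For the first identity recall $Y\in\{X,\overline{X}\}$: if $X\in\{Z,\overline{Z}\}$ then $Y\in\{Z,\overline{Z}\}$ as well, so all substitutions act vacuously on the $\sigma Z$-formula and both sides equal $\varphi$; otherwise $X,Y\notin\{Z,\overline{Z}\}$, every substitution passes under the binder, and the claim is exactly the inductive hypothesis on $\varphi_0$. For the second identity there is one more subcase: if $Y\in\{Z,\overline{Z}\}$ but $X\notin\{Z,\overline{Z}\}$, the outer $\tfrac{\rho}{Y}$ is vacuous on each $\sigma Z$-formula, so the identity collapses to $\varphi_0\tfrac{\psi}{X}\equiv\varphi_0\tfrac{\psi\tfrac{\rho}{Y}}{X}$, which holds because the variable convention forces $Y$ and $\overline{Y}$ not to occur free in $\psi$, whence $\psi\tfrac{\rho}{Y}\equiv\psi$; the remaining subcase $X,Y\notin\{Z,\overline{Z}\}$ reduces to the inductive hypothesis on $\varphi_0$.

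I expect the main obstacle to be purely bookkeeping: keeping the three \pvariables $X$, $Y$, $Z$ and the identification $\overline{\overline{X}}=X$ straight through the case splits, and being explicit about which freshness assumptions are in force so that these purely syntactic identities are not spoiled by variable capture inside the fixpoint binders. Once the conventions are fixed, every case is a one- or two-line unfolding of the substitution clauses together with the inductive hypothesis and \rref{lem:negationsubstitution}.
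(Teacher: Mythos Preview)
Your proposal is correct and follows the same structural-induction route as the paper, invoking \rref{lem:negationsubstitution} for the $\varphi\equiv\overline{X}$ base case and the hypothesis that $X,\overline{X}$ are not free in $\rho$ for the $\varphi\in\{Y,\overline{Y}\}$ base cases of the second identity. You are more explicit than the paper about the fixpoint subcase where the binder coincides with $Y$, and you are right that a freshness convention (or equivalently a ``free for'' side condition) is needed there---without it the second identity can fail by capture; the paper's terse proof glosses over this, but every use of the lemma in the paper is under hypotheses that guarantee it.
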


\begin{proof}
	The first two identities are by a straightforward induction
	on $\varphi$,
	using \rref{lem:negationsubstitution} for the case where $\varphi$ is $\overline{X}$.

	The third identity is also by a simple induction on $\varphi$.
	The assumption that $X$ and $\overline{X}$ do not occur freely in $\rho$
	are needed for the cases where $\varphi$ is $Y$ or $\overline{Y}$.
\end{proof}

\Pvariable $X$ is \emph{free for $\varphi$ in $\psi$} iff in $\psi$ the variable $X$
does not occur in the scope of a fixpoint operator ($\mu,\nu$) binding a free \pvariable of $\varphi$.

\begin{lemma} \label{lem:subsitutesemantically}
	For \LmuV-formulas $\varphi,\psi$ \assignments $\stdasst$ and $X\in \overline{\Vc}$.
	If $X$ is free for $\varphi$ in $\psi$ then
		$$\stdasst\envelopemu{\varphi\subst{X}{\psi}} = \modif{\stdasst}{X}{\stdasst\envelopemu{\psi}}\envelopemu{\varphi} $$
\end{lemma}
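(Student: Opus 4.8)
The plan is to prove \rref{lem:subsitutesemantically} by structural induction on $\varphi$, simultaneously for all \assignments $\stdasst$, with the side condition that $X$ is free for $\varphi$ in $\psi$ kept as a standing hypothesis throughout. The base cases are immediate: for $\stdlit \in \FOLL_{\Lcf}$ neither side depends on $\stdasst(X)$; for a \pvariable $Y \notin \{X,\overline{X}\}$ both sides equal $\stdasst(Y)$; for $Y = X$ the left side is $\stdasst\envelopemu{\psi}$ by definition of substitution and the right side is $\modif{\stdasst}{X}{\stdasst\envelopemu{\psi}}(X) = \stdasst\envelopemu{\psi}$; and for $Y = \overline{X}$ the left side is $\stdasst\envelopemu{\overline{\psi}} = \Sc \setminus \stdasst\envelopemu{\psi}$ by \rref{prop:negation}, while the right side is $\modif{\stdasst}{X}{\stdasst\envelopemu{\psi}}(\overline{X}) = \Sc \setminus \stdasst\envelopemu{\psi}$ by the convention extending \assignments to $\overline\Vc$.

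The propositional and modal cases ($\varphi_1 \lor \varphi_2$, $\varphi_1 \land \varphi_2$, $\ddiamond{\stdtr}{\varphi_0}$, $\dbox{\stdtr}{\varphi_0}$) follow by unfolding \rref{def:Lmu-semantics}, pushing the substitution through using its recursive definition (\rref{app:substitutionpvars}), and applying the induction hypothesis to the immediate subformulas, noting that if $X$ is free for $\varphi$ in $\psi$ then it is free for each subformula of $\varphi$ in $\psi$. The interesting cases are the fixpoint formulas $\sigma Y. \varphi_0$ with $\sigma \in \{\mu,\nu\}$; I treat $\lfp{Y}{\varphi_0}$, the other being dual. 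If $Y \in \{X, \overline{X}\}$, then $\subst[(\lfp{Y}{\varphi_0})]{X}{\psi} \equiv \lfp{Y}{\varphi_0}$ and $X$ does not occur free in $\lfp{Y}{\varphi_0}$, so both sides are independent of $\stdasst(X)$ and equal. If $Y \notin \{X,\overline{X}\}$, then $\subst[(\lfp{Y}{\varphi_0})]{X}{\psi} \equiv \lfp{Y}{(\subst[\varphi_0]{X}{\psi})}$, and I compute, writing $E = \stdasst\envelopemu{\psi}$,
\begin{IEEEeqnarray*}{rCl}
	\stdasst\envelopemu{\lfp{Y}{\subst[\varphi_0]{X}{\psi}}}
	&=& \capfold\{D\subseteq\Sc : \stdasst\tfrac{D}{Y}\envelopemu{\subst[\varphi_0]{X}{\psi}} \subseteq D\}\\
	&=& \capfold\{D\subseteq\Sc : \modif{\stdasst}{Y}{D}\tfrac{\modif{\stdasst}{Y}{D}\envelopemu{\psi}}{X}\envelopemu{\varphi_0} \subseteq D\}\\
	&=& \capfold\{D\subseteq\Sc : \modif{\stdasst}{X}{E}\tfrac{D}{Y}\envelopemu{\varphi_0} \subseteq D\}\\
	&=& \modif{\stdasst}{X}{E}\envelopemu{\lfp{Y}{\varphi_0}}.
\end{IEEEeqnarray*}
The first and last equalities are \rref{def:Lmu-semantics}(7); the second is the induction hypothesis applied to $\varphi_0$ with the \assignment $\modif{\stdasst}{Y}{D}$ (legitimate since $X$ free for $\varphi_0$ in $\psi$ follows from $X$ free for $\lfp{Y}{\varphi_0}$ in $\psi$); the third is where the side condition does the real work.

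The main obstacle is precisely that third equality: it requires $\modif{\stdasst}{Y}{D}\envelopemu{\psi} = \stdasst\envelopemu{\psi} = E$, i.e.\ that modifying $\stdasst$ at $Y$ does not change the semantics of $\psi$, and it requires $\modif{\stdasst}{Y}{D}\tfrac{E}{X} = \modif{\stdasst}{X}{E}\tfrac{D}{Y}$ as \assignments. The first holds exactly because $X$ is free for $\lfp{Y}{\varphi_0}$ in $\psi$ forces $Y$ (and $\overline{Y}$) to not be free in $\psi$ — this is the content of the ``free for'' side condition and must be invoked carefully, using that $Y$ binds inside $\lfp{Y}{\varphi_0}$ and any free occurrence of $Y$ in $\psi$ would, after substitution, fall in the scope of that binder. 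The second is a routine commutation of modifications at distinct variables $X \neq Y$, $X \neq \overline{Y}$, using the definition of the modified \assignment (including the $\overline{X}$-clause). I would state and discharge these two facts as small sublemmas before assembling the fixpoint case, and handle $\gfp{Y}{\varphi_0}$ by the identical argument with $\capfold$ replaced by $\cupfold$ and $\subseteq$ reversed.
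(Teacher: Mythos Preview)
Your proof is correct and follows essentially the same approach as the paper's: structural induction on $\varphi$, with the fixpoint case $\lfp{Y}{\varphi_0}$ (for $Y\notin\{X,\overline X\}$) handled by the same chain of equalities, invoking the ``free for'' condition to conclude that $Y$ is not free in $\psi$ so that $\modif{\stdasst}{Y}{D}\envelopemu{\psi}=\stdasst\envelopemu{\psi}$. Your write-up is more explicit than the paper's (you spell out the base cases, the $Y\in\{X,\overline X\}$ subcase, and the commutation $\modif{\stdasst}{Y}{D}\tfrac{E}{X}=\modif{\stdasst}{X}{E}\tfrac{D}{Y}$, which the paper uses silently), but the argument is the same.
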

\begin{proof}
	By a straightforward induction on the formula $\varphi$.
	The interesting case is for fixpoint formulas $\lfp{Y}{\varphi}$ with $Y\notin \{X,\overline X\}$.
	\begin{IEEEeqnarray*}{+rCl+rx*}
		\stdasst\envelopemu{(\lfp{Y}{\varphi})\subst{X}{\psi}}
		&=&
		\capfold\{Z :\stdasst\tfrac{Z}{Y}\envelopemu{\varphi\tfrac{\psi}{X}} \subseteq Z\}
		\\
		&=&
		\capfold\{Z :\stdasst\tfrac{Z}{Y}\tfrac{\stdasst\frac{Z}{Y}\envelopemu{\psi}}{X}\envelopemu{\varphi} \subseteq Z\}
		\\
		&=&
		\capfold\{Z :\stdasst\tfrac{Z}{Y}\tfrac{\stdasst\envelopemu{\psi}}{X}\envelopemu{\varphi} \subseteq Z\} \\
		&=&  \modif{\stdasst}{X}{\stdasst\envelopemu{\psi}}\envelopemu{\varphi}
	\end{IEEEeqnarray*}
	For the third equality observe that if $X$ occurs in $\varphi$, then $Y$ can not be free in $\psi$, since
	$X$ is free for $\psi$ in $\varphi$.
\end{proof}

\subsection{Renaming of \OVariables}
\label{app:substitutionovars}

\newcommand{\transposefmla}[3]{{#1}\sfrac{#3}{#2}}

Let $\varphi$ be an \LmuV-formula.
Define the formula $\transposefmla{\varphi}{x}{y}$ obtained
from $\varphi$ by 
renaming the \ovariable $x$ to $y$ and vice versa.
For $\stdlit\in\FOLL_{\Lcf}$ this is as in first-order logic.
For a \pvariable $X$ let $\transposefmla{X}{x}{y}\equiv X$.
Extend the definition recursively to \LmuV by:
\begin{IEEEeqnarray*}{rClCrCl}
	\transposefmla{(\varphi_1\lor\varphi_2)}{x}{y} &\equiv& 
	\transposefmla{\varphi_1}{x}{y} \lor \transposefmla{\varphi_2}{x}{y}
	\\
	\transposefmla{(\varphi_1\land\varphi_2)}{x}{y} &\equiv& 
	\transposefmla{\varphi_1}{x}{y} \land \transposefmla{\varphi_2}{x}{y}
	\\
	\transposefmla{(\ddiamond{\stdtr}{\varphi})}{x}{y} &\equiv& \ddiamond{\transposeact{\stdtr}{x}{y}}{\transposefmla{\varphi}{x}{y}} 
	\\
	\transposefmla{(\dbox{\stdtr}{\varphi})}{x}{y} &\equiv& \dbox{\transposeact{\stdtr}{x}{y}}{\transposefmla{\varphi}{x}{y}} 
	\\
	\transposefmla{(\sigma{X}.{\varphi})}{x}{y} &\equiv& 
	\sigma{X}.{(\transposefmla{\varphi}{x}{y})}
\end{IEEEeqnarray*}
where $\sigma\in \{\mu,\nu\}$.
Finally define $\transposefml{\varphi}{x}{y}$ to be the formula
obtained from $\transposefmla{\varphi}{x}{y}$ by replacing all \emph{free} occurrence of \pvariables $Z$ by $\transposepv{Z}{x}{y}$.
(Note that $\subst[(\transposepv{\varphi}{x}{y})]{X}{\psi} = 
\transposeterm{(\varphi\tfrac{\psi}{X})}{x}{y}$.)

For any set $D\subseteq\Sc$ let $$\transposefml{D}{x}{y}
=\{\transposestate{\stdstate}{x}{y} : \stdstate \in D\}$$

\begin{lemma}\label{lem:transposesubst}
	For any \LmuV-formula $\psi$
		\[\stdasst\envelopemu{\transposefml{\psi}{x}{y}} = \transposefml{\stdasst\envelopemu{\psi}}{x}{y}.\]
\end{lemma}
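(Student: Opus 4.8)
The plan is to prove \rref{lem:transposesubst} by structural induction on the \LmuV-formula $\psi$, simultaneously for all \assignments $\stdasst$, exactly mirroring the recursive definition of $\transposefml{\psi}{x}{y}$ given in \rref{app:substitutionovars}. The key observation driving every case is that the renaming operation $\stdstate \mapsto \transposestate{\stdstate}{x}{y}$ on states is an involutive bijection of $\Sc$, so it commutes with unions, intersections, and complements, and in particular $D \mapsto \transposefml{D}{x}{y}$ is a Boolean-algebra automorphism of $\Pc(\Sc)$. I would first record this as a small remark, together with the compatibility facts already assumed in \rref{sec:assignmentstructures}: that $\Act$ is closed under renaming with $\structA\envelope{\transposeact{\stdtr}{x}{y}} = \{(\transposestate{\stdstate}{x}{y},\transposestate{\newstate}{x}{y}) : (\stdstate,\newstate)\in\structA\envelope{\stdtr}\}$, and that renamed \pvariables satisfy $\stdasst(\transposepv{X}{x}{y}) = \transposefml{\stdasst(X)}{x}{y}$ by definition.

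For the base cases: if $\psi$ is a first-order literal $\stdlit\in\FOLL_{\Lcf}$, the statement reduces to the corresponding fact for first-order logic, namely $\structA,\transposestate{\stdstate}{x}{y}\models\stdlit$ iff $\structA,\stdstate\models\transposefmla{\stdlit}{x}{y}$, which is standard. If $\psi$ is a free \pvariable $Z\in\overline{\Vc}$, then $\transposefml{Z}{x}{y} = \transposepv{Z}{x}{y}$ and $\stdasst\envelopemu{\transposepv{Z}{x}{y}} = \stdasst(\transposepv{Z}{x}{y}) = \transposefml{\stdasst(Z)}{x}{y} = \transposefml{\stdasst\envelopemu{Z}}{x}{y}$ directly from the assumptions. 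The propositional cases $\varphi_1\lor\varphi_2$ and $\varphi_1\land\varphi_2$ follow from the induction hypothesis and the fact that $\transposefml{\cdot}{x}{y}$ commutes with $\cup$ and $\cap$. The modal cases $\ddiamond{\stdtr}{\varphi}$ and $\dbox{\stdtr}{\varphi}$ follow by unfolding \rref{def:Lmu-semantics}, using the renaming law for $\structA\envelope{\stdtr}$: a state $\stdstate$ has an $\transposeact{\stdtr}{x}{y}$-successor in $\stdasst\envelopemu{\transposefml{\varphi}{x}{y}}$ iff $\transposestate{\stdstate}{x}{y}$ has an $\stdtr$-successor in $\stdasst\envelopemu{\varphi}$, by the induction hypothesis and the bijection $\transposestate{\cdot}{x}{y}$ on successors.

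The main obstacle is the fixpoint case, say $\psi \equiv \lfp{X}{\varphi}$ (the $\gfp$ case is dual). Here one must be careful with the interaction between the renaming of \emph{object} variables and the modified \assignment $\stdasst\tfrac{D}{X}$ appearing in the fixpoint semantics. The definition gives $\transposefmla{(\lfp{X}{\varphi})}{x}{y} \equiv \lfp{X}{\transposefmla{\varphi}{x}{y}}$, and one passes to $\transposefml{\cdot}{x}{y}$ afterwards; the crucial auxiliary fact is that for any $D\subseteq\Sc$, renaming and the $\frac{D}{X}$-modification interact as $\transposefml{\big(\stdasst\tfrac{\transposefml{D}{x}{y}}{X}\big)}{} \rightsquigarrow \stdasst\tfrac{D}{X}$ in the appropriate sense — more precisely that $\stdstate\in\stdasst\tfrac{D}{X}\envelopemu{\transposefml{\varphi}{x}{y}}$ iff $\transposestate{\stdstate}{x}{y}\in\stdasst\tfrac{\transposefml{D}{x}{y}}{X}\envelopemu{\varphi}$. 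This is precisely the induction hypothesis applied with the \assignment $\stdasst$ replaced by $\stdasst\tfrac{\transposefml{D}{x}{y}}{X}$, once one checks that $\big(\stdasst\tfrac{\transposefml{D}{x}{y}}{X}\big)(Z)$ renames back to $\big(\stdasst\tfrac{D}{X}\big)(Z)$ for every $Z$ — which holds because the renaming acts only on object variables and leaves the \pvariable slot $X$ to receive $\transposefml{D}{x}{y}$, whose image under $\transposestate{\cdot}{x}{y}$ is $D$ by involutivity. Given this, I would compute
\begin{align*}
	\stdasst\envelopemu{\transposefml{\lfp{X}{\varphi}}{x}{y}}
	&= \capfold\{E\subseteq\Sc : \stdasst\tfrac{E}{X}\envelopemu{\transposefml{\varphi}{x}{y}}\subseteq E\}\\
	&= \capfold\{E\subseteq\Sc : \transposefml{\big(\stdasst\tfrac{\transposefml{E}{x}{y}}{X}\envelopemu{\varphi}\big)}{x}{y}\subseteq E\}\\
	&= \capfold\{\transposefml{D}{x}{y} : D\subseteq\Sc,\ \stdasst\tfrac{D}{X}\envelopemu{\varphi}\subseteq D\}\\
	&= \transposefml{\Big(\capfold\{D\subseteq\Sc : \stdasst\tfrac{D}{X}\envelopemu{\varphi}\subseteq D\}\Big)}{x}{y}
	= \transposefml{\stdasst\envelopemu{\lfp{X}{\varphi}}}{x}{y},
\end{align*}
where the third line substitutes $E = \transposefml{D}{x}{y}$ (a bijective reindexing of the intersection, since $\transposefml{\cdot}{x}{y}$ is a bijection on $\Pc(\Sc)$) and uses that $\transposefml{A}{x}{y}\subseteq\transposefml{D}{x}{y}$ iff $A\subseteq D$, and the fourth line uses that $\transposefml{\cdot}{x}{y}$ preserves arbitrary intersections. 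The handling of renamed \pvariables occurring free inside $\varphi$ (the final step of replacing free $Z$ by $\transposepv{Z}{x}{y}$ in the definition of $\transposefml{\varphi}{x}{y}$) is uniform across all cases and needs only the remark that renamed \pvariables are never bound, so they behave like literals throughout the induction.
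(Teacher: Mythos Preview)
Your overall strategy—structural induction on $\psi$ and a direct manipulation of the Knaster--Tarski intersection in the fixpoint case—is sound in principle and close to the paper's argument. But there is a genuine gap in the fixpoint case. The first line of your displayed computation is incorrect: the body of $\transposefml{\lfp{X}{\varphi}}{x}{y}$ is \emph{not} $\transposefml{\varphi}{x}{y}$. By definition $\transposefml{\cdot}{x}{y}$ replaces only \emph{free} propositional variables by their renamed copies, so the bound $X$ is left alone; the actual body is $\transposefmla{\varphi}{x}{y}$ with each free $Z\neq X$ replaced by $\transposepv{Z}{x}{y}$ but $X$ kept as $X$. Your ``crucial auxiliary fact'' is accordingly false as stated: taking $\varphi\equiv X$ one has $\stdasst\tfrac{D}{X}\envelopemu{\transposefml{X}{x}{y}}=(\stdasst\tfrac{D}{X})(\transposepv{X}{x}{y})=\transposefml{D}{x}{y}$, so the left side reads $\stdstate\in\transposefml{D}{x}{y}$; the right side $\transposestate{\stdstate}{x}{y}\in\stdasst\tfrac{\transposefml{D}{x}{y}}{X}\envelopemu{X}=\transposefml{D}{x}{y}$ reads $\stdstate\in D$. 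These disagree unless $D=\transposefml{D}{x}{y}$.

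The paper avoids this trap by separating the two kinds of renaming. It introduces the renamed valuation $\transposepv{\stdasst}{x}{y}$ with $\transposepv{\stdasst}{x}{y}(Z)=\transposefml{\stdasst(Z)}{x}{y}$, observes once and for all that $\stdasst\envelopemu{\transposefml{\psi}{x}{y}}=\transposepv{\stdasst}{x}{y}\envelopemu{\transposefmla{\psi}{x}{y}}$, and then runs the induction on $\transposefmla{\psi}{x}{y}$, which does not touch propositional variables—so the body of $\transposefmla{\lfp{X}{\varphi}}{x}{y}$ is honestly $\transposefmla{\varphi}{x}{y}$ and the induction hypothesis applies cleanly after noting $\transposepv{\stdasst}{x}{y}\tfrac{D}{X}=\transposepv{\big(\stdasst\tfrac{\transposefml{D}{x}{y}}{X}\big)}{x}{y}$. (The paper then finishes via transfinite iterates of the two fixpoint maps rather than your pre-fixpoint substitution $E=\transposefml{D}{x}{y}$, but either closing argument works once the body is correctly identified.) Your proof is repairable along the same lines: either factor through $\transposefmla{\cdot}{x}{y}$ as the paper does, or keep your formulation but explicitly identify the body and justify the passage to your second displayed line as a separate observation about valuations, not as an instance of the induction hypothesis.
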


\begin{proof}
	Let $\transposepv{\stdasst}{x}{y}(Z) = \transposepv{\stdasst(Z)}{x}{y}$ for all \pvariables $Z$.
	Observe that $\stdasst\envelopemu{\transposefml{\psi}{x}{y}}
	=\transposepv{\stdasst}{x}{y}\envelopemu{\transposefmla{\psi}{x}{y}}$
	by the definition of $\transposefml{\varphi}{x}{y}$ and the interpretation of the \pvariables $\transposepv{X}{x}{y}$,
	By a straightforward induction on the formula $\psi$
	prove that $\transposepv{\stdasst}{x}{y}\envelopemu{\transposefmla{\psi}{x}{y}}=\transposefml{\stdasst\envelopemu{\psi}}{x}{y}$.
	If $\psi$ is a first-order literal 
	or a \pvariable this is straightforward.

	\begin{inparaitem}[\noindent- Case:]
	\item
	$\psi$ is a diamond modality the form $\ddiamond{a}{\varphi}$.
	Then  by definition of $\transposeact{\stdtr}{x}{y}$:
	\begin{align*}
		&\stdstate \in 
		\transposepv{\stdasst}{x}{y} \envelopemu{\transposefml{(\ddiamond{\stdtr}{\varphi})}{x}{y}}
		=
		\transposepv{\stdasst}{x}{y} \envelopemu{\ddiamond{\transposeact{\stdtr}{x}{y}}{\varphi\tfrac{y}{x}}}
		\\
		\text{iff} \quad&
		\exists (\stdstate,\newstate)\in \structA\envelope{\transposeact{\stdtr}{x}{y}} \;\newstate\in\transposepv{\stdasst}{x}{y}\envelopemu{\varphi\tfrac{y}{x}}
		\\
		\text{iff} \quad&
		\exists 
		(\transposestate{\stdstate}{x}{y},\transposestate{\newstate}{x}{y}) \in \structA\envelope{\stdtr} \;\transposepv{\newstate}{x}{y}\in\stdasst\envelopemu{\varphi}
		\\
		\text{iff} \quad& 
		\transposestate{\stdstate}{x}{y} \in \stdasst
			\envelopemu{\ddiamond{\stdtr}{\varphi}}
	\end{align*}

	\item
	 $\psi$ is of the form $\lfp{Z}{\varphi}$.
	Let  $\Gamma_1(D) = \stdasst\tfrac{D}{Z} \envelope{\varphi}$
	and
	\[\Gamma_2(D) = \transposepv{\stdasst}{x}{y}\tfrac{D}{Z}\envelopemu{\transposefmla{\varphi}{x}{y}}=(\stdasst\tfrac{\transposepv{D}{x}{y}}{Z}\envelopemu{\varphi})_x^y.\]
	Define $E^i_\gamma$ for $i\in \{1,2\}$ and $\gamma$ any ordinal recursively by $E^i_0=\emptyset$ and $E^i_{\gamma+1}=\Gamma_i(E_{\gamma^i})$. At limit ordinals take unions.
	Then by the Knaster-Tarski fixpoint theorem
	\[\stdasst\envelopemu{\lfp{Z}{\varphi}}= \bigcup_{\gamma<\infty}E^1_\gamma
	\quad\text{and}\quad
		\transposepv{\stdasst}{x}{y}\envelopemu{\transposefmla{\lfp{Z}{\varphi}}{x}{y}} = \bigcup_{\gamma<\infty}E^2_\gamma
	\]
	By induction it is easy to see that $E_\gamma^2 = \transposepv{(E_\gamma^1)}{x}{y}$ for all $\gamma$.
	Hence 
		\[(\stdasst\envelopemu{\lfp{Z}{\varphi}})_x^y= \bigcup_{\gamma<\infty}(E^1_\gamma)_x^y = \transposepv{\stdasst}{x}{y}\envelopemu{\transposefmla{\lfp{Z}{\varphi}}{x}{y}}\]
	as required.
\end{inparaitem}
\end{proof}

Similarly for a \GL-formula $\varphi$ define $\transposefmla{\varphi}{x}{y}$. By a straightforward induction on $\varphi$ prove
\begin{lemma}\label{lem:substforreplgl}{}
	For \GL-formulas $\varphi$ and \GL-games $\gamma$:
	\begin{align*}
		\stdasst\envelopegl{\transposefml{\varphi}{x}{y}} 
		&= \transposefml{(\stdasst\envelopegl{\varphi})}{x}{y}\quad\text{and}\quad\\
	\stdasst\envelope{\transposefml{\gamma}{x}{y}}(D_x^y)& =(\stdasst\envelope{\gamma}(D))_x^y.
	\end{align*}
\end{lemma}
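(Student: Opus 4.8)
The plan is to prove both equalities by a single simultaneous structural induction on the \GL-formula $\varphi$ and the \GL-game $\gamma$, in exact parallel to the proof of \rref{lem:transposesubst}. As there, first reduce to the intermediate renaming $\transposefmla{\cdot}{x}{y}$: letting $\transposepv{\stdasst}{x}{y}$ be the \assignment with $\transposepv{\stdasst}{x}{y}(Z)=\transposefml{(\stdasst(Z))}{x}{y}$ for all \pvariables $Z$, the definition of $\transposefml{\varphi}{x}{y}$ together with the fixed interpretation $\stdasst(\transposepv{X}{x}{y})=\transposefml{(\stdasst(X))}{x}{y}$ yields $\stdasst\envelopegl{\transposefml{\varphi}{x}{y}}=\transposepv{\stdasst}{x}{y}\envelopegl{\transposefmla{\varphi}{x}{y}}$ and $\stdasst\envelope{\transposefml{\gamma}{x}{y}}^{D}=\transposepv{\stdasst}{x}{y}\envelope{\transposefmla{\gamma}{x}{y}}^{D}$, so it suffices to show $\transposepv{\stdasst}{x}{y}\envelopegl{\transposefmla{\varphi}{x}{y}}=\transposefml{(\stdasst\envelopegl{\varphi})}{x}{y}$ and $\transposepv{\stdasst}{x}{y}\envelope{\transposefmla{\gamma}{x}{y}}^{\transposefml{D}{x}{y}}=\transposefml{(\stdasst\envelope{\gamma}^{D})}{x}{y}$. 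Throughout I would use that $\stdstate\mapsto\transposestate{\stdstate}{x}{y}$ is an involutive bijection of $\Sc$, so $D\mapsto\transposefml{D}{x}{y}$ is an involutive bijection of $\Pc(\Sc)$ commuting with arbitrary unions, arbitrary intersections, and complement $\Sc\setminus(\cdot)$.

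The base and propositional cases are routine. For $\stdlit\in\FOLL_{\Lcf}$ it is the first-order renaming fact; for a \pvariable $X$ it is the definition of $\transposepv{X}{x}{y}$ and of $\transposepv{\stdasst}{x}{y}$; for $\lnot\varphi$ and $\varphi_1\lor\varphi_2$ it follows from the induction hypothesis and the commutation of $\transposefml{(\cdot)}{x}{y}$ with complement and union; and for $\ddiamond{\gamma}{\varphi}$ from both induction hypotheses via $\stdasst\envelopegl{\ddiamond{\gamma}{\varphi}}=\stdasst\envelope{\gamma}(\stdasst\envelopegl{\varphi})$. For games: atomic transitions are handled exactly as the diamond-modality case of \rref{lem:transposesubst}, using the defining property of $\transposeact{\stdtr}{x}{y}$, that $(\transposestate{\stdstate}{x}{y},\transposestate{\newstate}{x}{y})$ ranges over $\structA\envelope{\stdtr}$ precisely as $(\stdstate,\newstate)$ ranges over $\structA\envelope{\transposeact{\stdtr}{x}{y}}$; tests use $\stdasst\envelope{\ptest{\varphi}}^{D}=\stdasst\envelopegl{\varphi}\cap D$ and the formula hypothesis; choice uses union; and $\gamma_1;\gamma_2$ unravels $\stdasst\envelope{\gamma_1;\gamma_2}^{D}=\stdasst\envelope{\gamma_1}(\stdasst\envelope{\gamma_2}^{D})$ and applies the hypotheses for $\gamma_2$ then $\gamma_1$.

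The two cases that need care are $\pdual{\gamma}$ and $\prepeat{\gamma}$. For the dual game $\stdasst\envelope{\pdual{\gamma}}^{D}=\Sc\setminus\stdasst\envelope{\gamma}^{\Sc\setminus D}$, so applying the hypothesis for $\gamma$ to the set $\Sc\setminus D$ and using that renaming commutes with complement gives the claim at once. For repetition $\stdasst\envelope{\prepeat{\gamma}}^{D}=\capfold\{Z\subseteq\Sc : D\cup\stdasst\envelope{\gamma}^{Z}\subseteq Z\}$, and I would argue as in the least-fixpoint case of \rref{lem:transposesubst}: the bijection $Z\mapsto\transposefml{Z}{x}{y}$ maps the set of pre-fixpoints of $Z\mapsto\transposefml{D}{x}{y}\cup\transposepv{\stdasst}{x}{y}\envelope{\transposefmla{\gamma}{x}{y}}^{Z}$ onto the set of pre-fixpoints of $Z\mapsto D\cup\stdasst\envelope{\gamma}^{Z}$, since the induction hypothesis gives $\transposepv{\stdasst}{x}{y}\envelope{\transposefmla{\gamma}{x}{y}}^{\transposefml{Z}{x}{y}}=\transposefml{(\stdasst\envelope{\gamma}^{Z})}{x}{y}$ and renaming commutes with union and preserves inclusions; hence the two intersections correspond under renaming. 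Equivalently one runs the Knaster--Tarski transfinite approximations $E^1_\alpha,E^2_\alpha$ of the two monotone (by \rref{lem:monotone}) maps and proves $E^2_\alpha=\transposefml{(E^1_\alpha)}{x}{y}$ by induction on the ordinal $\alpha$. Making this fixpoint argument precise is the only real obstacle; all remaining cases are bookkeeping.
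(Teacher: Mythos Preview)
Your proposal is correct and follows essentially the same approach as the paper: the paper's proof is merely ``By a straightforward induction on the definition of \GL-formulas and games. For $\gamma$ of the form $\prepeat{\beta}$ use the Knaster--Tarski fixpoint theorem,'' and your write-up simply fills in the details of that induction, offering both the pre-fixpoint-bijection and the transfinite-approximation argument for the repetition case (the latter being exactly what the paper uses in the analogous least-fixpoint case of \rref{lem:transposesubst}).
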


\begin{proof}
	By a straightforward induction on the definition of \GL-formulas and games.
	For $\gamma$ of the form $\prepeat{\beta}$ use the Knaster-Tarski 
	fixpoint theorem.
\end{proof}

\section{Auxiliary Results}
\label{app:auxax}

\subsection{Derived Axioms and Proof Rules}

Some of the syntactic proofs of \rref{sec:relatingcalculi}
require additional axioms that can be derived.
Those are introduced and collected here.
First for \GL. The proofs are straightforward.

\begin{lemma}\label{lem:replaceinloop}
	Suppose $\gamma$ is a \GL game such that for all \GL-formulas $\psi$
	 $$\infers[\GLcalc]\ddiamond{\gamma_1}{\psi}\lbisubjunct\ddiamond{\gamma_2}{\psi}.$$
	Then for all \GL-formulas $\rho$:
	$$\infers[\GLcalc]\ddiamond{\gamma_1^*}{\rho}\lbisubjunct\ddiamond{\gamma_2^*}{\rho}.$$
\end{lemma}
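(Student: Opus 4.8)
## Proof Plan for Lemma~\ref{lem:replaceinloop}

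The plan is to prove the equivalence $\infers[\GLcalc]\ddiamond{\gamma_1^*}{\rho}\lbisubjunct\ddiamond{\gamma_2^*}{\rho}$ by establishing the two implications separately, each by a single application of the fixpoint rule \irref{fixpoint} together with the fixpoint axiom \irref{fixpointaxiom} and the hypothesis. By symmetry of the hypothesis in $\gamma_1$ and $\gamma_2$, it suffices to prove one direction, say $\infers[\GLcalc]\ddiamond{\gamma_1^*}{\rho}\limply\ddiamond{\gamma_2^*}{\rho}$; the other follows by swapping the roles of $\gamma_1$ and $\gamma_2$.

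First I would instantiate the fixpoint axiom \irref{fixpointaxiom} for $\gamma_2^*$ with postcondition $\rho$, giving $\infers[\GLcalc]\ddiamond{\gamma_2^*}{\rho}\lbisubjunct(\rho\lor\ddiamond{\gamma_2}{\ddiamond{\gamma_2^*}{\rho}})$. From the right-to-left direction of this equivalence and propositional reasoning one obtains both $\infers[\GLcalc]\rho\limply\ddiamond{\gamma_2^*}{\rho}$ and $\infers[\GLcalc]\ddiamond{\gamma_2}{\ddiamond{\gamma_2^*}{\rho}}\limply\ddiamond{\gamma_2^*}{\rho}$. Next I would invoke the hypothesis with $\psi\equiv\ddiamond{\gamma_2^*}{\rho}$, which yields $\infers[\GLcalc]\ddiamond{\gamma_1}{\ddiamond{\gamma_2^*}{\rho}}\lbisubjunct\ddiamond{\gamma_2}{\ddiamond{\gamma_2^*}{\rho}}$. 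Chaining these with \irref{modusPonens} and propositional logic gives the crucial intermediate fact
\[
\infers[\GLcalc]\big(\rho\lor\ddiamond{\gamma_1}{\ddiamond{\gamma_2^*}{\rho}}\big)\limply\ddiamond{\gamma_2^*}{\rho}.
\]

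With this in hand, the rule \irref{fixpoint}, taking $\gamma\equiv\gamma_1$, $\psi\equiv\rho$ and $\varphi\equiv\ddiamond{\gamma_2^*}{\rho}$, has exactly the displayed formula as its premise, so its conclusion $\infers[\GLcalc]\ddiamond{\gamma_1^*}{\rho}\limply\ddiamond{\gamma_2^*}{\rho}$ is derivable. Running the same argument with $\gamma_1$ and $\gamma_2$ interchanged (the hypothesis is symmetric) gives the reverse implication, and propositional combination of the two yields the biconditional.

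I do not anticipate a genuine obstacle here: the proof is a routine instance of the standard "congruence under loops'' argument. The only point requiring a little care is that the hypothesis is stated for all \GL-formulas $\psi$, so one must check that $\ddiamond{\gamma_2^*}{\rho}$ is indeed a legitimate \GL-formula to instantiate it at — which it is, since $\rho$ is a \GL-formula and \GL-formulas are closed under $\ddiamond{\gamma}{\cdot}$ for \GL-games $\gamma$. One also needs that $\gamma_1^*$ and $\gamma_2^*$ are genuine \GL-games, which holds as $\prepeat{\cdot}$ is a game constructor, so all the cited axioms and rules apply to them.
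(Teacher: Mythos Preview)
Your proposal is correct and follows essentially the same approach as the paper's proof: instantiate the hypothesis at $\psi\equiv\ddiamond{\gamma_2^*}{\rho}$, combine with the fixpoint axiom \irref{fixpointaxiom} to obtain the premise of rule \irref{fixpoint}, conclude one implication, and appeal to symmetry for the other.
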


\begin{proof}
	Consider some \GL-formula $\rho$.
	For the left to right direction observe that 
	the assumption used for $\psi\equiv \ddiamond{\gamma_2^*}{\rho}$
	implies that 
	$$\infers[\GLcalc]\ddiamond{\gamma_1}{\ddiamond{\gamma_2^*}{\rho}}\limply\ddiamond{\gamma_2}{\ddiamond{\gamma_2^*}{\rho}}.$$
	By \irref{fixpointaxiom} if follows that
	$$\infers[\GLcalc]\rho \lor\ddiamond{\gamma_1}{\ddiamond{\gamma_2^*}{\rho}}\limply\ddiamond{\gamma_2^*}{\rho}.$$
	Hence the forward direction derives by \irref{fixpoint}.
	The backward implication is symmetric.
\end{proof}

\begin{lemma}\label{lem:derivedaxiomsgl}
	The following axioms are derived in the \GL-calculus.
	\begin{center}
		\begin{calculuscollection}
			\begin{calculus}
				\dinferenceRule[dchoice|$\didia{\cap}$]{demons choice axiom}
				{
				\linferenceRule[impl]
				{\ddiamond{\gamma_1\cap\gamma_2}{\varphi}}
				{\ddiamond{\gamma_1}{\varphi}\land\ddiamond{\gamma_2}{\varphi}}
				}{}
				\dinferenceRule[dcomposition|$ \didia{{;^d}}$]{demons composition axiom}
				{
				\linferenceRule[impl]
				{\ddiamond{\pdual{(\gamma_1;\gamma_2)}}{\varphi}}
				{\ddiamond{\pdual{\gamma_1};\pdual{\gamma_2}}{\varphi}}
				}{}
			\end{calculus}
		\end{calculuscollection}
	\end{center}
\end{lemma}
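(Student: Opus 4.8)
The plan is to derive \irref{dchoice} and \irref{dcomposition} purely syntactically from the primitive game axioms \irref{choice}, \irref{composition}, \irref{demon}, the congruence rule \irref{monotonicity}, and propositional reasoning, by unfolding the abbreviation $\dchoice{\gamma_1}{\gamma_2}\equiv\pdual{(\pchoice{\pdual{\gamma_1}}{\pdual{\gamma_2}})}$ and pushing \irref{demon} through. First I would establish an auxiliary double-dual elimination, $\infers[\GLcalc]\ddiamond{\pdual{(\pdual{\gamma})}}{\rho}\lbisubjunct\ddiamond{\gamma}{\rho}$ for every \GL-game $\gamma$ and \GL-formula $\rho$: applying \irref{demon} to the outer dual gives $\ddiamond{\pdual{(\pdual{\gamma})}}{\rho}\lbisubjunct\lnot\ddiamond{\pdual{\gamma}}{\lnot\rho}$, applying \irref{demon} again gives $\ddiamond{\pdual{\gamma}}{\lnot\rho}\lbisubjunct\lnot\ddiamond{\gamma}{\lnot\lnot\rho}$, propositional double-negation reasoning yields $\ddiamond{\pdual{(\pdual{\gamma})}}{\rho}\lbisubjunct\ddiamond{\gamma}{\lnot\lnot\rho}$, and one application of \irref{monotonicity} to the tautology $\lnot\lnot\rho\lbisubjunct\rho$ (used in both directions) replaces $\lnot\lnot\rho$ by $\rho$ under $\ddiamond{\gamma}{\cdot}$.

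For \irref{dchoice}, unfolding the abbreviation makes $\ddiamond{\dchoice{\gamma_1}{\gamma_2}}{\varphi}$ equal to $\ddiamond{\pdual{(\pchoice{\pdual{\gamma_1}}{\pdual{\gamma_2}})}}{\varphi}$. I would then chain: \irref{demon} to obtain $\lnot\ddiamond{\pchoice{\pdual{\gamma_1}}{\pdual{\gamma_2}}}{\lnot\varphi}$; \irref{choice} to obtain $\lnot(\ddiamond{\pdual{\gamma_1}}{\lnot\varphi}\lor\ddiamond{\pdual{\gamma_2}}{\lnot\varphi})$; De~Morgan to obtain $\lnot\ddiamond{\pdual{\gamma_1}}{\lnot\varphi}\land\lnot\ddiamond{\pdual{\gamma_2}}{\lnot\varphi}$; \irref{demon} on each conjunct to obtain $\ddiamond{\pdual{(\pdual{\gamma_1})}}{\varphi}\land\ddiamond{\pdual{(\pdual{\gamma_2})}}{\varphi}$; and finally the auxiliary equivalence on each conjunct (and propositional monotonicity of $\land$) to obtain $\ddiamond{\gamma_1}{\varphi}\land\ddiamond{\gamma_2}{\varphi}$.

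For \irref{dcomposition}, I would reduce both sides to $\lnot\ddiamond{\gamma_1}{\ddiamond{\gamma_2}{\lnot\varphi}}$. On the left, \irref{demon} turns $\ddiamond{\pdual{(\gamma_1;\gamma_2)}}{\varphi}$ into $\lnot\ddiamond{\gamma_1;\gamma_2}{\lnot\varphi}$ and \irref{composition} turns this into $\lnot\ddiamond{\gamma_1}{\ddiamond{\gamma_2}{\lnot\varphi}}$. On the right, \irref{composition} turns $\ddiamond{\pdual{\gamma_1};\pdual{\gamma_2}}{\varphi}$ into $\ddiamond{\pdual{\gamma_1}}{\ddiamond{\pdual{\gamma_2}}{\varphi}}$, \irref{demon} on the outer dual into $\lnot\ddiamond{\gamma_1}{\lnot\ddiamond{\pdual{\gamma_2}}{\varphi}}$, \irref{demon} on the inner dual together with \irref{monotonicity} into $\lnot\ddiamond{\gamma_1}{\lnot\lnot\ddiamond{\gamma_2}{\lnot\varphi}}$, and a last application of \irref{monotonicity} with $\lnot\lnot\psi\lbisubjunct\psi$ collapses the double negation. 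Both sides now coincide, so the equivalence — and in particular the stated implication — is derived.

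The only genuine obstacle is bookkeeping: the \GL-calculus has no rule permitting replacement of propositionally-equivalent subformulas occurring inside a game modality, so every such replacement (the double-negation eliminations and the substitution of the \irref{demon}-unfolding of a nested $\pdual{}$) must be routed through the congruence rule \irref{monotonicity}, applied in both directions to yield biconditionals; everything else is a direct composition of the primitive game axioms.
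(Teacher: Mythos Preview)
Your proposal is correct and is exactly the intended straightforward derivation; the paper itself omits the proof entirely, simply stating ``The proofs are straightforward.'' Your explicit chain of \irref{demon}, \irref{choice}, \irref{composition}, \irref{monotonicity}, and propositional reasoning (including the auxiliary double-dual elimination) is the standard way to carry this out and contains no gaps.
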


Some useful derived rules for \Lmu are also introduced.

\begin{lemma}\label{lem:derivedaxiomsmu}
	The following proof rules are admissible in the \Lmucalc-calculus.
	\begin{center}
		\begin{calculuscollection}
			\begin{calculus}
				\dinferenceRule[fixpointmonotonicity|M$\mu$]{monotonicity of least fixpoint}{
					\linferenceRule[sequent]{
						\psi\limply \varphi
					}{
						\lfp{X}{\psi} \limply \lfp{X}{\varphi}
					}}{}
				\dinferenceRule[fixpointnumonotonicity|M$\nu$]{monotonicity of greatest fixpoint}{
					\linferenceRule[sequent]{
						\psi\limply \varphi
					}{
						\gfp{X}{\psi} \limply \gfp{X}{\varphi}
					}}{}
			\end{calculus}\quad
			\begin{calculus}
			\dinferenceRule[atomicboxmonotonicity|{[M$_\stdtr$]}]{monotonicity}{
				\linferenceRule[sequent]{
					\psi\limply \varphi
				}{
					\dbox{\stdtr}{\psi} \limply \dbox{\stdtr}{\varphi}
				}}{}
			\end{calculus}
		\end{calculuscollection}
	\end{center}
\end{lemma}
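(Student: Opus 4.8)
The plan is to derive the three rules one at a time, reducing each to machinery already established: the atomic monotonicity rule \irref{atomicmonotonicity}, the unfolding axiom \irref{muf}, the fixpoint rule \irref{muI}, admissibility of substitution over derivations (\rref{prop:substituteoverproof}), and propositional reasoning about the defined negation $\overline{\cdot}$ (semantically justified by \rref{prop:negation}, and available proof-theoretically because the propositional tautologies are closed under substitution and interpret bars as negation). The rule \irref{atomicboxmonotonicity} comes first and is essentially immediate by duality: given $\infers[\Lmucalc]\psi\limply\varphi$, contrapose propositionally to get $\infers[\Lmucalc]\overline{\varphi}\limply\overline{\psi}$, apply \irref{atomicmonotonicity} to obtain $\infers[\Lmucalc]\ddiamond{\stdtr}{\overline{\varphi}}\limply\ddiamond{\stdtr}{\overline{\psi}}$, and then, since $\overline{\dbox{\stdtr}{\chi}}$ abbreviates $\ddiamond{\stdtr}{\overline{\chi}}$, a further propositional (contrapositive) step yields $\infers[\Lmucalc]\dbox{\stdtr}{\psi}\limply\dbox{\stdtr}{\varphi}$.

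For \irref{fixpointmonotonicity}, suppose $\infers[\Lmucalc]\psi\limply\varphi$. First rename bound \pvariables so that no free \pvariable of $\lfp{X}{\varphi}$ is bound in $\psi$ or in $\varphi$, which secures the relevant "free for" side conditions. By \rref{prop:substituteoverproof}, substituting $\lfp{X}{\varphi}$ for $X$ throughout the given derivation yields $\infers[\Lmucalc]\subst[\psi]{X}{\lfp{X}{\varphi}}\limply\subst[\varphi]{X}{\lfp{X}{\varphi}}$. The unfolding axiom \irref{muf} gives $\infers[\Lmucalc]\subst[\varphi]{X}{\lfp{X}{\varphi}}\limply\lfp{X}{\varphi}$, so by transitivity of $\limply$ (propositional reasoning) $\infers[\Lmucalc]\subst[\psi]{X}{\lfp{X}{\varphi}}\limply\lfp{X}{\varphi}$. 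Applying \irref{muI} with its "$\varphi$" instantiated to $\lfp{X}{\varphi}$ then derives $\infers[\Lmucalc]\lfp{X}{\psi}\limply\lfp{X}{\varphi}$ as desired.

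The rule \irref{fixpointnumonotonicity} is obtained by dualizing the previous case. From $\infers[\Lmucalc]\psi\limply\varphi$ derive $\infers[\Lmucalc]\overline{\varphi}\limply\overline{\psi}$ propositionally. Because $\gfp{X}{\varphi}$ and $\gfp{X}{\psi}$ are well-formed, their bodies $\varphi,\psi$ do not mention $\overline{X}$, hence $\overline{\varphi},\overline{\psi}$ do not mention $X$, so $\lfp{\overline{X}}{\overline{\varphi}}$ and $\lfp{\overline{X}}{\overline{\psi}}$ are legitimate \Lmu-formulas. Applying the already-derived \irref{fixpointmonotonicity} with variable $\overline{X}$ and formulas $\overline{\varphi},\overline{\psi}$ yields $\infers[\Lmucalc]\lfp{\overline{X}}{\overline{\varphi}}\limply\lfp{\overline{X}}{\overline{\psi}}$, i.e.\ $\infers[\Lmucalc]\overline{\gfp{X}{\varphi}}\limply\overline{\gfp{X}{\psi}}$, which is propositionally equivalent to $\infers[\Lmucalc]\gfp{X}{\psi}\limply\gfp{X}{\varphi}$. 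The only non-mechanical idea is the indirect derivation of $\mu$-monotonicity through \irref{muf} and \irref{muI}; the remaining work is routine bookkeeping of side conditions, all discharged by renaming bound \pvariables, which is permitted in derivations.
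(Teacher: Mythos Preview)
Your proof is correct and is precisely the standard derivation one would expect. The paper itself does not spell out a proof of this lemma (it is grouped with the other ``derived axioms'' whose proofs are declared straightforward), so your argument simply fills in the routine details the paper omits: contraposition plus \irref{atomicmonotonicity} for \irref{atomicboxmonotonicity}, the combination of \rref{prop:substituteoverproof}, \irref{muf}, and \irref{muI} for \irref{fixpointmonotonicity}, and dualization via the defined negation for \irref{fixpointnumonotonicity}.
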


\subsection{Local Reduction for Game Logic}

\begin{proposition} \label{prop:schematicgameremoval}{}
	Let $\Lc =(\Lcf, \Act)$ be a signature,
	$T$ a collection of \GL-formulas and
	$\Lambda \subseteq \Act$.
	Assume that for every atomic transition $\stdtr\in \Lambda$
	there is some \GL-game $\gamma$ without $\Lambda$-transitions, such that
	$T \infers[\GLcalc] \ddiamond{\stdtr}{\psi}
	\lbisubjunct\ddiamond{\gamma}{\psi}$
	for all \GL-formulas~$\psi$.
	Then the $\Lambda$-transition free fragment of \GL is $T$-provably expressive.
\end{proposition}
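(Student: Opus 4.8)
The plan is to imitate the proof of \rref{prop:relativecompletenessscheme}, but to carry out the elimination directly inside \GL, replacing each atomic $\Lambda$-transition by its associated game rather than by a formula. Fix, once and for all, for every $\stdtr\in\Lambda$ a \GL-game $\gamma_\stdtr$ without $\Lambda$-transitions such that $T\infers[\GLcalc]\ddiamond{\stdtr}{\psi}\lbisubjunct\ddiamond{\gamma_\stdtr}{\psi}$ for all \GL-formulas $\psi$. For a \GL-formula $\varphi$ and a \GL-game $\alpha$ let $\varphi'$ and $\alpha'$ denote the results of replacing every atomic occurrence of a transition $\stdtr\in\Lambda$ by $\gamma_\stdtr$ (recursing through tests). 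Since each $\gamma_\stdtr$ has no $\Lambda$-transitions and no \pvariables, $\varphi'$ is again a \GL-formula lying in the $\Lambda$-transition free fragment. I would then prove, by simultaneous induction on the structure of $\varphi$ and of $\alpha$, the two claims
\[
  T\infers[\GLcalc]\varphi\lbisubjunct\varphi'
  \qquad\text{and}\qquad
  T\infers[\GLcalc]\ddiamond{\alpha}{\psi}\lbisubjunct\ddiamond{\alpha'}{\psi}
\]
the second for every \GL-formula $\psi$; taking $\psi$ arbitrary in the second claim is what makes the induction go through.

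For games the cases are routine uses of the game axioms. If $\alpha=\stdtr\in\Lambda$ this is the defining property of $\gamma_\stdtr$, and if $\alpha=\stdtr\notin\Lambda$ it is trivial. For $\alpha=\ptest{\rho}$ use \irref{test} together with the induction hypothesis for $\rho$; for $\alpha=\pchoice{\alpha_1}{\alpha_2}$ use \irref{choice} and the two induction hypotheses. For $\alpha=\alpha_1;\alpha_2$ the induction hypothesis for $\alpha_2$ gives $T\infers[\GLcalc]\ddiamond{\alpha_2}{\psi}\lbisubjunct\ddiamond{\alpha_2'}{\psi}$, hence $T\infers[\GLcalc]\ddiamond{\alpha_1}{\ddiamond{\alpha_2}{\psi}}\lbisubjunct\ddiamond{\alpha_1}{\ddiamond{\alpha_2'}{\psi}}$ by \irref{monotonicity}; applying the induction hypothesis for $\alpha_1$ at the \GL-formula $\ddiamond{\alpha_2'}{\psi}$ and \irref{composition} on both sides yields the claim. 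For $\alpha=\pdual{\beta}$ use \irref{demon} and the induction hypothesis for $\beta$ at the \GL-formula $\lnot\psi$, with propositional reasoning. For $\alpha=\prepeat{\beta}$ the induction hypothesis for $\beta$ is exactly the premise of \rref{lem:replaceinloop} for $\beta,\beta'$ (its proof uses only \irref{fixpointaxiom}, \irref{fixpoint} and propositional reasoning, so it relativizes verbatim to the hypothesis set $T$), and \rref{lem:replaceinloop} delivers the equivalence. For formulas: literals are trivial, $\lnot\varphi$ and $\varphi_1\lor\varphi_2$ follow propositionally from the induction hypotheses, and for $\varphi=\ddiamond{\alpha}{\rho}$ the induction hypothesis for $\rho$ gives $T\infers[\GLcalc]\rho\lbisubjunct\rho'$, hence $T\infers[\GLcalc]\ddiamond{\alpha}{\rho}\lbisubjunct\ddiamond{\alpha}{\rho'}$ by \irref{monotonicity}, and then the game induction hypothesis for $\alpha$ at the \GL-formula $\rho'$ gives $T\infers[\GLcalc]\ddiamond{\alpha}{\rho'}\lbisubjunct\ddiamond{\alpha'}{\rho'}$; since $\ddiamond{\alpha'}{\rho'}=(\ddiamond{\alpha}{\rho})'$, this is what is needed. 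Specializing the formula claim to an arbitrary \GL-formula $\varphi$ then exhibits $\varphi'$ as a $\Lambda$-transition free \GL-formula with $T\infers[\GLcalc]\varphi\lbisubjunct\varphi'$, which is $T$-provable expressivity of that fragment.

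The one step that is not mere propositional bookkeeping is the repetition case: there one has to transport an equivalence of two games through the least-fixpoint semantics of $\prepeat{\cdot}$, and this is precisely the content of \rref{lem:replaceinloop} (ultimately the fixpoint rule \irref{fixpoint}). I expect the main obstacle to be only the careful phrasing of the simultaneous induction — in particular recording that the induction hypothesis for a game is needed at \emph{every} \GL-formula, since it gets invoked at the already-translated subformulas $\rho'$ and at $\lnot\psi$ that arise along the way — and checking that \rref{lem:replaceinloop} and the game axioms all remain available under the extra hypotheses $T$. One could alternatively route the argument through the $\mu$-calculus, translating $\varphi$ to $\varphi^\sharp$, verifying that $\{\psi^\sharp:\psi\in T\}$ satisfies the hypothesis of \rref{prop:relativecompletenessscheme} for $\Lambda$ (using \rref{lem:sharpdistribute} to rewrite $(\ddiamond{\gamma_\stdtr}{\varphi})^\sharp$ as a substitution instance), eliminating the $\Lambda$-modalities there, and translating back with ${}^\flat$; but this incurs the overhead of the control variable and of \rref{prop:rco}, so the direct argument above is preferable.
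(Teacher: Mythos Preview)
Your proposal is correct and follows essentially the same approach as the paper: a simultaneous structural induction on \GL-formulas and \GL-games, using the game axioms for each connective and \rref{lem:replaceinloop} for the repetition case. The only cosmetic difference is that you fix the replacement games $\gamma_\stdtr$ upfront and define an explicit translation $\varphi\mapsto\varphi'$, whereas the paper phrases the induction hypothesis existentially (``every game is provably equivalent to some $\Lambda$-free game''); the content is identical.
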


\begin{proof}
	Say that two games $\gamma_1$ and $\gamma_2$ are provably equivalent iff $T \infers[\GLcalc] \ddiamond{\gamma_1}{\psi}
	\lbisubjunct\ddiamond{\gamma_2}{\psi}$
	for all \GL-formulas $\psi$.

	Prove by induction on the definition of \GL-games and \GL-formulas that any \GL-formula
	and any \GL-game is provably equivalent to a formula or a game without $\Lambda$-transitions, respectively.
	Most cases are straightforward. 
	
	\begin{inparaitem}[\noindent- Case:]
	\item
	$\ddiamond{\gamma_1}{\psi_1}$.
	By induction hypothesis there are $\gamma_2$ and
	$\psi_2$ without $\Lambda$-transitions
	which are provably equivalent to $\gamma_1$ and $\psi_1$ respectively.
	By \irref{monotonicity} it follows that
	$T \infers[\GLcalc] \ddiamond{\gamma_1}{\psi_1}
	\lbisubjunct\ddiamond{\gamma_1}{\psi_2}$.
	Because $\gamma_1$ and $\gamma_2$ are provably equivalent
	this implies $T \infers[\GLcalc] \ddiamond{\gamma_1}{\psi_1}
	\lbisubjunct\ddiamond{\gamma_2}{\psi_2}$.

	\item For atomic transitions $\stdtr\in \Act$ there are two cases.
	If $\stdtr \in \Lambda$ there is a provably equivalent $\Lambda$-transition free game by assumption. 
	Otherwise there is nothing to do.

	\item For programs of the form $\prepeat{\gamma_1}$, the induction hypothesis for $\gamma_1$
	implies that there is a $\gamma_2$ without $\Lambda$ transitions such that
	$T \infers[\GLcalc] \ddiamond{\gamma_1}{\psi}
	\lbisubjunct\ddiamond{\gamma_2}{\psi}$
	for all $\Lambda$-free \GL-formulas $\psi$.
	Consider any $\Lambda$-free \GL-formula $\rho$.
	Then  \rref{lem:replaceinloop} implies
	that $\prepeat{\gamma_1}$ and $\prepeat{\gamma_2}$ are equivalent.

	\item For programs of the form $\pdual{\gamma_1}$ 	the induction hypothesis implies that $\gamma_1$ is equivalent to some $\Lambda$-transition-free game~$\gamma_2$. Propositional reasoning implies
	\(T \infers[\GLcalc] \lnot\ddiamond{\gamma_1}{\lnot\psi}
	\lbisubjunct\lnot\ddiamond{\gamma_2}{\lnot\psi}\)
	for all \GL-formulas $\psi$.
	Propositional reasoning and \irref{demon} imply
	\(T \infers[\GLcalc] \ddiamond{\pdual{\gamma_1}}{\psi}
	\lbisubjunct\ddiamond{\pdual{\gamma_2}}{\psi}\)
	for all \GL-formulas $\psi$.
	\end{inparaitem}
\end{proof}

\iflongversion
\else
\section{Proofs} \label{app:proofs}
\printProofs
\fi

\end{document}